\newcommand{\CC}{\mathbb{C}}
\newcommand{\NN}{\mathbb{N}}
\newcommand{\RR}{\mathbb{R}}
\newcommand{\ZZ}{\mathbb{Z}}
\newcommand{\supp}{\mathrm{supp}}
\newcommand{\Ran}{\mathrm{Ran}}
\newcommand{\Tr}{\mathrm{Tr}}
\newcommand{\Ker}{\mathrm{Ker}} 
\newcommand{\loc}{\mathrm{loc}}
\newcommand{\el}{\mathrm{el}}
\newcommand{\id}{\mathbbm{1}}
\newcommand{\ve}{\varepsilon}
\newcommand{\vp}{\varphi}
\newcommand{\vk}{\varkappa}
\newcommand{\vr}{\varrho}
\newcommand{\const}{\mathfrak{c}}              
\newcommand{\wt}[1]{\widetilde{#1}}
\newcommand{\mr}[1]{\mathring{#1}}
\newcommand{\wh}[1]{\widehat{#1}}
\newcommand{\ol}[1]{\overline{#1}} 
\newcommand{\nf}[2]{\nicefrac{#1}{#2}}      
\newcommand{\slim}[1]{\underset{#1}{\textrm{s-lim}}}
\newcommand{\twlim}[1]{{\textrm{w-lim}_{#1}}}
\newcommand{\tslim}[1]{{\textrm{s-lim}_{#1}}}
\newcommand{\essinf}[1]{\underset{#1}{\mathrm{ess}\,\mathrm{inf}}}
\newcommand{\tessinf}[1]{{\mathrm{ess}\,\mathrm{inf}_{#1}}}
\newcommand{\tesssup}[1]{{\mathrm{ess}\,\mathrm{sup}_{#1}}}
\newcommand{\V}[1]{\mathbf{#1}}
\newcommand{\vsigma}{\boldsymbol{\sigma}}
\newcommand{\valpha}{\boldsymbol{\alpha}}
\newcommand{\vbeta}{\boldsymbol{\beta}}
\newcommand{\vmu}{\boldsymbol{\mu}}
\newcommand{\vnu}{\boldsymbol{\nu}}
\newcommand{\veps}{\boldsymbol{\varepsilon}}
\newcommand{\cA}{\mathcal{A}}
\newcommand{\cB}{\mathcal{B}} 
\newcommand{\cD}{\mathcal{D}} 
\newcommand{\cF}{\mathcal{F}}
\newcommand{\cS}{\mathcal{S}}
\newcommand{\cM}{\mathcal{M}}       
\newcommand{\sB}{\mathscr{B}} 
\newcommand{\sC}{\mathscr{C}}
\newcommand{\sD}{\mathscr{D}} 
\newcommand{\sF}{\mathscr{F}}
\renewcommand{\le}{\leqslant} 
\renewcommand{\ge}{\geqslant}
\renewcommand{\imath}{i}
\renewcommand{\Im}{\mathrm{Im}\,}
\renewcommand{\Re}{\mathrm{Re}\,}
\newcommand{\Spec}{\sigma}
\newcommand{\spe}{\sigma_{\mathrm{ess}}}
\newcommand{\Hf}{H_{\mathrm{f}}}
\newcommand{\ee}{\mathfrak{e}}
\newcommand{\pmax}{\mathfrak{p}}
\newcommand{\emax}{\epsilon_0}
\newcommand{\dom}{\cD}
\newcommand{\HR}{\mathscr{H}}
\newtheorem{theorem}{Theorem}[section]
\newtheorem{lemma}[theorem]{Lemma}
\newtheorem{proposition}[theorem]{Proposition}
\newtheorem{corollary}[theorem]{Corollary}
\newtheorem{definition}[theorem]{Definition}
\theoremstyle{remark}
\newtheorem{hypothesis}[theorem]{Hypothesis}
\newtheorem{example}[theorem]{Example}
\numberwithin{equation}{section}
\newcommand{\C}{\mathbb{C}} 
\newcommand{\Z}{\mathbb{Z}} 
\newcommand{\R}{\mathbb{R}} 
\newcommand{\pf}{\V{p}_{\mathrm{f}}}
\renewcommand{\v}[1]{\mathbf{v}(#1)}
\newcommand{\p}{ \V{p}}
\newcommand{\x}{ \V{x}}
\newcommand{\q}{ \V{q}}
\newcommand{\G}{ \V{G}}
\renewcommand{\k}{ \V{k}}
\newcommand{\m}{\textrm{m}}
\newcommand{\Fock}{\sF}
\newcommand{\one}{\mathbbm{1}}
\newcommand{\form}{\mathcal{Q}}
\newcommand{\core}{\mathscr{D}}
\newcommand{\HamF}[1]{{H}_{#1}}
\newcommand{\Ham}[1]{\mathbbm{H}_{#1}}
\newcommand{\nr}{\mathrm{nr}}
\newcommand{\sr}{\mathrm{sr}}
\newcommand{\Bir}[3]{{K}_{#1}^{#3}(#2)}
\newcommand{\That}[1]{{T}_{#1}(\hat{\p})}
\newcommand{\Shat}{{S}(\hat{\p})}
\newcommand{\E}[1]{ {E}_{#1}}
\newcommand{\Egs}[1]{ \mathbbm{E}_{#1}}
\newcommand{\ww}{\mathbbm{w}}
\newcommand{\SP}[2]{\big\langle #1,#2 \big\rangle} 
\newcommand{\sps}[2]{\langle #1,#2 \rangle} 
\newcommand{\AAA}{\mathbbm{A}}
\title[Enhanced Binding]{On enhanced Binding and related effects in the 
non- and semi-relativistic Pauli-Fierz models}
\author{Martin K\"onenberg}
\author{Oliver Matte}
\begin{document}
\begin{abstract}
We prove enhanced binding and increase of
binding energies in the non- and semi-relativistic Pauli-Fierz models,
for arbitrary values of the fine-structure constant and the
ultra-violet cut-off, and
discuss the resulting improvement of exponential
localization of ground state eigenvectors.
For the semi-relativistic model we also discuss
the increase of the renormalized electron mass and
determine the linear leading order term in the asymptotics
of the self-energy, as the ultra-violet cut-off goes to infinity.
\end{abstract}
\maketitle
%
%
%

\section{Introduction}\label{sec-intro}

\noindent
A moving electron emits and absorbs electromagnetic radiation and is hence always
accompanied by a cloud of so-called soft photons.
Together with its photon cloud the electron behaves like
a particle whose mass is larger than the bare mass of the
electron. In an electrostatic potential it is thus easier
to bind an electron interacting with the quantized photon field
than the electron alone if the photon field were neglected.
It is well-known that these phenomena may be described mathematically
in the framework of non-relativistic (NR) quantum electrodynamics 
as follows:

First, we recall an effect called {\em enhanced binding}
due to the quantized radiation field.
We consider the non-relativistic electron Hamiltonian
\begin{equation}\label{carla0}
h_{\nr}(V):=-\tfrac{1}{2}\Delta_{\x}+V\,,
\end{equation}
acting in the Hilbert space $L^2(\RR^3)$. 
We suppose the potential $V=V_+-V_-$ to have
a short range negative part and recall
the following definition, which also applies to the
semi-relativistic operator $h_\sr(V)$ introduced in \eqref{def-hsr} below:

\begin{definition}\label{def-CCT}
Let $0\le V_+\in L_\loc^1(\R^3)$ and let $0\le V_-\not\equiv0$
satisfy $V_-\in L^{\nf{3}{2}}(\R^3)$, if $\sharp=\nr$, and
$V_-\in L^{\nf{3}{2}}\cap L^3(\R^3)$, if $\sharp=\sr$.
Define
$h_{\sharp}(V_\lambda)$ with $V_\lambda:=V_+-\lambda\,V_-$, $\lambda\ge0$, 
via a semi-bounded sum of quadratic forms.
We say $1$ is a {\em coupling constant threshold} for
$V_\lambda$, iff
\begin{enumerate}
\item[(1)] $\inf\spe(h_\sharp(V_+))=0$.
\item[(2)] $\inf\Spec(h_\sharp(V_\lambda))=0$, for $\lambda\in(0,1]$, and
$\inf\Spec(h_\sharp(V_\lambda))<0$, for $\lambda>1$.
\end{enumerate}
\end{definition}

\smallskip

\noindent
The existence of coupling constant thresholds
is a consequence of the variational principle and the
famous Cwikel-Lieb-Rosenbljum bound,
\begin{equation*}
\Tr\big[\id_{(-\infty,0)}(h_{\nr}(V_\lambda))\big]\le
\const\,\lambda^{\nf{3}{2}}\int_{\RR^3}V_-^{\nf{3}{2}}(\x)\,d^3\x\,.
\end{equation*} 
Here $\Tr$ denotes the trace and $\id_M(T)$ denotes
the spectral projection associated with some self-adjoint operator
$T$ and a Borel set $M\subset\RR$.
So the left hand side in the Cwikel-Lieb-Rosenbljum bound
counts all negative eigenvalues of $h_{\nr}(V_\lambda)$ including
multiplicities and, thus, has to be zero,
for sufficiently small $\lambda>0$ and $V_-\in L^{\nf{3}{2}}(\R^3)$.
Changing $V_-\not\equiv0$ 
by a multiplicative constant, if necessary, we may thus
achieve that $1$ is a coupling constant threshold.
Next, we take the interaction with the quantized radiation field
into account and consider the NR Pauli-Fierz operator
\begin{equation*}
\Ham{\nr}(V):=
\tfrac{1}{2}(\vsigma\cdot(-i\nabla_{\x}+\ee\,\AAA))^2+V+\Hf\,,
\end{equation*}
where $\vsigma$ is a vector containing the Pauli matrices,
$\AAA\equiv\AAA_\Lambda$ is the quantized vector potential
in the Coulomb gauge with ultra-violet (UV) cutoff at $\Lambda>0$, 
$\Hf$ is the radiation field energy, and
$\ee\in\RR$ models the fine-structure constant.
We set
\begin{equation*}
\Egs{\nr}(V):=\inf\Spec(\Ham{\nr}(V))\,,
\end{equation*}
for any reasonable potential $V$. 
Back in our example $V_-\in L^{\nf{3}{2}}(\R^3)$ we say
that {\em enhanced binding occurs}, iff $1$ is a coupling constant
threshold for $V_\lambda$ and
$\Egs{\nr}(V_\lambda)$ is an eigenvalue of $\Ham{\nr}(V_\lambda)$, 
for some $\lambda<1$.
Notice that, if $1$ is a coupling constant threshold, then
$\inf\Spec(h_{\nr}(V_\lambda))$ cannot be an eigenvalue
of $h_{\nr}(V_\lambda)$, for any 
$\lambda<1$.
In order to observe this effect it suffices to
show that
\begin{equation}\label{michael1}
\Egs{\nr}(0)-\Egs{\nr}(V_\lambda)>0\,,\qquad \lambda\ge1-\delta\,,
\end{equation}
for some $\delta>0$.
In fact, according to \cite{GLL2001} the inequality in
\eqref{michael1} is a sufficient condition
for $\Egs{\nr}(V_\lambda)$ to be an eigenvalue.

In the past decade many mathematical articles dealt with
enhanced binding in NR quantum electrodynamics.
In the earliest one \cite{HiroshimaSpohn2001b} the dipole
approximation to the NR Pauli-Fierz model without spin
is considered and
enhanced binding is established, for all sufficiently large values of $\ee$.
The previous works on the full NR Pauli-Fierz model
\cite{BLV2005,BV2004,CEH,CH2004,ChenVV,HVV2003}
(with various additional conditions on $V_-$ and
sometimes without spin)
provide {\em complete} proofs of enhanced binding
under the assumptions that $|\ee|>0$ and/or $\Lambda>0$ be sufficiently 
small.
 Saying this we should, however, point out the
article \cite{ChenVV} where a general criterion 
for the occurrence of enhanced binding is established, 
namely existence of an eigenvalue at the bottom
of the spectrum of the fiber Hamiltonian corresponding
to total momentum $\V{0}$ of the translation invariant
electron-photon system as well as equality of this eigenvalue and
the self-energy of the electron.
While the conclusion proved in \cite{ChenVV} is always applicable,
no matter how big $|\ee|>0$ and $\Lambda>0$ are,
the existence of that eigenvalue has been
shown so far only for sufficiently small values of $|\ee|>0$ and/or $\Lambda>0$
\cite{Chen2008,CFP2009}.

The criterion established in \cite{ChenVV} can also be applied to
prove the {\em increase of binding energy} due to the quantized
radiation field. Here one assumes that the electronic Hamiltonian
with potential $V$
{\em does} have discrete eigenvalues below zero, which is 
henceforth again assumed to be the minimum of its essential spectrum.
Let $|e_V|$ be the absolute value of the lowest (strictly negative)
eigenvalue of $h_{\nr}(V)$. By definition the
{\em binding energy is increased}, iff
\begin{equation}\label{carla1}
\Egs{\nr}(0)-\Egs{\nr}(V)>|e_V|\,.
\end{equation}
In \cite{BCVV2010,Hainzl2003} this effect is observed on the 
basis of asymptotic expansions as $\ee$ goes to zero.
The estimates obtained in these articles
as well as the bounds on the shift of coupling constant
thresholds in \cite{BLV2005,CEH} come along with detailed
quantitative information on the coefficients in the expansions.
While these quantitative aspects
are interesting in their own right, it has always been expected
that there should exist entirely non-perturbative proofs
without any smallness assumptions on $\ee$ or $\Lambda$
covering, in particular, the physical value
$\ee^2\approx1/137$ no matter how big $\Lambda$ is chosen.
The first main achievement of the present article fills this
gap left open in the previous work. Namely, we prove 
enhanced binding and increase of binding energies 
in the NR Pauli-Fierz model, 
{\em for all values of $|\ee|>0$ and $\Lambda>0$}.
To this end we employ variational arguments similar to
those in \cite{ChenVV} with the crucial
difference, however, that our trial functions are constructed by means of
{\em minimizing sequences} of fiber Hamiltonians (instead of eigenvectors)
whose detailed properties are unknown a priori.
The main new technical difficulty is to provide estimates 
holding uniformly along such minimizing sequences. 
These estimates show that actually no non-trivial a priori
knowledge on the mass shell and no deep results on the existence
of ground states of fiber Hamiltonians are required to give a
qualitative discussion of enhanced binding and increased binding energies.
In the discussion of enhanced binding
we are also able to relax earlier assumptions on the short range potential 
whose (non-vanishing) negative part need not satisfy any other
condition than being in $L^{\nf{3}{2}}(\R^3)$.
The mildest condition on the local singularities stated in the quoted
literature is $V_-\in L_\loc^4(\R^3)$ \cite{BLV2005}.
As a technical prerequisite some information
on the convergence of electronic
eigenfunctions to threshold energy states is needed here (see also \cite{ChenVV}).
Corresponding results are supplied by \cite{SorStock}
(also in the semi-relativistic case discussed below)
for bounded and integrable potentials.
We shall push these results 
a little forward to the broader class of potentials considered here.

The second purpose of our paper is to study the enhancement of binding
and the increase of binding energies in the {\em semi}-relativistic (SR)
Pauli-Fierz model, whose mathematical analysis has been 
initiated in \cite{FGS2001,MiyaoSpohn2009}.
This model is obtained by replacing the symbol
$\tfrac{1}{2}|\xi|^2$ of the kinetic energy in the NR model
by its relativistic analog, $\sqrt{|\xi|^2+1}-1$.
Thus, the electron Hamiltonian reads
\begin{equation}\label{def-hsr}
h_{\sr}(V):=\sqrt{1-\Delta_{\x}}-1+V\,,
\end{equation}
and the full SR Pauli-Fierz Hamiltonian for the interacting system is still
obtained via minimal coupling to the quantized radiation field,
\begin{equation}
\Ham{\sr}(V):=\sqrt{(\vsigma\cdot(-i\nabla_{\x}+\ee\,\AAA))^2+1}-1
+V+\Hf\,.
\end{equation}
In the SR case it follows from
\cite{Cwikel1977,Daubechies1983} that
\begin{equation*}
\Tr\big[\id_{(-\infty,0)}(h_{\sr}(V_\lambda))\big]\le
\const\,\int_{\RR^3}\big((\lambda \,V_-(\x))^{\nf{3}{2}}+
(\lambda \,V_-(\x))^3\big)\,d^3\x\,.
\end{equation*}
So, we again expect to observe an enhanced binding,
for non-zero $V_-$ belonging to
$L^{\nf{3}{2}}(\RR^3)$ (this condition is due to
$\sqrt{|\xi|^2+1}-1\sim\tfrac{1}{2}|\xi|^2$ for small $|\xi|$)
as well as to $L^3(\RR^3)$ (due to 
$\sqrt{|\xi|^2+1}-1\sim|\xi|$ for large $|\xi|$).
In fact, criteria for the existence of ground states for the
relevant class of short range potentials are given in \cite{KMS2012}.
A related problem is treated
in \cite{HiroshimaSasaki2012}
where $N$ relativistic spin-less particles in a short range potential
interacting via a linearly coupled bosonic field are considered.
If a scaling parameter in front of the creation and annihilation operators
is sufficiently large
(weak coupling limit)
and if the coupling constant in front of the interaction lies
in a certain bounded interval,
then the authors are able to show existence of
a unique ground state of the total Hamiltonian.
A non-strict inequality analogous to \eqref{carla1}
has been obtained in the SR case first in \cite{HiroshimaSasaki2010}.

As a byproduct of our analysis we verify that the renormalized
electron mass in the SR Pauli-Fierz model is always
strictly larger than the bare mass of the electron, as soon
as it may be defined (as the inverse second radial derivative
of the mass shell at zero).
For small $|\ee|>0$, depending on $\Lambda$, 
the {\em existence} of the renormalized
electron mass in the SR Pauli-Fierz model has been proved recently
by the present authors in \cite{KM2012a}.
In another application of our ideas 
we determine the linear leading order term in the
asymptotics of the ground state energy of the free
SR Pauli-Fierz operator, as $\Lambda$ goes to infinity.
Asymptotically linear upper and lower bounds
on the self-energy have been obtained
earlier in \cite{LiebLoss2000}.

The organization of this article is given as follows.
In Subsection~\ref{ssec-models} we introduce all operators
studied here more precisely.
All of our main results are stated precisely in Subsection~\ref{ssec-mainres}.
In Section~\ref{Abb4} we develop the crucial technical estimates
used to derive our main theorems. In
Sections~\ref{SecNR} and~\ref{SecSR} we apply them to the
NR and SR models, respectively.
In the appendix
we recall some Birman-Schwinger principles
and extend some results from \cite{SorStock} on the convergence of
eigenfunctions to threshold energy states.


\section{Models and main results}
\label{sec-results}


\subsection{Definition of the models}
\label{ssec-models}

\subsubsection{Bosonic Fock space}

\noindent
First, we fix some notation for operators acting in 
the state space of the photon
field, the bosonic Fock space, $\Fock_b$.
In what follows an italic $k$ always denotes a tuple
$k=(\V{k},\lambda)\in\RR^3\times\ZZ_2$ and $\cA$ is a
non-void open subset of $\RR^3$.
(In applications we encounter the examples $\cA=\RR^3$ or
$\cA=\{|\V{k}|>m\}$ with $m>0$.)
For every $n\in\NN$, let $\cS_n$ denote the orthogonal
projection in $L^2 ((\cA\times \Z_2)^n)$ onto the space of 
permutation symmetric functions. That is,
\begin{equation*}
(\mathcal{S}_n\,\psi^{(n)})(k_1,\ldots,k_n):= 
\frac{1}{n!} \sum_\pi \psi^{(n)}(k_{\pi(1)},\ldots,k_{\pi(n)})\,,
\end{equation*}
almost everywhere, for $\psi^{(n)}\in L^2 ((\cA\times \Z_2)^n)$,
the sum running over all permutations of $\{1,\ldots,n\}$.
Then the bosonic Fock space modeled over the one photon
Hilbert space $\mathfrak{h}:=L^2 (\cA\times \Z_2,dk)$,
$\int dk:= \sum_{\lambda\in \Z_2}\,\int_{\cA} d^3\V{k}$,
is the direct sum
$$
\Fock_b:= \bigoplus_{n=0}^\infty \Fock_b^{(n)},\quad\textrm{with}\;\;
\Fock_b^{(0)}:=\C\,,\;\; \Fock_b^{(n)}:=
\mathcal{S}_n L^2 ((\cA\times \Z_2)^n)\,,\;\;n\in\NN\,.
$$
The vector $\Omega:=\{1,0,0,\ldots\,\}\in\Fock_b$
is called the vacuum.
We denote by $\sC$ the dense subspace of all
$\{\psi^{(n)}\}_{n=0}^\infty\in\Fock_b$ such that only
finitely many $\psi^{(n)}$ are non-zero and each
$\psi^{(n)}$, $n\in\NN$, has a compact support.

For $f\in\mathfrak{h}$, let $a^\dagger(f)$ and $a(f)$
denote the standard bosonic creation and annihilation operators,
respectively. 
Setting 
$a^\dagger(f)^{(n)}\,\psi^{(n)}:= 
(n+1)^{\nf{1}{2}}\mathcal{S}_{n+1}\,(f\otimes \psi^{(n)})$,
for $\psi^{(n)}\in\Fock_b^{(n)}$, $n\in\NN_0$,
the creation operator is the
closed operator given by
$a^\dagger(f)\,\psi:= \{0,a^\dagger(f)^{(0)}\,\psi^{(0)}, 
a^\dagger(f)^{(1)}\,\psi^{(1)},\ldots\;\}$,
for all $\psi=\{ \psi^{(n)}\}_{n=0}^\infty\in \Fock_b$
in its maximal domain, and $a(f):= a^\dagger(f)^*$.
The following {canonical commutation relations (CCR)} are satisfied
on a suitable dense domain (e.g., on $\sC$),
\begin{equation*}
[a(f),\,a(g)]=0\,,\quad [a^\dagger(f),\,a^\dagger(g)]=0\,,
\quad [a(f),\,a^\dagger(g)]= \sps{f}{g}\,\id\,,
\end{equation*}
for $ f,g \in \mathfrak{h}$.
The second quantization of
a real-valued Borel function, $\vk$, on $\cA$,
is the self-adjoint operator in $\Fock_b$ given by
$d\Gamma(\vk)\!\!\upharpoonright_{\Fock^{(0)}}:=0$ and
\begin{equation*}
d\Gamma(\vk)\!\!\upharpoonright_{\Fock^{(n)}}\psi^{(n)}(k_1,\ldots,k_n)
:=(\vk(\V{k}_1)+\cdots+\vk(\V{k}_n))\,\psi^{(n)}(k_1,\ldots,k_n)\,,
\end{equation*}
for $n\in\NN$, $k_j=(\V{k}_j,\lambda_j)$. 
The multiplication operator
$d\Gamma(\vk)$ is defined on its maximal domain.
For instance, the {field energy operator}, $\Hf$, and the
{field momentum operator}, $\pf$, are defined by
\begin{equation*}
\Hf:= d\Gamma(\omega),\qquad 
\pf:=d\Gamma(\vmu):=\big(d\Gamma(\mu_1),d\Gamma(\mu_2),d\Gamma(\mu_3)\big)\,.
\end{equation*}
The physically relevant choices of $\omega$ and
$\vmu=(\mu_1,\mu_2,\mu_3)$
are given in Example~\ref{ex-Gphys} below.
Henceforth, we shall, however, only assume that
$\omega,\mu_1,\mu_2,\mu_3:\cA\to\RR$ are measurable
such that
\begin{equation}\label{hyp-omega}
0<\omega(\V{k})\le d\,(|\V{k}|+1)\,,\qquad
|\vmu(\V{k})|\le d\,\omega(\V{k})\,,\qquad\textrm{a.e.}\;\V{k}\in\cA\,,
\end{equation}
for some $d>0$.
The following standard estimates 
shall be useful later on,
\begin{align} \nonumber
\|a(f_1)\ldots a(f_n)\,\psi\|
&\le 
\|f_1\|_{\nf{1}{2}}\dots\|f_n\|_{\nf{1}{2}}\,\|\Hf^{\nf{n}{2}}\,\psi\|\,,
\\\label{StandardEstimate}
\|a^\dagger(f)\,\psi\|^2
&\le\|f\|^2_{\nf{1}{2}}\,\|\Hf^{\nf{1}{2}}\,\psi\|^2+\|f\|^2\,\|\psi\|^2,
\end{align}
with $\|f\|_{\nf{1}{2}}:=\| \omega^{-\nf{1}{2}}\,f\|$, for all 
$f,f_j\in\mathfrak{h}$ and $\psi\in\Fock_b$
such that the right hand sides are finite.
For every $f\in\mathfrak{h}$, the operator
$2^{-\nf{1}{2}}(a^\dagger(f)+a(f))$ is essentially self-adjoint on $\sC$.
We denote its self-adjoint extension by $\vp(f)$ and
write $\vp(\V{f}):=(\vp(f_1),\vp(f_1),\vp(f_3))$, for a triple of photon 
wave functions
$\V{f}=(f_1,\,f_2,\,f_3)\in \mathfrak{h}^3$.

\subsubsection{Fiber Hamiltonians}

\noindent
We next define Hamiltonians acting in
$\CC^2\otimes\Fock_b$
which is henceforth referred to as the fiber Hilbert space.
For reasons illustrated by Example~\ref{ex-Gmeps} below
we work with general conditions on the dispersion relation
$\omega$, the vector field $\vmu$, and the coupling function $\V{G}$: 

\begin{hypothesis}\label{hyp-G}
$\omega:\cA\to\RR$, $\vmu:\cA\to\RR^3$, and
$\V{G}:\cA\times\ZZ_2\to\RR^3$ are measurable and
satisfy \eqref{hyp-omega} and
\begin{equation}\label{eq-hyp-G}
\|\V{G}\|\ge g\,,\qquad
\int\omega^\ell\,|\V{G}|^2\le d^2,\quad
\ell\in\{-1,0,\ldots,13\}\,,
\end{equation}
respectively, for some $d\ge1$, $g>0$.
Moreover, 
\begin{equation}\label{hyp-r}
\|\id_{\{\omega\le\delta\}}\,\V{G}\|
+\|\id_{\{\omega\le\delta\}}\,\omega^{-\nf{1}{2}}\,\V{G}\|
\le r(\delta)\,,\quad\delta>0\,,
\end{equation}
for some non-negative function $r:(0,\infty)\to\RR$
with $r(\delta)\to0$, $\delta\downarrow0$.
\hfill$\Diamond$
\end{hypothesis}

\smallskip

\noindent
The somewhat mysterious bound $\ell\le13$ in \eqref{eq-hyp-G}
is due to the application of a certain higher order estimate
in Lemma~\ref{le-hoe} below. The function $r$ is introduced
in order to treat several choices of $\V{G}$ at the same time
and to quantify their infra-red behavior in a uniform fashion.

\begin{example}\label{ex-Gphys}
(i) Physically relevant choices of $(\omega,\vmu,\V{G})$
fulfilling Hypothesis~\ref{hyp-G}
are given by $\omega(\V{k}):=|\V{k}|$, $\vmu(\V{k}):=\V{k}$,
for $\V{k}\in\cA:=\RR^3$,
and
\begin{equation}\label{G-nr}
\V{G}=\vr(|\V{k}|)\,\veps(\mr{\V{k}},\lambda)\,,
\qquad\mr{\V{k}}:=\V{k}/|\V{k}|\,,
\end{equation}
for almost every $\V{k}$ and $\lambda\in\ZZ_2$,
where $\vr$ is some measurable real function with
$$
0<\int_0^\infty(t+t^{15})\,\vr^2(t)\,dt<\infty\,,
$$
and $\{\mr{\V{k}},\veps(\mr{\V{k}},0),\veps(\mr{\V{k}},1)\}$
is an oriented orthonormal basis of $\RR^3$, for a.e. $\mr{\V{k}}\in
S^2$.

\smallskip

\noindent(ii)
A common special case of (i) is given by
$\V{G}:=\V{G}^{\ee}_\Lambda$ with 
\begin{align}\label{def-Gphys}
\V{G}^{\ee}_\Lambda(\V{k},\lambda)
:=(2\pi)^{-\nf{3}{2}}\,\ee\,|\V{k}|^{-\nf{1}{2}}\,
\id_{|\V{k}|<\Lambda}\,\veps(\mr{\V{k}},\lambda)\,,
\end{align}
where $\ee\in\RR\setminus\{0\}$ and $\Lambda>0$ is an UV cutoff parameter.
\hfill$\Diamond$
\end{example}

\smallskip

\noindent
The main reason why we introduce the quantities $d$, $g$, and $r$
in the above hypothesis is the following example.
The modified versions of the physical choices of $\omega$, $\vmu$,
and $\V{G}$ defined there appear in proofs of the
existence of ground states; see, for instance, \cite{KMS2009a} and
the proof of Corollary~\ref{GroundStates} below.

\begin{example}\label{ex-Gmeps}
Let $\omega$, $\vmu$, and $\V{G}$ be as in Example~\ref{ex-Gphys}(ii).

\smallskip

\noindent
(i) Let $\V{G}_m:=\id_{\{\omega\ge m\}}\,\V{G}^{\ee}_\Lambda$ and
$m_0\in(0,\Lambda)$. Trivially, all
$(\omega,\vmu,\V{G}_m)$ with $0<m\le m_0$ fulfill
Hypothesis~\ref{hyp-G} with the same suitable choices of $d$, $g$, $r$.

\smallskip

\noindent(ii) Pick some $m>0$ and replace $\RR^3$ by
$\cA_m:=\{|\V{k}|> m\}$
in Example~\ref{ex-Gphys}.
Set $Q(\vnu):=\vnu+(-1/2,1/2]^3$ and
$Q_\ve(\vnu):=(\ve\,Q(\vnu))\cap\cA_m$,
for all $\ve>0$ and $\vnu\in\ZZ^3$.
Set
$\omega_\ve\!\!\upharpoonright_{Q_\ve(\vnu)}:=\inf_{Q_\ve(\vnu)}\omega$,
let
$\vmu_\ve\!\!\upharpoonright_{Q_\ve(\vnu)}$ be constantly equal to
some
arbitrary vector in $\ol{Q}_\ve(\vnu)$, and let
$\V{G}_\ve\!\!\upharpoonright_{Q_\ve(\vnu)}$
be constantly equal to the average of $\V{G}^{\ee}_\Lambda$ over $Q_\ve(\vnu)$.
Then we find ($m$-dependent) $\ve_0>0$,
$d$, $g$, and $r$ such that all $(\omega_\ve,\vmu_\ve,\V{G}_\ve)$,
$0<\ve<\ve_0$, fulfill Hypothesis~\ref{hyp-G} with these 
fixed choices of $d$, $g$, and $r$.
\hfill$\Diamond$
\end{example}

\smallskip

\noindent
Let $\vsigma := (\sigma_1,\,\sigma_2,\,\sigma_3)$ 
be the triple of Pauli spin matrices
%
%
%
%
and write $\vsigma\cdot\V{v}:=\sigma_1\,v_1+\sigma_2\,v_2+\sigma_3\,v_3$,
for a vector $\V{v}=(v_1,v_2,v_3)$ whose entries are complex numbers
or suitable operators.
For every $\V{p}\in\RR^3$, we then define
\begin{align*}
\V{v}(\V{p}):=\p-\pf+\vp(\V{G})\,,\qquad
w(\p):=\vsigma\cdot\V{v}(\V{p})\,.
\end{align*}
Applying Nelson's commutator theorem with test operator $\Hf+1$ we
verify that
$w(\p)$ is essentially self-adjoint on any core of $\Hf$.
We denote its self-adjoint closure starting from $\sC$
again by the same symbol and
define
\begin{equation*}
\hat{\tau}_\nr(\p):=\tfrac{1}{2}\,w(\p)^2\,,\qquad
\hat{\tau}_\sr(\p):=\sqrt{w(\p)^2+1}-1\,,
\end{equation*}
by means of the spectral calculus.
Next, we define fiber Hamiltonians
\begin{align*}
\HamF{\sharp}(\p)&:=\hat{\tau}_\sharp(\p)+\Hf\,,
\qquad\sharp\in\{\nr,\sr\}\,,
\end{align*}
as Friedrichs extensions starting from $\sC$.
For $\V{G}=\V{0}$, we denote them by
\begin{align*}
\HamF{\nr}^0(\p):=\tfrac{1}{2}\,(\p-\pf)^2+\Hf\,,\qquad
\HamF{\sr}^0(\p):=\sqrt{(\p-\pf)^2+1}-1+\Hf\,.
\end{align*}
It is known \cite[Lemma~2.2(ii)]{KM2012a} that
$\dom(\HamF{\sr}(\p))=\dom(\Hf)$ and, for all $\ve>0$,
\begin{align}\label{pia1}
\big\|(\HamF{\sr}(\p)-\HamF{\sr}^0(\p))\,\vp\big\|
&\le\ve\,\|\HamF{\sr}^0(\p)\,\vp\|+\const(\ve,d)\,\|\vp\|\,,
\quad\vp\in\dom(\Hf)\,.
\end{align}
In particular, $\sC$ is a core for $\HamF{\sr}(\p)$.
The mass shells are defined by
\begin{align*}
E_\sharp(\p):=\inf\Spec(\HamF{\sharp}(\p))\,,\qquad\p\in\RR^3.
\end{align*}

\subsubsection{Total Hamiltonians}\label{ssec-tot}

\noindent
Finally, we introduce the Hamiltonians generating the time evolution 
of the combined electron-photon system.
The total Hilbert space is
$$
\HR:=L^2(\RR^3,\CC^2)\otimes\Fock_b
=\int_{\RR^3}^\oplus\CC^2\otimes\Fock_b\,d^3\x\,.
$$
The quantized vector potential, $\AAA$, is the triple of operators
given by
$$
\AAA:=\int_{\RR^3}^\oplus\id_{\CC^2}\otimes\vp(e^{-i\vmu\cdot\V{x}}\,\V{G})\,
d^3\V{x}\,.
$$
We drop all trivial tensor factors in the notation in what follows
($-i\nabla_\x\equiv-i\nabla_\x\otimes\id$, $\Hf\equiv\id\otimes\Hf$, etc.) and 
define
$$
\ww:=\vsigma\cdot(-i\nabla_{\V{x}}+\ee\,\AAA)
$$
on the domain 
$$
\sD_1:=\dom(-\Delta_{\V{x}})\cap\dom(\Hf)
$$ 
to begin with. 
An application of Nelson's commutator theorem shows that
$\ww$ is essentially self-adjoint on any core of 
$-\Delta_{\V{x}}+\Hf$ and in particular on $\sD_1$ and 
on the algebraic tensor product
$$
\core:=C_0^\infty(\RR^3,\CC^2)\otimes\sC\,.
$$
Denoting the closure of $\ww$ again by the same symbol
we define
\begin{equation}\label{def-tau}
\tau_\nr:=\tfrac{1}{2}\,\ww^2,\qquad
\tau_\sr:=\sqrt{\ww^2+1}-1\,,
\end{equation}
by the spectral calculus.

Next, let $\sharp\in\{\nr,\sr\}$, recall the notation  
\eqref{carla0} and \eqref{def-hsr},
and assume that $V\in L^1_\loc(\RR^3,\RR)$ satisfies
\begin{equation}\label{def-cV}
c_V:=-\inf\big\{\sps{\psi}{h_\sharp(V)\,\psi}\,:\:\psi\in
C_0^\infty(\R^3)\,,\:
\|\psi\|=1\big\}<\infty\,.
\end{equation}
Then it is known that 
\begin{equation}\label{lb-Ham}
\Ham{\sharp}(V):=\tau_\sharp+V+\Hf\ge-c_V-\const\,d^2
\end{equation}
in the sense of quadratic forms on $\sD$, for some
universal constant $\const>0$.
In the NR case this is a well-known consequence of
diamagnetic inequalities (see, e.g., the review in \cite{KMS2012})
and relative bounds on the magnetic field
with respect to $\Hf$. In the SR case \eqref{lb-Ham} 
follows from \cite[Theorem~3.4]{KMS2012}.
Therefore, the quadratic forms of
$h_\sharp(V)$ and
$\Ham{\sharp}(V)$ equipped with the domains 
$C_0^\infty(\RR^3)$ and $\sD$, 
respectively, are closable. We denote the self-adjoint operators
representing the closures of these forms
again by the same symbols.
According to \cite{Hiroshima2000} (in the NR case)
and \cite{KMS2010} (in the SR case) the 
operators $\Ham{\sharp}(0)$ are essentially
self-adjoint on $\sD$.
The domain of $\Ham{\sr}(0)$ is
$\dom((-\Delta_\x)^{\nf{1}{2}}+\Hf)$ \cite{KMS2010}.
We set
$$
\Egs{\sharp}(V):=\inf\Spec(\Ham{\sharp}(V))\,.
$$

\subsubsection{Fiber decompositions}\label{sssec-fiber}

\noindent
Setting
\begin{equation}\label{def-Uq}
U_\q:=e^{-i(\q-\pf)\cdot\x}
\end{equation}
and denoting the (partial) Fourier transform with respect to
$\x$ by $\cF$,
we observe that
$U_\q^*\cF^*\int_{\RR^3}^\oplus w(\q+\p)\,d^3\p\,\cF\,U_\q$
is a self-adjoint extension of ${\ww}\!\!\upharpoonright_{\sD}$
and, hence, equal to ${\ww}$.
Using \cite[Theorem~XIII.85]{ReedSimonIV} in the second step
we conclude that
$$
f({\ww})=U_\q^*\cF^*f\Big(\int_{\RR^3}^\oplus w(\q+\p)\,d^3\p\Big)\,\cF\,U_\q
=U_\q^*\cF^*\int_{\RR^3}^\oplus f(w(\q+\p))\,d^3\p\;\cF\,U_\q\,,
$$
where $f$ is $x^2$ or a bounded Borel function.
Writing $f(x)=(x^2+1)^{\nf{1}{2}}$ as $(x^2+1)\,(x^2+1)^{-\nf{1}{2}}$
we see that this choice of $f$ is allowed, too.
But then it follows that
$U_\q^*\cF^*\int_{\RR^3}^\oplus \HamF{\sharp}(\q+\p)\,d^3\p\,\cF\,U_\q$ is a
self-adjoint
extension of the essentially self-adjoint operator 
$\Ham{\sharp}(0)\!\!\upharpoonright_{\sD}$, whence
we have the fiber decomposition
\begin{equation}\label{fib-dec}
\cF\,U_\q\,\Ham{\sharp}(0)\,U_\q^*\cF^*
=\int_{\RR^3}^\oplus \HamF{\sharp}(\q+\p)\,d^3\p\,.
\end{equation}

\begin{lemma}\label{le-mass-shell}
If $(\omega,\vmu,\V{G})$ fulfill Hypothesis~\ref{hyp-G},
$\sharp\in\{\nr,\sr\}$, and $\pmax>0$,
then there exist $\emax\equiv\emax(\pmax,d)$ and
$\pmax_*\equiv\pmax_*(d)>0$,
such that
\begin{equation}\label{joice}
\sup_{|\p|\le\pmax}\E{\sharp}\le\emax\,,\qquad
\Egs{\sharp}(0)=\essinf{|\p|\le\pmax_*}\,\E{\sharp}(\p)\,.
\end{equation}
\end{lemma}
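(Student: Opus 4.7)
My plan is to prove the two statements separately and link them via the fiber decomposition \eqref{fib-dec}. The uniform upper bound is obtained by a variational argument with the Fock vacuum; the essinf identity follows from \eqref{fib-dec} together with a coercivity of the mass shell at infinity, which I would prove by a spectral cut-off in $\Hf$ exploiting the commutation of $\pf$ and $\Hf$.

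For the upper bound, fix any unit $\chi\in\CC^2$ and use the trial state $\Psi:=\chi\otimes\Omega$. Since $\pf\,\Omega=0$ and $\vp(G_j)\,\Omega=2^{-\nf{1}{2}}\,G_j\in\Fock_b^{(1)}$, one has
\[
w(\p)\,\Psi \;=\; (\vsigma\cdot\p)\,\chi\otimes\Omega \;+\; 2^{-\nf{1}{2}}\sum_{j=1}^{3}(\sigma_j\chi)\otimes G_j,
\]
whose two summands lie in the orthogonal subspaces $\CC^2\otimes\Fock_b^{(0)}$ and $\CC^2\otimes\Fock_b^{(1)}$. Hence $\|w(\p)\,\Psi\|^2\le|\p|^2+\const\,\|\V{G}\|^2\le\pmax^2+\const\,d^2$ by \eqref{eq-hyp-G}, and since $\Hf\,\Omega=0$ the variational principle gives $\E{\nr}(\p)\le\tfrac{1}{2}\|w(\p)\,\Psi\|^2\le\emax(\pmax,d)$. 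In the semi-relativistic case, concavity of $t\mapsto\sqrt{t+1}-1$ on $[0,\infty)$ and Jensen's inequality yield $\sps{\Psi}{\hat{\tau}_\sr(\p)\,\Psi}\le(\|w(\p)\,\Psi\|^2+1)^{\nf{1}{2}}-1$, producing an analogous $\emax(\pmax,d)$.

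For the essinf identity, \eqref{fib-dec} gives $\Egs{\sharp}(0)=\essinf{\p\in\RR^3}\,\E{\sharp}(\p)\le\E{\sharp}(0)\le\emax(0,d)$ by the first part applied at $\pmax=0$, so it suffices to exhibit $\pmax_*(d)>0$ with $\E{\sharp}(\p)>\emax(0,d)$ for $|\p|>\pmax_*$. I would first handle the free fiber Hamiltonian $\HamF{\sharp}^0(\p)$. Because $\pf$ and $\Hf$ commute, the spectral projection $\mathbbm{1}_{\{\Hf\le K\}}$ commutes with $\HamF{\sharp}^0(\p)$, so the quadratic form splits cleanly across this decomposition. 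Using the operator inequality $(\p-\pf)^2\ge\tfrac{1}{2}|\p|^2-d^2\Hf^2$ (from the pointwise-in-Fock bound $|\pf|\le d\Hf$ implied by $|\vmu|\le d\omega$) and the bare bound $\Hf\ge K$ on $\{\Hf>K\}$, a case analysis with the optimal choice $K\propto|\p|/d$ yields the linear coercivity $\HamF{\sharp}^0(\p)\ge c|\p|-C$ for $|\p|$ large, with $c,C$ depending only on $d$. (In the NR case one first uses the elementary $\tfrac{1}{2} x^2\ge|x|-\tfrac{1}{2}$ to reduce to the SR form of the argument.) The interacting $\HamF{\sharp}(\p)$ differs from $\HamF{\sharp}^0(\p)$ by $\vp(\V{G})$-bilinear terms and a spin-magnetic commutator $\vsigma\cdot(\V{X}\times\V{X})$ with $\V{X}:=-\pf+\vp(\V{G})$, each relatively form-bounded by $\HamF{\sharp}^0(\p)+1$ through \eqref{StandardEstimate} and \eqref{eq-hyp-G} (in the SR case this is essentially \eqref{pia1}). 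Absorbing a fraction $\ve\in(0,1)$ of $\HamF{\sharp}^0(\p)$ preserves the linear coercivity, giving $\E{\sharp}(\p)\ge c'|\p|-C'(d)$, and picking $\pmax_*$ so that $c'\pmax_*-C'(d)>\emax(0,d)$ completes the proof.

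The main technical obstacle is precisely the $\Hf^2$-contribution from $|\pf|^2$ in the lower bound of $\HamF{\sharp}^0(\p)$: it is not absorbable into the single linear $+\Hf$-term of the Hamiltonian, and this is what forces the spectral-cut-off argument above. The commutation of $\pf$ and $\Hf$ makes the cut-off work cleanly, while the relative form-boundedness coming from \eqref{StandardEstimate}, \eqref{eq-hyp-G}, and \eqref{pia1} is what lets the interacting case inherit the coercivity of the free one.
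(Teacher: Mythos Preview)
Your argument is correct and follows essentially the same route as the paper: vacuum trial state for the upper bound, and for the second identity the fiber decomposition together with linear-in-$|\p|$ coercivity of $E_\sharp$, obtained by reducing to $\HamF{\sharp}^0(\p)$ via relative form bounds and exploiting $|\pf|\le d\,\Hf$. The paper's execution is a touch more direct---it gets $\HamF{\nr}(\p)\ge\const(d)^{-1}\HamF{\nr}^0(\p)-\const'(d)$ in one line from the operator Young inequality $(A+B)^2\ge(1-\delta)A^2+(1-\delta^{-1})B^2$ applied with $A=\vsigma\cdot(\p-\pf)$ and $B=\vsigma\cdot\vp(\V{G})$, and then minimizes $\tfrac12(\p-\q)^2+|\q|/d$ over $\q$ directly rather than using a spectral cut-off---but the content is the same. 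One small slip to patch: the step $\essinf_{\R^3}E_\sharp\le E_\sharp(\V{0})$ is not justified as written (a single point has measure zero and you do not invoke continuity of the mass shell); use instead your upper bound on a ball of positive radius, e.g.\ $\Egs{\sharp}(0)\le\emax(1,d)$, and then choose $\pmax_*$ so that $c'\pmax_*-C'(d)>\emax(1,d)$.
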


\begin{proof}
On account of \eqref{fib-dec} and \cite[Theorem~XIII.85]{ReedSimonIV}
we have $\Egs{\sharp}(0)=\tessinf{\RR^3}\,\E{\sharp}$.
In the SR case the bounds
$\tfrac{1}{2}|\p|-\const(d)\le
E_\sr(\p)\le\tfrac{3}{2}|\p|+\const(d)$,
follow in a straightforward fashion
from \eqref{pia1} and imply \eqref{joice}.
Concerning the NR case,
the upper bound $E_\nr(\p)\le\p^2/2+ d^2$
follows immediately by testing with vectors in the 
vacuum sector. Finally,
for $\delta\in(0,1)$ sufficiently
close to $1$, we have, as quadratic forms on $\sC$,
\begin{align*}
\HamF{\nr}(\p)&\ge(1-\delta)\,\tfrac{1}{2}(\p-\pf)^2+(1-\delta^{-1})\,\tfrac{1}{2}\,\vp(\V{G})^2+\Hf
\\
&\ge(1-\delta)\,\tfrac{1}{2}(\p-\pf)^2+\const\,(1-\delta^{-1})\,d^2\,(\Hf+1)+\Hf
\\
&\ge\const(d)^{-1}\,\HamF{\nr}^0(\p)-\const'(d)\,.
\end{align*}
But \eqref{hyp-omega} implies
$d\,\HamF{\nr}^0(\p)\ge\id_{|\p|<1}\p^2/2+\id_{|\p|\ge1}(|\p|-1/2)$
and we again obtain \eqref{joice}.
\end{proof}

\smallskip

\noindent
The bounds in the previous proof are by no means optimal. Moreover, one
can always show continuity of the mass shells and under
physically reasonable assumptions they are supposed to attain their
unique minimum at $\V{0}$. We gave a very simple, self-contained 
proof since more detailed information than in Lemma~\ref{le-mass-shell}
would not lead to any relevant simplifications in our proofs.


\subsection{Main results}
\label{ssec-mainres}

\noindent
The first main results of the present paper are summarized
in the following theorem.
As already stressed above, what is crucial here is that
Theorem~\ref{thm-binding} applies to the physical
Example~\ref{ex-Gphys}(ii) without any restrictions on
$|\ee|,\Lambda>0$.
In the SR case the implications of Theorem~\ref{thm-binding}
are new also when $|\ee|,\Lambda>0$ are small
in that example. Recall the definitions of
$h_\sharp(V)$, $\Ham{\sharp}(V)$, and $\Egs{\sharp}(V)$ in
Sub-subsection~\ref{ssec-tot}.

\begin{theorem}[{\bf Increased and enhanced binding}]
\label{thm-binding}
Assume that $(\omega,\vmu,\V{G})$ fulfill Hypothesis~\ref{hyp-G}
with parameters
$d$, $g$, and $r$.
In the case $\sharp=\nr$ assume in addition that
$\omega$, $\vmu$, and $\V{G}$ are as in Example~\ref{ex-Gphys}(i). 

\smallskip

\noindent
(a)
Let $V\in L^1_\loc(\RR^3,\RR)$ satisfy \eqref{def-cV}. If
\begin{equation}\label{knut}
\inf\spe(h_{\sharp}(V))=0\,,\qquad
e_{\sharp}(V):=\inf\Spec(h_{\sharp}(V))<0\,,
\end{equation}
then we find some $c\equiv c(d,g,r,V)>0$ such that
\begin{equation*}
\Egs{\sharp}(0)-\Egs{\sharp}(V)-e_{\sharp}(V)\ge c\,.
\end{equation*}
(b) 
If $\sharp=\nr$, let 
$0\le V_-\in L^{\nf{3}{2}}(\R^3)$, $V_-\not\equiv0$, 
and $0\le V_+=V_{+,1}+V_{+,2}$ with
$V_{+,1}\in L^{\nf{3}{2}}(\R^3)$ and $V_{+,2}\in L^\infty(\R^3)$.
If $\sharp=\sr$, let
$0\le V_-\in L^{\nf{3}{2}}\cap L^3(\R^3)$, $V_-\not\equiv0$, 
and $0\le V_+\in L^1_\loc(\R^3)$.
Set
$V_\mu:= V_+-\mu V_- $, $\mu>0$, and
assume that $1$ is a coupling constant threshold for $V_\mu$.
Then there exist $c,\delta>0$, both depending only
on $d$, $g$, $r$, and $V_\pm$, such that
\begin{equation*}
\Egs{\sharp}(0)-\Egs{\sharp}(V_\mu)\ge c\,,
\qquad\mu\ge1-\delta\,.
\end{equation*}
\end{theorem}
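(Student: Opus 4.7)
The plan is variational, using trial states built by dressing an electronic function with a minimizing sequence at the bottom of the mass shell. By Lemma~\ref{le-mass-shell} and \eqref{fib-dec}, $\Egs{\sharp}(0)=\essinf{|\p|\le\pmax_*}\E{\sharp}(\p)$, so pick $\p_n\in\{|\p|\le\pmax_*\}$ with $\E{\sharp}(\p_n)\to\Egs{\sharp}(0)$ and $L^{2}$-normalized $\Phi_n\in\CC^{2}\otimes\Fock_b$ with $\langle\Phi_n,\HamF{\sharp}(\p_n)\Phi_n\rangle-\E{\sharp}(\p_n)\to 0$. For an electronic factor $\psi\in L^{2}(\R^{3},\CC^{2})$ to be chosen below, the trial vector is
\begin{equation*}
\Psi_n(\x):=e^{i\p_n\cdot\x}\,\psi(\x)\otimes e^{-i\pf\cdot\x}\Phi_n.
\end{equation*}
A direct check gives $U_{\p_n}\Psi_n=\psi\otimes\Phi_n$, so \eqref{fib-dec} yields the master identity
\begin{equation*}
\langle\Psi_n,\Ham{\sharp}(V)\Psi_n\rangle=\int|(\cF\psi)(\p)|^{2}\,\langle\Phi_n,\HamF{\sharp}(\p_n+\p)\Phi_n\rangle\,d^{3}\p+\langle\psi,V\psi\rangle.
\end{equation*}
In the $\nr$ case the exact operator identity $\HamF{\nr}(\p_n+\p)=\HamF{\nr}(\p_n)+\p\cdot\v{\p_n}+\tfrac12|\p|^{2}$ reduces the first summand to an algebraic functional of the moments of $\psi$; the $\sr$-analogue is subtler and is the purpose of the operator-Taylor estimates in Section~\ref{Abb4}.

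For part~(a), pick $\psi=\psi_V$ to be the real $L^{2}$-normalized ground state of $h_\sharp(V)$. The linear term in $\p$ integrates to zero by reality of $\psi_V$, and the quadratic term reconstructs $\tfrac12\|\nabla\psi_V\|^{2}=\langle\psi_V,t_\sharp\psi_V\rangle$, so the above identity gives in the limit only the non-strict bound $\Egs{\sharp}(V)\le\Egs{\sharp}(0)+e_\sharp(V)$. To extract the strict gain I would modify the fiber factor by a first-order correction in $\p$,
\begin{equation*}
\Phi_n\mapsto\Phi_n-\p\cdot R_n^{\perp}\v{\p_n}\Phi_n,\qquad R_n^{\perp}:=(\HamF{\sharp}(\p_n)-E_n^{\mathrm{fib}}+\nu_n)^{-1}P_n^{\perp},
\end{equation*}
with $E_n^{\mathrm{fib}}=\langle\Phi_n,\HamF{\sharp}(\p_n)\Phi_n\rangle$, $P_n^{\perp}=\id-|\Phi_n\rangle\langle\Phi_n|$, and a regularization $\nu_n\downarrow 0$. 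A second-order expansion then produces an extra negative contribution $-\sum_{i,j}\langle\partial_i\psi_V,\partial_j\psi_V\rangle\,M_{ij}^{(n)}$ with $M_{ij}^{(n)}:=\langle v_i(\p_n)\Phi_n,R_n^{\perp}v_j(\p_n)\Phi_n\rangle$, so the theorem reduces to a uniform (in $n$) matrix lower bound $M^{(n)}\ge c_0\,\id$. Establishing this uniform positivity along an arbitrary minimizing sequence---where no a priori spectral gap of $\HamF{\sharp}(\p_n)$ above $E_n^{\mathrm{fib}}$ is available---is the main technical obstacle, and is where the higher-order and infrared bounds of Section~\ref{Abb4} together with the lower bound $\|\V{G}\|\ge g>0$ from Hypothesis~\ref{hyp-G} (which ensures $\v{\p_n}\Phi_n$ carries a non-trivial component orthogonal to $\Phi_n$) come into play.

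Part~(b) reduces to the same construction with $\psi_V$ replaced by an approximate threshold energy state. For $\mu>1$ the ground state $\psi_{V_\mu}$ of $h_\sharp(V_\mu)$ exists with eigenvalue $e_\sharp(V_\mu)<0$, so (a) applied to $V_\mu$ already gives the bound, though not uniformly as $\mu\downarrow 1$ since $e_\sharp(V_\mu)\to 0$. For $\mu\in[1-\delta,1]$ there is no bound state; here I would use that the family $\psi_{V_{1+\eta}}$, $\eta\downarrow 0$, converges to a non-trivial zero-energy state $\psi_0$ with $\|\nabla\psi_0\|>0$, as provided by the extension of \cite{SorStock} to the potential class at hand described in the appendix. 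Feeding an $L^{2}$-regularization $\psi^{(\eta)}$ of $\psi_0$ as the electronic input, one has $\langle\psi^{(\eta)},h_\sharp(V_{1-\delta})\psi^{(\eta)}\rangle\le C\delta$ by the threshold identity $h_\sharp(V_1)\psi_0=0$ and the linearity of $V_\mu$ in $\mu$, whereas the second-order gain is still $\ge c_0\,\|\nabla\psi^{(\eta)}\|^{2}$. Choosing $\delta>0$ small enough (depending only on $d,g,r,V_\pm$) that the latter dominates produces $\Egs{\sharp}(V_{1-\delta})\le\Egs{\sharp}(0)-c$, and monotonicity of $\mu\mapsto\Egs{\sharp}(V_\mu)$ propagates the bound to all $\mu\ge1-\delta$.
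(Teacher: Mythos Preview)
Your overall strategy---variational, dressed trial states built from minimizing sequences for the fiber Hamiltonians, combined with threshold states from the appendix for part~(b)---is the paper's strategy. But the specific mechanism by which you extract the strict gain is not the one the paper uses, and in the form you wrote it has a genuine gap.

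\medskip

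\textbf{The resolvent correction is not controlled.} Your key object is
$M^{(n)}_{ij}=\langle v_i(\p_n)\Phi_n,\,R_n^{\perp}\,v_j(\p_n)\Phi_n\rangle$ with
$R_n^{\perp}=(\HamF{\sharp}(\p_n)-E_n^{\mathrm{fib}}+\nu_n)^{-1}P_n^{\perp}$, and you need $M^{(n)}\ge c_0\,\id$ uniformly as $\nu_n\downarrow0$. Section~\ref{Abb4} does \emph{not} supply this: the results there are lower bounds on $\|w(\p)\psi\|^2$ (Lemma~\ref{le-uwe}), on $\langle\psi,F_\sharp(\p)\psi\rangle$ (Lemma~\ref{le-horst}), and on $\langle\psi,\hat\tau_\sharp(\hat\tau_\sharp+1)^{-1}\psi\rangle$ (Corollary~\ref{cor-ULB})---all functionals of $w(\p)$, none involving the reduced resolvent of $\HamF{\sharp}(\p)$. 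Since the fiber Hamiltonian has no spectral gap above its infimum (the infrared problem), $P_n^{\perp}(\HamF{\sharp}(\p_n)-E_n^{\mathrm{fib}})P_n^{\perp}$ is not bounded below away from zero, and there is no evident reason why $v_i(\p_n)\Phi_n$ should avoid the low-lying spectral subspace. Conversely, keeping $\nu_n$ bounded away from zero spoils the second-order cancellation that makes the resolvent ansatz work when $\Phi_n$ is only an approximate, not exact, ground state.

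\medskip

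\textbf{What the paper does instead (NR case).} The paper sidesteps the resolvent entirely. The trial vector has \emph{two electronic components},
\[
\hat{\psi}_{\mathrm{tr},j}(\p)=\hat{\psi}_1(\p)\,U_R^*\phi_j+\eta\,\hat{\psi}_2(\p)\,U_R^*\phi_j'\,,
\qquad
\phi_j':=v_{\nu_0}(\p_*)\phi_j/\|v_{\nu_0}(\p_*)\phi_j\|\,,
\]
with a real $\psi_1$, a purely imaginary $\psi_2$, and a free scalar $\eta$. The cross term in the expectation of $\p\cdot\v{\p_*}$ produces $2\eta\,\vbeta\cdot R\valpha$ with $\vbeta=\langle\psi_1,-i\nabla\psi_2\rangle$ and $|\valpha|\ge\liminf_j\|v_{\nu_0}(\p_*)\phi_j\|$; optimizing over $\eta<0$ yields a strictly negative contribution proportional to $|\vbeta|^2$. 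The only fiber input needed is $\liminf_j\|v_{\nu_0}(\p_*)\phi_j\|^2\ge c_1^2>0$, which \emph{is} exactly what Section~\ref{Abb4} delivers via \eqref{eq-D1}. No reduced resolvent, no gap, no $\nu_n$. The rotation $U_R$ aligns $\valpha$ with $\vbeta$, removing any need for $\p_*=\V{0}$ or radial symmetry of $V$.

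\medskip

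\textbf{SR case.} Here the mechanism is different again and even simpler: the trial function has a \emph{single} electronic component $\hat\psi_1(\p)\phi$, and the gain comes from the operator inequality of Lemma~\ref{LemSemRel}(a), an exploitation of the concavity of the square root. The resulting negative term is $-S(\p)\,\langle\phi,\hat\tau_\sr(\p_*)(\hat\tau_\sr(\p_*)+1)^{-1}\phi\rangle$, and Corollary~\ref{cor-ULB} bounds the fiber factor below. Your resolvent correction is not used at all.

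\medskip

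\textbf{Part (b).} Your outline is close in spirit, but two points matter. First, the eigenfunctions $\psi_\lambda$ are not $L^2$-normalized but \emph{Birman--Schwinger} normalized, $\|(\lambda V_-)^{1/2}\psi_\lambda\|=1$; this is what makes $\langle\psi_\lambda,V_-\psi_\lambda\rangle$ controlled as $\lambda\downarrow1$ even though $\|\psi_\lambda\|$ may diverge. Second, in the NR case the paper inserts a momentum cutoff $\chi_\vr=\id_{|\hat\p|\le\vr}$ in $\psi_2=-i\chi_\vr\partial_{x_\nu}\psi_\lambda/\|\cdot\|$, precisely because for $V_-\in L^{3/2}$ one cannot expect $\partial_{x_\nu}\psi_\lambda\in\form(h_\nr(V_\mu))$; Theorem~\ref{zero-resonance} ensures $\|\chi_\vr\partial_{x_\nu}\psi_\lambda\|$ stays bounded below after the correct normalization.
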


\begin{proof}
The proofs of this theorem in the cases $\sharp=\nr$ and
$\sharp=\sr$ are given in Sections~\ref{SecNR} and~\ref{SecSR},
respectively.
\end{proof}

\smallskip

\noindent
In the NR case we restrict ourselves to the situation of
Example~\ref{ex-Gphys}(i)
since we exploit the rotation invariance of the free Hamiltonian
in that case. 

\begin{example}\label{ex-KMS3}
In the situation of Example~\ref{ex-Gmeps}(i) (resp. (ii) with fixed $m>0$)
label all quantities defined by means of $(\omega,\vmu,\V{G}_m)$ 
(resp. $(\omega_\ve,\vmu_\ve,\V{G}_{\ve})$) by a superscript
$m$ (resp. $\ve$). 
If $V$ is as in Theorem~\ref{thm-binding}(a) or
$V=V_+-\mu\,V_-$, $\mu\ge1-\delta$, with $V_\pm$ and $\delta$
as in Theorem~\ref{thm-binding}(b), then we find
$m_0,\ve_0>0$ with
\begin{equation}\label{hyp-KMS3}\;
\inf_{0<m\le m_0}(\Egs{\sr}^{m}(0)-\Egs{\sr}^m(V))>0\,,\qquad
\inf_{0<\ve\le \ve_0}(\Egs{\sr}^{\ve}(0)-\Egs{\sr}^\ve(V))>0\,.
\qquad\Diamond
\end{equation}
\end{example}

\begin{corollary}[{\bf Existence of ground states}]
  \label{GroundStates}
Assume we are in the situation of Example~\ref{ex-Gphys}(ii)
with arbitrary values of $|\ee|,\Lambda>0$.

\smallskip

\noindent
(a) 
In the case $\sharp=\nr$, let 
$V\in L^1_\loc(\R^3,\R)$
be infinitesimally form-bounded with respect to $-\Delta_\x$. In the case
$\sharp=\sr$, let $V\in L^2_\loc(\R^3,\R)$ be relatively form-bounded
with respect to $(-\Delta_\x)^{\nf{1}{2}}$ with relative form
bound $<1$. If \eqref{knut} holds 
and $V(\V{x})\to0$, as $|\V{x}|\to\infty$, then
$\Ham{\nr}(V)$ (resp. $\Ham{\sr}(V)$) 
has normalizable ground state
eigenvectors. 

\smallskip

\noindent
(b)
If the potentials $V_\pm$ are as in
Theorem~\ref{thm-binding}(b)
with $V_\pm(\V{x})\to0$, $|\V{x}|\to\infty$, 
then we find some $\delta>0$ such that 
$\Ham{\nr}(V_\mu)$ (resp. $\Ham{\sr}(V_\mu)$) 
has normalizable ground state
eigenvectors,
for every $\mu\ge 1-\delta$.
\end{corollary}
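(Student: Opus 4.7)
The plan is to combine the strict binding inequalities furnished by Theorem \ref{thm-binding}, together with their infrared-regularized versions from Example \ref{ex-KMS3} in the semi-relativistic case, with pre-existing criteria for the existence of normalizable ground states of $\Ham{\sharp}(V)$, which take a strictly positive binding gap as their crucial spectral input.

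For part (a), Theorem \ref{thm-binding}(a) directly yields
\begin{equation*}
\Egs{\sharp}(V)\le\Egs{\sharp}(0)+e_{\sharp}(V)-c<\Egs{\sharp}(0)\,.
\end{equation*}
In the NR case, this strict binding condition, combined with the assumed infinitesimal form-boundedness of $V$ relative to $-\Delta_\x$ and the decay $V(\V{x})\to 0$, matches precisely the hypotheses of the Griesemer-Lieb-Loss ground state existence theorem of \cite{GLL2001}, whose invocation yields a normalizable ground state. In the SR case one instead invokes the analogous criterion from \cite{KMS2012}, which proceeds via an infrared regularization argument: for each $m>0$ one constructs a ground state $\Phi_m$ of the regularized operator $\Ham{\sr}^{m}(V)$ exploiting its spectral gap, and the uniform-in-$m$ binding estimate of Example \ref{ex-KMS3} then ensures that a weak subsequential limit of $(\Phi_m)_m$ as $m\downarrow 0$ is non-zero, hence a ground state of $\Ham{\sr}(V)$.

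For part (b), Theorem \ref{thm-binding}(b) supplies $\delta>0$ such that $\Egs{\sharp}(0)-\Egs{\sharp}(V_\mu)\ge c>0$ for every $\mu\ge 1-\delta$. The hypothesis $V_\pm(\V{x})\to 0$ at infinity implies $V_\mu(\V{x})\to 0$ at infinity, and the integrability assumption $V_-\in L^{\nf{3}{2}}(\R^3)$ in the NR case (respectively $V_-\in L^{\nf{3}{2}}\cap L^3(\R^3)$ in the SR case), together with the assumed decomposition of $V_+$ (and its decay at infinity in the SR setting), secures the form-boundedness required in part (a) via standard Sobolev embeddings. Thus the situation reduces to the argument of part (a) applied to $V=V_\mu$, and normalizable ground states of $\Ham{\sharp}(V_\mu)$ follow for all $\mu\ge 1-\delta$.

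The main obstacle is the bookkeeping needed to match the somewhat permissive potential classes in our hypotheses to the precise form-boundedness prerequisites of \cite{GLL2001} and \cite{KMS2012}; this is routine using Sobolev embeddings and the assumed decay of $V$ at infinity. The decisive new spectral input, the strict binding valid for arbitrary $|\ee|,\Lambda>0$, is already delivered by Theorem \ref{thm-binding}, so no further non-perturbative analysis is required at this stage.
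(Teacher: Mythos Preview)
Your proposal is correct and follows essentially the same route as the paper: in the NR case you invoke \cite{GLL2001} once the strict binding inequality is in hand, and in the SR case you invoke \cite{KMS2012} together with the uniform (in the infrared regularization parameter) binding supplied by Example~\ref{ex-KMS3}. Your remark that Part~(b) reduces to feeding the binding gap from Theorem~\ref{thm-binding}(b) into the same existence criteria is exactly what the paper does as well; your additional verification of the form-boundedness hypotheses via Sobolev embeddings is a detail the paper leaves implicit.
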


\begin{proof}
In the NR the case both (a) and (b) follow from \cite[Theorem~2.1]{GLL2001}
according to which
binding, i.e. the inequality $\Egs{\nr}(V)<\Egs{\nr}(0)$, 
implies the existence of ground states.
In the SR case (a) and (b) can be proved by straightforward
modifications of the arguments in \cite{KMS2009a} where
the Coulomb potential is treated.
The details are worked out in \cite{KMS2012}.
In fact, according to \eqref{hyp-KMS3} 
the uniform binding conditions
postulated in Hypothesis~6.6 of \cite{KMS2012}
are fulfilled and, hence, (a) and (b)
are special cases of \cite[Theorem~8.1]{KMS2012}.
\end{proof}

\smallskip

\noindent
It is possible to prove
the existence of ground states of $\Ham{\sr}(V_{\gamma})$
with $V_{\gamma}(\x):=-\gamma/|\x|$ also 
in the critical case $\gamma=\nf{2}{\pi}$ 
where the relative form bound of $V_{\nf{2}{\pi}}$
with respect to $(-\Delta_\x)^{\nf{1}{2}}$ is equal to one 
\cite{KM2011}.
(For $\gamma>\nf{2}{\pi}$, the quadratic form of $\Ham{\sr}(V_{\gamma})$
is unbounded below \cite{KMS2009a}.)
Note that Theorem~\ref{thm-binding}(a) applies to $\Ham{\sr}(V_{\nf{2}{\pi}})$.
It turns out that the decay rate of the spatial 
$L^2$-exponential localization of 
ground state eigenvectors of $\Ham{\sharp}(V)$ is strictly
bigger
than the decay rate of the electronic eigenfunctions (if any).

\begin{corollary}[{\bf Increase of
    localization}]\label{IncreaseLocalization}
Assume we are in the situation of Example~\ref{ex-Gphys}(ii)
with arbitrary $|\ee|,\Lambda>0$.
Let $V$ satisfy the conditions of
Corollary~\ref{GroundStates}(a) or (b)
or suppose $V(\x)=-(\nf{2}{\pi})/|\x|$.
Let $\Phi_\sharp$ be a ground state eigenvector of $\Ham{\sharp}(V)$.
Then, in the NR case, 
\begin{equation}\label{exp-NR}
\forall\,\beta>0\::\quad\beta^2/2< \Egs{\nr}(0)-\Egs{\nr}(V)\;\;
\Rightarrow\;\;
e^{\beta\, |\x|}\Phi_{\nr}\in\HR\,,
\end{equation}
and in the SR case
\begin{equation}\label{exp-SR}
\forall\,\beta\in(0,1)\::\quad
1-(1-{\beta}^2)^{\nf{1}{2}}< \Egs{\sr}(0)-\Egs{\sr}(V)\;\;
\Rightarrow\;\;
e^{{\beta}\, |\V{x}|}\,\Phi_{\sr}\in\HR\,.
\end{equation}
\end{corollary}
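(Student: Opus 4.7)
The plan is to run a Combes--Thomas/Agmon-type exponential weight argument in the QFT setting, exploiting the fact that $\Sigma_\sharp:=\Egs{\sharp}(0)-\Egs{\sharp}(V)>0$ is the spectral gap above the ground state eigenvalue. First I would invoke an HVZ-type theorem for the Pauli--Fierz operator, implicit in Corollary~\ref{GroundStates} and spelled out in~\cite{GLL2001,KMS2012}, to get $\inf\spe(\Ham{\sharp}(V))=\Egs{\sharp}(0)$ whenever $V(\x)\to0$; the hypothesis in \eqref{exp-NR}--\eqref{exp-SR} then reads $\eta_\sharp(\beta)<\Sigma_\sharp$, where $\eta_\nr(\beta):=\beta^2/2$ and $\eta_\sr(\beta):=1-\sqrt{1-\beta^2}$.

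Next I would regularize the weight: for $\alpha\in(0,1]$ set $F_\alpha(\x):=\beta\sqrt{\x^2+1}/(1+\alpha\sqrt{\x^2+1})$, a smooth bounded function with $|\nabla F_\alpha|\le\beta$ and $F_\alpha\uparrow\beta|\x|$ as $\alpha\downarrow0$. Since $V$, $\Hf$ and $\AAA$ are fiber-diagonal in $\x$, all three commute with the multiplication operators $e^{\pm F_\alpha}$, whereas $e^{F_\alpha}\ww\,e^{-F_\alpha}=\ww+i\,\vsigma\cdot\nabla F_\alpha$. Hence $\Psi_\alpha:=e^{F_\alpha}\Phi_\sharp\in\HR$ and the eigenvalue equation becomes $(e^{F_\alpha}\tau_\sharp e^{-F_\alpha}+V+\Hf-\Egs{\sharp}(V))\Psi_\alpha=0$. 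In the NR case one computes $e^{F_\alpha}\tau_\nr e^{-F_\alpha}=\tau_\nr+\tfrac{i}{2}\{\ww,\vsigma\cdot\nabla F_\alpha\}-\tfrac{1}{2}|\nabla F_\alpha|^2$, whose cross term is anti-self-adjoint and drops out when taking the real part of the inner product with $\Psi_\alpha$; this yields the key inequality $\sps{\Psi_\alpha}{(\Ham{\nr}(V)-\Egs{\nr}(V))\Psi_\alpha}=\tfrac{1}{2}\sps{\Psi_\alpha}{|\nabla F_\alpha|^2\Psi_\alpha}\le\eta_\nr(\beta)\|\Psi_\alpha\|^2$.

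For the SR case I would use the Dunford-calculus identity $e^{F_\alpha}\sqrt{\ww^2+1}\,e^{-F_\alpha}=\sqrt{T_\alpha}$ with $T_\alpha:=(\ww+i\vsigma\cdot\nabla F_\alpha)^2+1=\ww^2+1-|\nabla F_\alpha|^2+i\{\ww,\vsigma\cdot\nabla F_\alpha\}$, together with the integral representation $\sqrt{T_\alpha}=\pi^{-1}\int_0^\infty T_\alpha(T_\alpha+s)^{-1}s^{-\nf{1}{2}}\,ds$ to peel off the self-adjoint shift $-|\nabla F_\alpha|^2$ and absorb the anti-self-adjoint perturbation $i\{\ww,\vsigma\cdot\nabla F_\alpha\}$ at the resolvent level. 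Combined with the elementary operator-monotonicity inequality $\sqrt{\ww^2+1-\beta^2}\ge\sqrt{\ww^2+1}-(1-\sqrt{1-\beta^2})$, this should give $\Re\sps{\psi}{(\sqrt{T_\alpha}-1)\psi}\ge\sps{\psi}{\tau_\sr\psi}-\eta_\sr(\beta)\|\psi\|^2$, and hence the analogous key inequality $\sps{\Psi_\alpha}{(\Ham{\sr}(V)-\Egs{\sr}(V))\Psi_\alpha}\le\eta_\sr(\beta)\|\Psi_\alpha\|^2$.

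To close the loop I would localize via an IMS partition $\chi^2+(\chi^c)^2=1$ with $\chi(\x)=\chi_0(|\x|/R)$ supported in $\{|\x|\le 2R\}$ and equal to one on $\{|\x|\le R\}$. The IMS formula $\Ham{\sharp}(V)=\chi\Ham{\sharp}(V)\chi+\chi^c\Ham{\sharp}(V)\chi^c-\vr_R$ gives $\|\vr_R\|_\infty=O(R^{-2})$ in the NR case and a vanishing remainder of the same order in the SR case (via the same integral representation of the square root). Using $V(\x)\to0$ to ensure $|V|<\epsilon$ on $\supp\chi^c$, one has $\chi^c\Ham{\sharp}(V)\chi^c\ge(\Egs{\sharp}(0)-\epsilon)(\chi^c)^2$ and $\chi\Ham{\sharp}(V)\chi\ge\Egs{\sharp}(V)\chi^2$. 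Combining with the key inequality and the trivial bound $\|\chi\Psi_\alpha\|\le e^{2\beta R}\|\Phi_\sharp\|$, and choosing $\epsilon$ and $R^{-2}$ small enough that $\eta_\sharp(\beta)+\epsilon+\|\vr_R\|_\infty<\Sigma_\sharp$ (possible by the standing assumption on $\beta$), one rearranges to the $\alpha$-uniform bound $\|\Psi_\alpha\|^2\le c(\beta,R,\Phi_\sharp)$, after which monotone convergence in $\alpha$ yields $\|e^{\beta|\x|}\Phi_\sharp\|^2<\infty$. The main obstacle is the SR side, where bounding $\Re\sqrt{T_\alpha}$ from below and controlling the IMS commutator $[\chi,\sqrt{\ww^2+1}]$ both demand careful resolvent bookkeeping because $e^{\pm F_\alpha}$ and $\chi$ fail to commute with $\ww$; the NR part is textbook.
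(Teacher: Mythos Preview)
The paper does not give a self-contained proof of this corollary at all: it simply cites \cite{GLL2001,Griesemer2004} for \eqref{exp-NR} and \cite{KM2011,KMS2012,MatteStockmeyer2009a} for \eqref{exp-SR}. What you have written is a faithful outline of the Agmon/Combes--Thomas argument that those references actually carry out, so in that sense your approach \emph{is} the paper's approach, only made explicit. The NR part is exactly the argument of \cite{GLL2001,Griesemer2004}; the SR part---conjugation by the bounded weight, integral representation of the square root, splitting off the self-adjoint shift $-|\nabla F_\alpha|^2$, and IMS localization---is the scheme of \cite{MatteStockmeyer2009a}, with the sharp constant $1-\sqrt{1-\beta^2}$ coming from the observation in \cite{KM2011} that you correctly attribute to the inequality $\sqrt{\ww^2+1-\beta^2}\ge\sqrt{\ww^2+1}-(1-\sqrt{1-\beta^2})$.

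One point to correct: your opening appeal to an HVZ theorem giving $\inf\spe(\Ham{\sharp}(V))=\Egs{\sharp}(0)$ is false as stated. With massless photons the essential spectrum is $[\Egs{\sharp}(V),\infty)$ and the ground state sits at its bottom, not below it. Fortunately you never use this: your IMS step only needs $\chi^c\Ham{\sharp}(V)\chi^c\ge(\Egs{\sharp}(0)-\epsilon)(\chi^c)^2$, and that follows directly from $|V|<\epsilon$ on $\supp\chi^c$ together with $\Ham{\sharp}(0)\ge\Egs{\sharp}(0)$, with no spectral-gap input. So drop the HVZ sentence and the argument stands. The genuinely hard work you flag in the last paragraph---controlling $\Re\sqrt{T_\alpha}$ from below with the non-self-adjoint perturbation, and the IMS commutator with $\sqrt{\ww^2+1}$---is precisely the content of the cited SR papers and is not something one should expect to compress into a corollary proof; the paper's authors did not try either.
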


\begin{proof}
The bound \eqref{exp-NR} follows from \cite{GLL2001}; see also \cite{Griesemer2004}.
The bound \eqref{exp-SR} is a special case of
\cite[Theorem~5.1]{KMS2012}
where the decay rates found in an earlier 
localization estimate \cite{MatteStockmeyer2009a} are improved.
The crucial observation that led to the decay rates in \eqref{exp-SR} has been made
in \cite{KM2011}. 
(Only the Coulomb potential is treated explicitly in
\cite{KM2011,MatteStockmeyer2009a}; extensions to other potentials
are,
however, straightforward.)
\end{proof}

\smallskip

\noindent
What is crucial about the previous corollary is the range
of decay rates $\beta$ allowed for in \eqref{exp-NR} and
\eqref{exp-SR}.
For instance, suppose $V$ satisfies the conditions of
Theorem~\ref{thm-binding}(a)
in the SR case. Suppose further that $|e_{\sr}(V)|<1$,
which will hold true for weak potentials $V$ and is
known to be true in the Coulomb case, $V=V_\gamma$,
as long as the model is stable, i.e.
$\gamma\in(0,\nf{2}{\pi}]$; see \cite{RRSMS}.
Then the exponential decay rate for ground state eigenfunctions of
$h_\sr(V)$ is equal to $\beta_\el:=(1-(1+e_{\sr}(V))^2)^{\nf{1}{2}}$;
see \cite{CMS1990}.
Since $\Egs{\sr}(0)-\Egs{\sr}(V)>|e_{\sr}(V)|$ 
by Theorem~\ref{thm-binding}, there exist
$\beta\in(\beta_\el,1)$ such that $e^{\beta|\V{x}|}\Phi_\sr\in\HR$.

In the SR case the bounds of Theorem~\ref{thm-binding}
are consequences of certain bounds on
the {\em fiber} Hamiltonians
giving rise to some further interesting results.
In order to state the first one we introduce the functions
\begin{equation}\label{def-S}
T_\sr(\p):=(\p^2+1)^{\nf{1}{2}}-1\,,\quad
S(\p):= 1-(\p^2+1)^{-\nf{1}{2}},\qquad\p\in\R^3.
\end{equation}

\begin{theorem}[{\bf Upper bound on the mass shell}]\label{thm-ub}
Fix $d$, $g$, and $r$ in Hypothesis~\ref{hyp-G} and let
$\p_*\in\RR^3$.
Then we find
some $\gamma(\p_*)\equiv\gamma(\p_*,d,g,r)\in(0,1)$ such that, for all 
$(\omega,\vmu,\V{G})$ fulfilling Hypothesis~\ref{hyp-G} and $\p\in\RR^3$,
\begin{equation}\label{Diff4b}
\tfrac{1}{2}\big(\E{\sr}(\p_*+\p)+\E{\sr}(\p_*-\p)\big)
\le T_\sr(\p)+\E{\sr}(\p_*)-\gamma(\p_*)\,S(\p)\,.
\end{equation}
\end{theorem}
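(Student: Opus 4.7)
The proof will rely on a sharp operator inequality for $\hat\tau_\sr$, combined with a trial-state argument using minimizing sequences for $\HamF{\sr}(\p_*)$.

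\textbf{Step 1: Algebraic setup and the key operator inequality.} From $w(\p_*\pm\p)=w(\p_*)\pm\vsigma\cdot\p$, together with $\{\vsigma\cdot\V{A},\vsigma\cdot\V{B}\}=2\V{A}\cdot\V{B}$ (valid because the components of $\V{v}(\p_*)$ commute with the numerical vector $\p$, so the cross-product piece cancels), one gets
\begin{equation*}
w(\p_*\pm\p)^2+\id=M+\p^2\pm 2N,\qquad M:=w(\p_*)^2+\id\ge\id,\quad N:=\V{v}(\p_*)\cdot\p.
\end{equation*}
Since the operators $M+\p^2\pm 2N$ are strictly positive, operator concavity of $\sqrt{\,\cdot\,}$ yields $\tfrac12[\sqrt{M+\p^2+2N}+\sqrt{M+\p^2-2N}]\le\sqrt{M+\p^2}$. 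Because $M$ and $\p^2\id$ commute, functional calculus applied to the scalar identity (for $m\ge1,\,c\ge0$)
\begin{equation*}
\sqrt m+\sqrt{c+1}-1-\sqrt{m+c}=\frac{2(\sqrt m-1)(\sqrt{c+1}-1)}{\sqrt m+\sqrt{c+1}-1+\sqrt{m+c}}\ge 0
\end{equation*}
with $c=\p^2$ gives the operator identity $\sqrt{M+\p^2}=\sqrt M+T_\sr(\p)\,\id-D(M,\p)$, where
\begin{equation*}
D(M,\p):=\frac{2(\sqrt M-\id)\,T_\sr(\p)}{\sqrt M+T_\sr(\p)\,\id+\sqrt{M+\p^2}}\ge 0.
\end{equation*}
Subtracting $2\,\id$ and adding $\Hf$ to both sides combines these into
\begin{equation*}
\tfrac{1}{2}\big[\HamF{\sr}(\p_*+\p)+\HamF{\sr}(\p_*-\p)\big]\le \HamF{\sr}(\p_*)+T_\sr(\p)\,\id-D(M,\p),
\end{equation*}
an operator inequality on a common form domain (for instance $\core$).

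\textbf{Step 2: Trial-state reduction.} I would then take any normalized minimizing sequence $\{\phi_n\}$ for $\HamF{\sr}(\p_*)$ and test the above against $\phi_n$. By the variational principle $\E{\sr}(\p_*\pm\p)\le\langle\phi_n,\HamF{\sr}(\p_*\pm\p)\phi_n\rangle$; letting $n\to\infty$,
\begin{equation*}
\tfrac{1}{2}\big[\E{\sr}(\p_*+\p)+\E{\sr}(\p_*-\p)\big]\le \E{\sr}(\p_*)+T_\sr(\p)-\liminf_{n\to\infty}\langle\phi_n,D(M,\p)\phi_n\rangle.
\end{equation*}
The theorem thus reduces to proving the uniform-in-$\p$ lower bound $\liminf_n\langle\phi_n,D(M,\p)\phi_n\rangle\ge\gamma(\p_*)\,S(\p)$ for some $\gamma(\p_*)\in(0,1)$ depending only on $\p_*,d,g,r$.

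\textbf{Step 3: Main obstacle.} On the spectral subspace $\{M\le R^2\}$, bounding the denominator in $D(M,\p)$ above by $2(R+T_\sr(\p))\,\id$ and using $T_\sr(\p)=S(\p)\sqrt{\p^2+1}$ together with $\sqrt{\p^2+1}\ge 1$ and $T_\sr\ge S$, one verifies the scalar inequality $T_\sr/(R+T_\sr)\ge S/(R+1)$ and so
\begin{equation*}
D(M,\p)\,\id_{\{M\le R^2\}}\ge\frac{S(\p)}{R+1}\,\hat\tau_\sr(\p_*)\,\id_{\{M\le R^2\}}.
\end{equation*}
The remaining work is twofold: (i) choose $R=R(\p_*,d,g,r)$ so that $\|\id_{\{M>R^2\}}\phi_n\|\to 0$, which follows from uniform moment bounds on $\langle\phi_n,w(\p_*)^{2k}\phi_n\rangle$ supplied by the higher-order estimates of Section~\ref{Abb4} (and by the uniform energy control $\langle\phi_n,\HamF{\sr}(\p_*)\phi_n\rangle\to\E{\sr}(\p_*)$); and (ii) establish a strictly positive uniform lower bound $\liminf_n\langle\phi_n,\hat\tau_\sr(\p_*)\phi_n\rangle\ge c(\p_*,d,g,r)>0$. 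The latter is the genuinely delicate step, as it rules out the pathological scenario where $\phi_n$ concentrates on $\ker w(\p_*)$. It uses essentially that the zero-point contribution of $\vp(\V{G})^2$ to $w(\p_*)^2$ has vacuum expectation at least $\tfrac12\|\V{G}\|^2\ge g^2/2>0$, so no minimizing sequence with bounded $\Hf$-content can have vanishing $\hat\tau_\sr(\p_*)$-expectation; quantifying this along an arbitrary minimizing sequence (without invoking existence of a ground state of $\HamF{\sr}(\p_*)$) is precisely the role of the uniform estimates developed in Section~\ref{Abb4}. Once (i) and (ii) are in hand, setting $\gamma(\p_*):=c(\p_*,d,g,r)/(2(R+1))$ and absorbing the small truncation error yields \eqref{Diff4b}.
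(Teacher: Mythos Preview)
Your argument is correct, and the route is genuinely different from the paper's in Steps~1--2. The paper does not use operator concavity of the square root; instead it compares the resolvents $R_1(\eta)=(w(\p_*)^2+\p^2+1+\eta)^{-1}$ and $R_2(\eta)=(w(\p_*+\p)^2+1+\eta)^{-1}$ via a second-order resolvent identity, integrates using the formula $A^{1/2}=\pi^{-1}\int_0^\infty(1-\eta(A+\eta)^{-1})\,\eta^{-1/2}d\eta$, and drops a manifestly positive term. This yields the one-sided estimate \eqref{Diff2}, valid for each sign of $\p$ separately, with an odd-in-$\p$ remainder that cancels upon symmetrization. Your concavity argument reaches the symmetrized inequality more directly, but note that the paper's one-sided bound \eqref{Diff2} is reused in the proof of Theorem~\ref{thm-binding} (SR case), where the odd term is killed by integrating against $|\hat\psi_1(\p)|^2$ with $\psi_1$ real; your approach would not immediately supply that ingredient.

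Your Step~3 works but is more elaborate than necessary. Since $M$ and $\p^2\id$ commute, the scalar inequality $\sqrt{m+\p^2}\le\sqrt m+T_\sr(\p)$ (equivalent to $(\sqrt m-1)(\sqrt{\p^2+1}-1)\ge0$) bounds the denominator of $D(M,\p)$ by $2(\hat\tau_\sr(\p_*)+T_\sr(\p)+1)$, whence
\[
D(M,\p)\ \ge\ \frac{\hat\tau_\sr(\p_*)\,T_\sr(\p)}{\hat\tau_\sr(\p_*)+T_\sr(\p)+1}
\ \ge\ \frac{T_\sr(\p)}{T_\sr(\p)+1}\cdot\frac{\hat\tau_\sr(\p_*)}{\hat\tau_\sr(\p_*)+1}
\ =\ S(\p)\,\frac{\hat\tau_\sr(\p_*)}{\hat\tau_\sr(\p_*)+1}\,,
\]
which is exactly what the paper obtains (via the arithmetic--geometric mean bound $\sqrt{a^2-2b}\le a-b/a$). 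Then \eqref{hans3} of Corollary~\ref{cor-ULB} gives $\liminf_n\langle\phi_n,\hat\tau_\sr(\p_*)(\hat\tau_\sr(\p_*)+1)^{-1}\phi_n\rangle\ge\const>0$ directly, with no spectral truncation, no choice of $R$, and no need for the separate moment control in your item~(i). Your items (i) and (ii) are both available from Section~\ref{Abb4}, so the truncation route is valid---just longer.
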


\begin{proof}
The assertion is proved in Lemma~\ref{LemSemRel}(b).
\end{proof}

\smallskip

\noindent
The previous theorem has the following immediate
corollary according to which the renormalized electron mass 
(i.e. the inverse of $(d^2/dt^2)\E{\sr}(t\,\V{u})|_{t=0}$) is always
strictly larger than its bare mass, which equals $1$ in the
units chosen in this paper.
The regularity assumptions on $\E{\sr}$ in the statement
can be verified, at least
for small coupling constants $|\ee|>0$ depending on $\Lambda$; 
see \cite[Theorem 7.1]{KM2012a}.
In this situation it is also known \cite{KM2012a} that
$\E{\sr}(\V{0})=\inf_{\R^3}\E{\sr}$.

\begin{corollary}[{\bf Renormalized electron mass}] \label{RenMass}
In the situation of Example~\ref{ex-Gphys}(ii)
let $|\ee|,\Lambda>0$ and let $\gamma(\V{0})$ be as in \eqref{Diff4b}.
If $\E{\sr}$ is twice continuously differentiable
in a neighborhood of zero, then
\begin{equation}\label{ren-mass}
\frac{d^2}{dt^2}\E{\sr}(t\,\V{u})\big|_{t=0}\le1-\gamma(\V{0})<1\,,
\qquad\V{u}\in\RR^3,\;|\V{u}|=1\,.
\end{equation}
\end{corollary}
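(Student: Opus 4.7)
My plan is to feed the inequality of Theorem~\ref{thm-ub} into a second-order Taylor expansion at the origin. Specifically, I would fix a unit vector $\V{u}\in\RR^3$ and set $\p_*=\V{0}$, $\p=t\,\V{u}$ with $t\in\RR$ small, turning \eqref{Diff4b} into
\begin{equation*}
\tfrac{1}{2}\bigl(\E{\sr}(t\,\V{u})+\E{\sr}(-t\,\V{u})\bigr)-\E{\sr}(\V{0})
\le T_\sr(t\,\V{u})-\gamma(\V{0})\,S(t\,\V{u}).
\end{equation*}
Both sides vanish at $t=0$ and both sides have a vanishing first derivative there (the left-hand side is manifestly even in $t$, while on the right one computes $T_\sr'(\V{0})=S'(\V{0})=0$ directly from \eqref{def-S}). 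Hence the inequality pins the two sides together to second order at the origin.

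Next I would exploit the $C^2$ regularity assumption on $\E{\sr}$ to pass the symmetric second difference quotient to the limit. Namely, writing $g(t):=\E{\sr}(t\,\V{u})$ and using a Taylor expansion one has
\begin{equation*}
\frac{d^2}{dt^2}g(t)\Big|_{t=0}
=\lim_{t\to0}\frac{g(t)+g(-t)-2g(0)}{t^2}
\le\lim_{t\to0}\frac{2\bigl(T_\sr(t\,\V{u})-\gamma(\V{0})\,S(t\,\V{u})\bigr)}{t^2}.
\end{equation*}
A short direct computation from \eqref{def-S} gives
\begin{equation*}
\lim_{t\to0}\frac{2\,T_\sr(t\,\V{u})}{t^2}=1,\qquad
\lim_{t\to0}\frac{2\,S(t\,\V{u})}{t^2}=1,
\end{equation*}
(since $|\V{u}|=1$), and combining yields $g''(0)\le 1-\gamma(\V{0})$, which is precisely \eqref{ren-mass}. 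The strict inequality $1-\gamma(\V{0})<1$ is built into Theorem~\ref{thm-ub}, which produces $\gamma(\V{0})\in(0,1)$.

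There is really no obstacle here once Theorem~\ref{thm-ub} is in hand: the entire argument is a routine second-order contact comparison and the $C^2$ hypothesis is exactly what is needed to legitimize the passage from the symmetric difference quotient to $g''(0)$. In particular, no information about the minimizer of $\E{\sr}$, about fiber ground states, or about global regularity of the mass shell is required; the strict mass renormalization is a direct shadow of the strict quadratic gap $-\gamma(\V{0})\,S(\p)$ in \eqref{Diff4b}.
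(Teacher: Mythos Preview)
Your proposal is correct and follows essentially the same approach as the paper's own proof: specialize \eqref{Diff4b} to $\p_*=\V{0}$, $\p=t\,\V{u}$, divide by $t^2$, and pass to the limit $t\to0$ using the $C^2$ hypothesis to identify the symmetric second difference quotient with the second derivative. Your write-up simply spells out the limit computations for $T_\sr$ and $S$ and the role of the $C^2$ assumption more explicitly than the paper does.
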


\begin{proof}
By \eqref{Diff4b},
$\tfrac{1}{t^2}(\E{\sr}(t\,\V{u})+\E{\sr}(-t\,\V{u})-2\E{\sr}(t\,\V{u}))
\le \tfrac{2}{t^2}\{( t^2\V{u}^2+1)^{\nf{1}{2}}-1
-\gamma(\V{0}) \,S(t\,\V{u})\}$, $t>0$.
In the limit $t\downarrow0$ this gives \eqref{ren-mass},  if $\E{\sr}$ is
$C^2$ near $\V{0}$. 
\end{proof}

\smallskip

\noindent
As a final application we discuss the ultra-violet
behavior of the
ground state energy $\Egs{\sr}(0)$. 
In the rest of this section we only consider the situation of
Example~\ref{ex-Gphys}(ii).
To state and prove our corresponding results we introduce the bare mass
of the electron, $\m\ge0$, as an additional parameter
and display the UV cutoff parameter $\Lambda>0$ explicitly in the
notation.
More precisely, 
if we choose $\V{G}=\V{G}_\Lambda^\ee$ as in
\eqref{def-Gphys}, then we denote $w(\p)$ and $\ww$
as $w_\Lambda(\p)$ and $\ww_\Lambda$. For
all $\m\ge0$, $\Lambda>0$, and $\p\in\R^3$, we then define
\begin{align*}
\HamF{\sr,\Lambda,\m}(\p)
&:=\sqrt{w_\Lambda(\p)^2+\m^2}-\m+\Hf\quad
\textrm{with domain}\;\dom(\Hf)\,,
\\
\E{\sr,\Lambda,\m}(\p)&:=\inf\Spec(\HamF{\sr,\Lambda,\m}(\p))\,,
\\
\Ham{\sr,\Lambda,\m}(0)&:=\sqrt{\smash{\ww_\Lambda}^2+\m^2}-\m+\Hf\quad
\textrm{with domain}\;\dom((-\Delta_\x)^{\nf{1}{2}}+\Hf)\,,
\\
\Egs{\sr,\Lambda,\m}&:=\inf\Spec(\Ham{\sr,\Lambda,\m}(0))\,,
\end{align*}
so that
\begin{equation}\label{claudia0}
\Egs{\sr,\Lambda,\m}=\essinf{\p\in\R^3}\,\E{\sr,\Lambda,\m}(\p)\,,\qquad
\m\ge0\,,\;\Lambda>0\,.
\end{equation}
On account of
$$
0\le
(t^2+\m_1^2)^{\nf{1}{2}}-\m_1-(t^2+\m_2^2)^{\nf{1}{2}}+\m_2\le
\m_2-\m_1\,,
\quad 0\le\m_1\le\m_2\,,
$$
the difference of two Hamiltonians with different bare masses
extends to a bounded operator on the whole Hilbert space
with
$\|\HamF{\sr,\Lambda,\m_1}(\p)-\HamF{\sr,\Lambda,\m_2}(\p)\|
\le|\m_1-\m_2|$ and similarly for
$\Ham{\sr,\Lambda,\m}(0)$. In particular, all remarks on the
(essential) self-adjointness of the Hamiltonians with $\m=1$
are actually valid, for all $\m\ge0$. Furthermore,
\begin{equation}\label{claudia}
0\le\left\{
\begin{array}{c}
\E{\sr,\Lambda,\m_1}(\p)-\E{\sr,\Lambda,\m_2}(\p)
\\
\Egs{\sr,\Lambda,\m_1}-\Egs{\sr,\Lambda,\m_2}
\end{array}
\right\}
\le\m_2-\m_1\,,\quad0\le\m_1\le\m_2\,.
\end{equation}
In fact, every mass $\m>0$ is related to 
the bare mass one by scaling:
Let $(u\,\psi)(\k,\lambda)= \Lambda^{3/2}\,\psi(\Lambda\,\k,\lambda)$ be the 
dilatation on $L^2(\R^3\times \Z_2)$, and let $\Gamma(u)$ be the 
associated dilatation on the Fock space. 
The action of the unitary $\Gamma(u)$ is characterized by the
formulas
\begin{gather*}
\Gamma(u)\,a(f)=a(u\,f)\,\Gamma(u)\,,
\quad \Gamma(u)\,a^\dagger(f)=a^\dagger(u\,f)\,\Gamma(u),
\quad \Gamma(u)\,\Hf=\Lambda\,\Hf\,\Gamma(u)\,,
\\ 
\Gamma(u)\,\pf=\Lambda\,\pf\,\Gamma(u)\,,\quad\Gamma(u)\,\Omega=\Omega\,.
\end{gather*}
Moreover, $u\,\V{G}_\Lambda^\ee=\Lambda\,\V{G}_1^\ee$ by \eqref{def-Gphys}.
From these formulas we readily infer that
\begin{equation}\label{Intertwine}
\Gamma(u)\,\HamF{\sr,\Lambda,\m}(\p)=\Lambda\, 
\HamF{\sr,1,\m/\Lambda}(\p/ \Lambda)\, \Gamma(u)\,.
\end{equation}
In view of \eqref{claudia0} and \eqref{claudia} this permits to get
\begin{align}\label{claudia2}
\Lambda^{-1}\,\Egs{\sr,\Lambda,1}=\Egs{\sr,1,1/\Lambda}
\uparrow\Egs{\sr,1,0}\,,\quad\Lambda\uparrow\infty\,.
\end{align}
Our results imply that the limit in \eqref{claudia2}
is actually {\em non-zero}.

%
%
\begin{theorem}[{\bf UV-Asymptotics}]\label{thm-UV}
In the situation of Example~\ref{ex-Gphys}(ii) and with the notation
introduced above we have $\Egs{\sr,1,0}>0$. 
In particular, the leading asymptotics of
$\Egs{\sr,\Lambda,1}$ 
is linear in $\Lambda\to\infty$.
\end{theorem}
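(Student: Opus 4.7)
The plan is to reduce the theorem to the strict positivity $\Egs{\sr,1,0}>0$ and then establish that positivity via the fiber decomposition combined with the machinery of Section~\ref{Abb4}. The ``in particular'' clause is then immediate: the scaling identity \eqref{claudia2} and the bound \eqref{claudia} applied to $\m_1=0$ and $\m_2=1/\Lambda$ together give
\[
\Lambda\,\Egs{\sr,1,0}-1\;\le\;\Egs{\sr,\Lambda,1}\;\le\;\Lambda\,\Egs{\sr,1,0},
\]
so that $\Egs{\sr,\Lambda,1}/\Lambda\to\Egs{\sr,1,0}>0$ as $\Lambda\to\infty$, which is the desired linear leading order behaviour.

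To prove $\Egs{\sr,1,0}>0$ I would first observe that the derivation of the fiber decomposition \eqref{fib-dec} only invokes the spectral calculus for $w_1(\p)$ via the functions $x^2$, $(x^2+1)^{\nf{1}{2}}$ and bounded Borel functions, so it extends verbatim to arbitrary bare mass $\m\ge0$. In particular
\[
\Egs{\sr,1,0}\;=\;\tessinf{\p\in\R^3}\,\E{\sr,1,0}(\p),\qquad \E{\sr,1,0}(\p):=\inf\Spec(|w_1(\p)|+\Hf)\,.
\]
A straightforward modification of the argument in the proof of Lemma~\ref{le-mass-shell} for the SR case (using \eqref{pia1}) delivers $\E{\sr,1,0}(\p)\ge\tfrac12|\p|-\const(d)$, so the essential infimum is attained on some compact set $\{|\p|\le R\}$, and only a positive uniform lower bound on that set is left to prove.

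The heart of the argument is then this uniform bound $\E{\sr,1,0}(\p)\ge c(d,g,r)>0$ on $\{|\p|\le R\}$. I would obtain it by adapting the coherent-state trial-function construction that produces the strict spectral gap $\gamma(\p_*,d,g,r)>0$ behind Theorem~\ref{thm-ub}, applied directly to the massless fiber Hamiltonian. The key fact is that this construction draws its effectiveness only from the $\m$-independent lower bound $\|\V{G}\|\ge g>0$ of Hypothesis~\ref{hyp-G}, so the positivity should survive the passage to $\m=0$. The main obstacle I foresee is that the function $S(\p)$ measuring the gap in \eqref{Diff4b} vanishes at the origin, so the direct bound degenerates exactly where it is most needed; one must therefore re-design the trial construction so that the positive lower bound applies uniformly in $\p$ \emph{including} $\p=\V{0}$, rather than merely in the symmetric-difference sense of \eqref{Diff4b}.
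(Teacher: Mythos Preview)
Your reduction of the ``in particular'' clause to the strict inequality $\Egs{\sr,1,0}>0$ is correct and matches the paper. The problem is in how you propose to obtain that inequality.

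You try to establish a uniform positive lower bound directly for the \emph{massless} fiber Hamiltonians $|w_1(\p)|+\Hf$, and to this end you propose to re-engineer the trial-function construction behind Theorem~\ref{thm-ub}. This is doubly off-target. First, Theorem~\ref{thm-ub} and the quantity $S(\p)$ concern the symmetric second difference of the mass shell, not its absolute size; the relevant output of Section~\ref{Abb4} for positivity of the infimum is the much more direct bound \eqref{UniversalLowerBound} in Corollary~\ref{cor-ULB}. Second, and more importantly, you have overlooked that the mass monotonicity \eqref{claudia} with $\m_1=0$, $\m_2=1$ already gives $\E{\sr,1,0}(\p)\ge\E{\sr,1,1}(\p)$ for every $\p$. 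Hence
\[
\Egs{\sr,1,0}=\tessinf{\R^3}\E{\sr,1,0}\;\ge\;\tessinf{\R^3}\E{\sr,1,1}\;=\;\tessinf{|\p|\le\pmax_*}\E{\sr,1,1}(\p)\,,
\]
the last equality by \eqref{joice}, and the right-hand side is $\ge\const(d,g,r)>0$ by \eqref{UniversalLowerBound} applied to the $\m=1$ model, where Hypothesis~\ref{hyp-G} is already verified. No massless adaptation of Section~\ref{Abb4} is needed at all, and the obstacle you identify (the vanishing of $S(\p)$ at $\p=\V{0}$) simply does not arise.

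In short: your proof sketch leaves a genuine gap (you have not carried out the massless adaptation, and you point to the wrong piece of Section~\ref{Abb4}), while the paper closes the argument in one line by combining \eqref{claudia} with \eqref{UniversalLowerBound}.
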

\smallskip

\noindent
An asymptotically linear growth of the self-energy 
has been observed earlier in \cite{LiebLoss2000}.
Notice that the 
self-energy grows at least as fast as
$\const\,\Lambda^{\nf{3}{2}}$ in the NR model; 
see \cite{LiebLoss2000}.

In view of the above simple remarks the
existence of a
non-vanishing linear contribution to $\Egs{\sr,\Lambda,1}$
is an immediate consequence of
the results of Section~\ref{Abb4}:

\begin{proof}
Applying successively
\eqref{claudia0}, \eqref{claudia}, and \eqref{joice} we get
$$
\Egs{\sr,1,0}=\tessinf{\R^3}\E{\sr,1,0}\ge\tessinf{\R^3}\E{\sr,1,1}
=\tessinf{|\p|\le\pmax_*}\E{\sr,1,1}(\p)\,.
$$
By \eqref{UniversalLowerBound} below the last essential infimum
is strictly positive.
\end{proof}

%
%
%
%
\section{Main technical estimates}\label{Abb4}
\noindent
Before we derive our main theorems stated in Section~\ref{sec-results}
we develop the crucial technical ingredient underlying their proofs
in the present section. Throughout this section we shall always
assume that $(\omega,\vmu,\V{G})$ fulfill Hypothesis~\ref{hyp-G}.

In what follows we 
pick $\delta>0$, $e\ge0$, and define spectral projections
\begin{equation*}
\ol{\Pi}_\delta:=\id_{(e-\delta,e+\delta)}(\Hf)\,,\qquad
\Pi_\delta:=\id-\ol{\Pi}_\delta\,.
\end{equation*}
We shall use the following simple observation:
Recall the notation
$$
(a(k)\,\psi)^{(n)}(k_1,\ldots,k_n)
:=(n+1)^{\nf{1}{2}}\psi^{(n+1)}(k,k_1,\ldots,k_n)\,,
$$
almost everywhere, for $\psi=\{\psi^{(\ell)}\}_{\ell=0}^\infty\in\Fock_b$ and
$n\in\NN_0$, 
and $a(k)\,\Omega:=0$.
Suppose that $f_1,\ldots,f_n\in\mathfrak{h}$
have supports in $\{\omega\ge2\delta\}\times\ZZ_2$.
By the pull-through formula,
$a(k)\,F(\Hf)=F(\Hf+\omega(\V{k}))\,a(k)$,
we have, for every $\psi\in\Fock_b$,
\begin{align*}
&a(f_1)\ldots a(f_n)\,\ol{\Pi}_\delta\,\psi
\\
&=
\int \ol{f_1(k_1)}\dots \ol{f_n(k_n)}\,
\id_{(e-\delta,e+\delta)}\Big(\Hf+\sum_{j=1}^n\omega(\V{k}_j)\Big)\,
a(k_1)\ldots a(k_n)\,\psi\,dk_1\ldots dk_n\,,
\end{align*}
where $k_j=(\V{k}_j,\lambda_j)$, $j=1,\ldots,n$. 
Of course, if $\id_{(e-\delta,e+\delta)}(t+\Sigma)\not=0$ with
$\Sigma\ge2\delta$,
then $|t-e|\ge\Sigma-|t+\Sigma-e|\ge\delta$.
That is,
\begin{align*}
a(f_1)\ldots a(f_n)\,\ol{\Pi}_\delta
&=
\Pi_\delta\,a(f_1)\ldots a(f_n)\,\ol{\Pi}_\delta
\end{align*}
on $\Fock_b$, or,
\begin{align*}
a(f_1)\ldots a(f_n)
&=
a(f_1)\ldots a(f_n)\,{\Pi}_\delta+
\Pi_\delta\,a(f_1)\ldots a(f_n)\,\ol{\Pi}_\delta
\end{align*}
on the domain of $\Hf^{\nf{n}{2}}$.
Combining this with \eqref{StandardEstimate}
we obtain in particular
\begin{align}\nonumber
|\sps{\phi}{&\,a(f_1)\ldots a(f_n)\,\psi}|
\\\label{uwe1}
&\le\Big(\prod_{j=1}^n\|f_j\|_{\nf{1}{2}}\Big)\,
\big(\,\|\phi\|\,\|\Hf^{\nf{n}{2}}\,{\Pi}_\delta\,\psi\|
+(e+\delta)^{\nf{n}{2}}\|\Pi_\delta\,\phi\|\,\|\psi\|\,\big)\,,
\end{align}
for all $\phi\in\Fock_b$ and $\psi\in\dom(\Hf^{\nf{n}{2}})$.
Let $d$, $g$, and $r$ be given by Hypothesis~\ref{hyp-G}.

\begin{lemma}\label{le-uwe}
With the assumptions and notation explained in the previous
paragraphs we find a universal constant, $\const>1$, such that,
for every normalized $\psi\in\dom(\Hf)$,
all $e\ge0$, $\delta>0$ with $r(2\delta)\le1$,
$\ve\in(0,1]$, and $\p\in\R^3$,
\begin{align}\nonumber
\|w(\p)\,\psi\|^2
&\ge g^2- r(2\delta)\,\big(2+9d\sps{\psi}{\Hf\,\psi}\big)
\\\label{uwe2}
&\quad
-\const\,d^2(1+e+\delta)(1+|\p|)
\big(\ve+\ve^{-1}\,\|(\Hf+1)\,\Pi_\delta\,\psi\|^2\big)
\,.
\end{align}
\end{lemma}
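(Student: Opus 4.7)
My plan is to expand $\|w(\p)\,\psi\|^2$ algebraically, isolate a positive ``vacuum-like'' contribution of order $g^2$ coming from $\vp(\V{G})^2$, and estimate all remaining operator-valued terms by splitting the form factor $\V{G}=\V{G}_<+\V{G}_>$ into an infra-red piece $\V{G}_<:=\id_{\{\omega\le 2\delta\}}\V{G}$ and a UV piece $\V{G}_>:=\V{G}-\V{G}_<$. Using the Pauli identity $(\vsigma\cdot\V{A})^2=\V{A}\cdot\V{A}+i\vsigma\cdot(\V{A}\times\V{A})$ on $w(\p)^2$ and expanding
\[
\V{v}(\p)^2=(\p-\pf)^2+(\p-\pf)\cdot\vp(\V{G})+\vp(\V{G})\cdot(\p-\pf)+\vp(\V{G})^2,
\]
I discard $(\p-\pf)^2\ge 0$ and Wick-order $\vp(\V{G})^2$ into $\tfrac12\|\V{G}\|^2$ plus normal-ordered monomials in $a^\sharp(G_j)$, which extracts the desired constant $\tfrac12\|\V{G}\|^2\ge\tfrac12 g^2$ (the $g^2$ in \eqref{uwe2} up to a universal factor absorbed in $\const$). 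All remaining pieces are Wick-ordered monomials of degree one or two in $a^\sharp(\V{G})$, possibly multiplied by a scalar $\p_j$ or an operator-valued component of $\pf$; the $i\vsigma\cdot(\V{v}\times\V{v})$ contribution is of the same algebraic type, coming from commutators of $\vp(G_j)$ with components of $\pf$.

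Each such remainder is then estimated by routing every factor of $\V{G}$ through either $\V{G}_<$ or $\V{G}_>$. For the IR part, Hypothesis \eqref{hyp-r} gives $\|\V{G}_<\|+\|\V{G}_<\|_{1/2}\le r(2\delta)$, so combined with \eqref{StandardEstimate} every term containing at least one factor $\V{G}_<$ is dominated by $r(2\delta)(c+c'\,d\sps{\psi}{\Hf\psi})$ after absorbing each $\pf$-component by $d\,\Hf$ via $|\vmu|\le d\omega$; careful aggregation produces the explicit coefficients $2$ and $9d$ in \eqref{uwe2}. For the pure UV pieces every $a^\sharp$ is supported in $\{\omega\ge 2\delta\}$, so the pull-through estimate \eqref{uwe1} applies directly: pure quadratic matrix elements are bounded by $\|(\Hf+1)\,\Pi_\delta\psi\|^2+(e+\delta)\|\Pi_\delta\psi\|^2$, and for mixed pieces carrying a $\p-\pf$ factor I would take $\phi=(\p-\pf)_l\psi$ in \eqref{uwe1} together with $\|\phi\|\le|\p|+d\,\|(\Hf+1)^{1/2}\psi\|$. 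Young's inequality $2ab\le\ve a^2+\ve^{-1}b^2$ then separates the resulting bilinears and produces the tail $\const\,d^2(1+e+\delta)(1+|\p|)(\ve+\ve^{-1}\|(\Hf+1)\,\Pi_\delta\psi\|^2)$.

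The hard part is the systematic bookkeeping: many operator-valued terms emerge from expanding $w(\p)^2$, and each must be correctly routed either through the IR channel (accumulating into $r(2\delta)(2+9d\sps{\psi}{\Hf\psi})$) or through the UV/projection channel (feeding the $\ve$-quantified tail), while the exact constant $g^2$ is extracted from the single ``no-photon'' piece of $\vp(\V{G})^2$. Keeping the correct power of $\Hf$ on each $a^\sharp$ factor via \eqref{StandardEstimate}, consistently absorbing $\pf$-components by $d\,\Hf$ without losing control of constants, and arranging that the $\ve$-blowup is tamed by the prefactor $(1+e+\delta)(1+|\p|)$ is what requires the most care.
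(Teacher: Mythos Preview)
Your strategy is essentially the paper's: split $\V{G}=\V{G}_<+\V{G}_>$ at the level $\omega=2\delta$, expand $w(\p)^2$, discard manifestly positive pieces, extract a constant of order $\|\V{G}\|^2$, and control the remainders via \eqref{StandardEstimate} and \eqref{uwe1}. There are, however, two points worth noting.

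First, the paper organizes the expansion differently. Rather than discarding $(\p-\pf)^2\ge0$ and then routing every surviving monomial through either $\V{G}_<$ or $\V{G}_>$, it sets $w_<(\p):=\vsigma\cdot(\p-\pf+\vp(\V{G}_<))$, writes
\[
w(\p)^2=w_<(\p)^2+w_<(\p)\,(\vsigma\cdot\vp(\V{G}_>))+(\vsigma\cdot\vp(\V{G}_>))\,w_<(\p)+\vp(\V{G}_>)^2,
\]
and discards the entire square $w_<(\p)^2\ge0$ together with $a^\dagger(\V{G}_>)\cdot a(\V{G}_>)\ge0$. This kills all pure-IR terms in one stroke: every remaining monomial carries at least one factor of $\V{G}_>$, so \eqref{uwe1} applies to each of them, and the only $\V{G}_<$-contributions are the two mixed cross terms coming from $\vp(\V{G}_<)$ inside $w_<(\p)$. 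This is somewhat cleaner bookkeeping than your ``route each factor'' scheme, and it yields the constant $\|\V{G}_>\|^2\ge g^2-r(2\delta)^2$ directly.

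Second, your remark that the constant you extract is ``$g^2$ up to a universal factor absorbed in $\const$'' is not correct as stated: the constant $\const$ in \eqref{uwe2} multiplies only the last (subtracted) term and cannot rescale the leading $g^2$. With the paper's convention $\vp(f)=2^{-1/2}(a^\dagger(f)+a(f))$, Wick-ordering $\vp(\V{G})^2$ alone indeed gives $\tfrac12\|\V{G}\|^2$, so your route would literally prove the lemma with $\tfrac12 g^2$ in place of $g^2$. This is harmless for the downstream application (Lemma~\ref{le-horst} only needs \emph{some} positive multiple of $g^2$), but it does not prove the lemma exactly as stated. Also, in your bound for the mixed $(\p-\pf)$ pieces you should use $\|(\pf)_l\psi\|\le d\,\|\Hf\psi\|$ (from $|\vmu|\le d\omega$), not $d\,\|(\Hf+1)^{1/2}\psi\|$; the paper then controls $\|\Hf\psi\|$ by $\|\Hf\,\Pi_\delta\psi\|+e+\delta$.
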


\begin{proof}
We set
$\V{G}_<:=\one_{\{\omega<2\delta\}}\,\G$, 
$\G_>:=\one_{\{ \omega\ge 2\delta\}}\,\G$, and
  $w_<(\p):=\vsigma\cdot(\p-\pf+ \vp(\V{G}_<))$,
so that
\begin{equation*}
w(\p)^2
=w_<(\p)^2
+ w_<(\p)\,(\vsigma\cdot\vp(\V{G}_>)) 
+(\vsigma\cdot\vp(\V{G}_>))\,w_<(\p)
+\vp(\V{G}_>)^2\quad\textrm{on}\;\sC.
\end{equation*}
Using 
$w_<(\p)^2\ge 0$, $a^{\dagger}(\G_>)\cdot
a(\G_>)\ge 0$, and the canonical commutation relations
we obtain by a straightforward computation, for $\psi\in\sC$,
$\|\psi\|=1$,
\begin{align*}
\sps{\psi}{w(\p)^2\,\psi}
&\ge 
4\Re \sps{(\p-\pf)\,\psi}{a(\G_>)\,\psi}
-2\Re\sps{\psi}{a(\vmu\cdot\V{G}_>)\,\psi}
\\
&\;+2\Re \sps{\psi}{i\vsigma\cdot a(\vmu\times \G_>)\,\psi}
+2 \Re \sps{\psi}{a(\G_>)\,a(\G_>)\,\psi}
\\
&\;
+4\Re \sps{a(\G_>)\,\psi}{ a(\G_<)\,\psi}
+4 \Re \sps{a^{\dagger}(\G_<)\, \psi}{a(\G_>)\,\psi}+\|\G_>\|^2.
\end{align*}
Next, we apply \eqref{StandardEstimate} to the two terms containing
$\V{G}_<$ in the last line,
\eqref{hyp-r} to the term $\|\G_>\|^2$, and
\eqref{eq-hyp-G}  and \eqref{uwe1}
to all remaining terms on the RHS of the previous estimate.
Proceeding in this way we arrive at
\begin{align*}
\|&w(\p)^2\,\psi\|^2
\\
&\ge g^2-r(2\delta)^2 
-4d\,\big(\|(\p-\pf)\,\psi\|\,\|\Hf^{\nf{1}{2}}\,\Pi_\delta\,\psi\|+
(e+\delta)^{\nf{1}{2}}\|(\p-\pf)\,\Pi_\delta\,\psi\|\big)
\\
&\quad-4d\,r(2\delta)\,\big(2\|\Hf^{\nf{1}{2}}\psi\|^2+\|\Hf^{\nf{1}{2}}\,\psi\|\big)
-2d^2\,\big(\|\Hf\,\Pi_\delta\,\psi\|+(e+\delta)\,\|\Pi_\delta\,\psi\|\big)
\\
&\quad-\const\,d\big(\|\Hf^{\nf{1}{2}}\,\Pi_\delta\,\psi\|+
(e+\delta)^{\nf{1}{2}}\|\Pi_\delta\,\psi\|\big)\,.
\end{align*}
Finally, we use 
$\|\pf\,\psi\|/d\le\|\Hf\,\psi\|\le\|\Hf\,\Pi_\delta\,\psi\|+e+\delta$
and
$\|\Hf^{\nf{1}{2}}\,\Pi_\delta\,\psi\|\le\|(\Hf+1)\,\Pi_\delta\,\psi\|^{\nf{1}{2}}$
as well as the following consequence of Young's inequality,
$t\,(1+t^{\nf{1}{2}})\le\ve+2t^2/\ve$,
$t\ge0$, $0<\ve\le1$,
to obtain \eqref{uwe2}, for all normalized $\psi\in\sC$.
An approximation argument extends it to all $\psi\in\dom(\Hf)$
of norm $1$.
\end{proof}

\smallskip

\noindent
In what follows we abbreviate
\begin{align*}
F_\sharp(\p):=\hat{\tau}_\sharp(\p)^2+w(\p)^2.
\end{align*}

\begin{lemma}\label{le-horst}
Let $\mathfrak{p}>0$. Then we
we find $\const_0>0$, depending only on $\mathfrak{p}$ and the
quantities $d$, $g$, and $r$ in Hypothesis~\ref{hyp-G},
such that, for all $|\p|\le\mathfrak{p}$, 
$\rho\in(0,1]$, and all normalized $\psi$ in the range of the spectral projection
$\id_{(-\infty,\,\E{\sharp}(\p)+\rho)}(\HamF{\sharp}(\p))$,
\begin{align}\label{horst1}
\sps{\psi}{&\,F_\sharp(\p)\,\psi}\ge\const_0-\rho^2.
\end{align}
\end{lemma}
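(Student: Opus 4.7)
The plan is to reduce the task to a lower bound on $\|w(\p)\psi\|^2$ and then apply Lemma~\ref{le-uwe} via a bootstrap fuelled by the spectral support assumption. First, I would establish the operator inequality $F_\sharp(\p)\ge w(\p)^2$: for $\sharp=\nr$, $F_\nr(\p)=\tfrac{1}{4}w(\p)^4+w(\p)^2\ge w(\p)^2$, while for $\sharp=\sr$, the identity $(\hat{\tau}_\sr(\p)+1)^2=w(\p)^2+1$ yields $F_\sr(\p)=2\hat{\tau}_\sr(\p)^2+2\hat{\tau}_\sr(\p)=2w(\p)^2+2-2\sqrt{w(\p)^2+1}$, and $w(\p)^2+2\ge 2\sqrt{w(\p)^2+1}$ follows from $(w^2+2)^2-4(w^2+1)=w^4\ge 0$. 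Thus $\sps{\psi}{F_\sharp(\p)\psi}\ge\|w(\p)\psi\|^2$, reducing the task to a lower bound on the latter.

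Next I would extract moment bounds from the spectral support. Since $\HamF{\sharp}(\p)\ge 0$ and $\psi\in\Ran\,\id_{(-\infty,\,\E{\sharp}(\p)+\rho)}(\HamF{\sharp}(\p))$, one has $\sps{\psi}{\HamF{\sharp}(\p)^k\psi}\le(\E{\sharp}(\p)+\rho)^k$ for all $k\ge 0$; by Lemma~\ref{le-mass-shell}, $\E{\sharp}(\p)\le\emax$ for $|\p|\le\pmax$. In particular $\sps{\psi}{\Hf\psi}\le\emax+1$ (from $\hat\tau_\sharp(\p)\ge 0$), and writing $\Hf=\HamF{\sharp}(\p)-\hat\tau_\sharp(\p)$ together with the operator inequality $(a-b)^2\le 2a^2+2b^2$ and $\hat\tau_\sharp(\p)^2\le F_\sharp(\p)$ gives $\Hf^2\le 2\HamF{\sharp}(\p)^2+2F_\sharp(\p)$. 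Setting $X:=\sps{\psi}{F_\sharp(\p)\psi}$ and exploiting $\sps{\psi}{(\HamF{\sharp}(\p)-\E{\sharp}(\p))^2\psi}\le\rho^2$, this yields $\sps{\psi}{\Hf^2\psi}\le 2(\emax+1)^2+2X+2\rho^2$.

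Now I would apply Lemma~\ref{le-uwe} with $e=0$, so that $\Pi_\delta=\id_{[\delta,\infty)}(\Hf)$, using the crude estimate $\|(\Hf+1)\Pi_\delta\psi\|^2\le\sps{\psi}{(\Hf+1)^2\psi}\le B_0+4X+C_0\rho^2$ with $B_0,C_0$ depending only on $d$ and $\pmax$. Choose $\delta_0>0$ small enough that $r(2\delta_0)(2+9d(\emax+1))\le g^2/4$, possible by the infrared hypothesis $r(\delta)\to 0$ as $\delta\downarrow 0$, and then optimize over $\ve\in(0,1]$ to balance the remaining terms. Substituting the error bound into Lemma~\ref{le-uwe} produces an inequality of the form $(1+C_1)\,X\ge\tfrac{1}{2}g^2-C_2\rho^2$, with constants $C_1,C_2$ depending only on $d,g,r,\pmax$, which solves to $\sps{\psi}{F_\sharp(\p)\psi}\ge\const_0-\rho^2$ after absorbing universal constants into $\const_0$.

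The main obstacle is closing the bootstrap with a genuinely positive $\const_0$ uniformly in $\psi$ and $\p$: the error constant $K\sim d^2(1+\pmax)$ in Lemma~\ref{le-uwe} must be balanced against the infrared coefficient $g^2$, and this balance is tight. If the crude bound on $\|(\Hf+1)\Pi_\delta\psi\|^2$ proves insufficient, a natural alternative is an IMS-type localization in the spectrum of $\Hf$: apply Lemma~\ref{le-uwe} separately on spectral bands on which $\Pi_\delta$ vanishes exactly, and reassemble via a double-commutator localization error controlled by the $\Hf$-moment bounds established above.
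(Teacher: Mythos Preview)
Your bootstrap does not close. After choosing $e=0$ in Lemma~\ref{le-uwe} and invoking the crude estimate $\|(\Hf+1)\Pi_\delta\psi\|^2\le B_0+4X+C_0\rho^2$, you obtain
\[
(1+2K\ve^{-1})\,X\ \ge\ \tfrac{3}{4}g^2 - K\ve - K\ve^{-1}B_0 - C\rho^2,
\]
and for this to yield a positive constant you would need $\tfrac{3}{4}g^2 > K\ve + K\ve^{-1}B_0$ for some $\ve\in(0,1]$. But $B_0\ge1$ (indeed $B_0\sim(\emax+1)^2$) and $K\sim d^2(1+\pmax)$ are fixed by $d$ and $\pmax$, while $g$ is an independent parameter of Hypothesis~\ref{hyp-G}; no choice of $\ve$ makes the right-hand side positive when $g$ is small relative to $d,\pmax$. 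The step where you pass to ``$(1+C_1)X\ge\tfrac{1}{2}g^2-C_2\rho^2$'' silently drops the $K\ve^{-1}B_0$ term, which is precisely the obstruction. Your own caveat about the balance being ``tight'' is in fact a failure, and the IMS alternative you sketch is neither carried out nor clearly adequate.

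The paper avoids this by \emph{not} discarding $\hat{\tau}_\sharp(\p)^2$ via $F_\sharp\ge w^2$. Instead it centers the spectral cutoff at $e=\E{\sharp}(\p)$, so that on $\Ran\Pi_\delta$ one has $\|(\Hf+1)\Pi_\delta\psi\|\le (2+e)\delta^{-1}\|(\Hf-e)\Pi_\delta\psi\|$. The spectral support then gives $\|(\Hf-e)\psi\|^2\le 2\rho^2+2\sps{\psi}{\hat{\tau}_\sharp(\p)^2\psi}$, so that, with $x:=\tfrac{1}{2}\|(\Hf-e)\Pi_\delta\psi\|^2$, the $\hat{\tau}_\sharp^2$ part of $F_\sharp$ contributes $\ge x-\rho^2$ while Lemma~\ref{le-uwe} gives $\|w(\p)\psi\|^2\ge[a-bx]_+$. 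Summing yields $\sps{\psi}{F_\sharp(\p)\psi}\ge x+[a-bx]_+-\rho^2$, and $\inf_{x\ge0}\{x+[a-bx]_+\}=a/b>0$ \emph{irrespective of how large $b$ is}. The point you are missing is that the two summands in $F_\sharp$ play complementary roles: one furnishes the positive ``$x$'' that compensates the loss ``$-bx$'' in the other, and it is exactly this mechanism that makes the bound uniform in $d,g,\pmax$.
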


\begin{proof}
We set $e:=\E{\sharp}(\p)$ and always assume that $\delta\in(0,1]$.
By \eqref{joice} we have $e\le\emax\equiv\emax(\pmax,d)$.
By assumption
$\|\hat{\tau}_{\sharp}(\p)\,\psi+(\Hf-e)\,\psi\|\le\rho$,
whence
\begin{align*}
2\rho^2+2\SP{\psi}{\hat{\tau}_{\sharp}(\p)^2\,\psi}
&\ge\|(\Hf-e)\,\psi\|^2\ge\|(\Hf-e)\,\Pi_\delta\,\psi\|^2.
\end{align*}
Moreover,
$\sps{\psi}{\Hf\,\psi}\le\sps{\psi}{\HamF{\sharp}(\p)\,\psi}\le e+\rho\le\emax+1$
and we observe that
$\|(\Hf+1)\,\Pi_\delta\,\psi\|\le(2+e)\|(\Hf-e)\,\Pi_\delta\,\psi\|/\delta$.
By virtue of Lemma~\ref{le-uwe} we deduce that the inequality
\begin{align*}
\SP{\psi}{&\,F_\sharp(\p)\,\psi}\ge[a-b\,x]_++x-\rho^2
\end{align*}
is satisfied at the point $x=\|(\Hf-e)\,\Pi_\delta\,\psi\|^2/2$,
where $[t]_+:=\max\{0,t\}$ and
\begin{align*}
a&:=g^2-r(2\delta)\,\big(2+9d(1+\emax)\big)-\const\,d^2(2+\emax)(1+|\p|)\,\ve\,,
\\
b&:=2\const\,d^2(2+\emax)^3(1+|\p|)/\delta^2\ve\,.
\end{align*}
Finally, we fix $\delta,\ve\in(0,1]$ such that
$a\ge g^2/2$ and observe that $b>1$ and, hence,
$
\inf_{x\ge0}\{[g^2/2-b\,x]_++x\}\ge g^2/2b
$.
\end{proof}

\smallskip

\noindent
In our applications we need bounds similar to \eqref{horst1} but
with $F_\sharp(\p)$ replaced by some other functions of $w(\p)$.
In order to derive them in Proposition~\ref{prop-hans} below we consider the spectral 
measures associated with $w(\p)^2$,
\begin{equation*}
d\mu_\psi(\lambda):=d\sps{\psi}{E_\lambda(w(\p)^2)\,\psi}\,,\qquad
\psi\in\CC^2\otimes\Fock_b\,,
\end{equation*}
and write $F_\sharp(\p)=f(w(\p)^2)$ so that
$f(t)=t^2/4+t$ in the NR case and $f(t)=2(t+1-\sqrt{t+1})$ in the SR case.
In the proof of Proposition~\ref{prop-hans} 
it is crucial that
$$
\int_{\{f\ge b\}}f\,d\mu_\psi\xrightarrow{\;b\to\infty\;}0\,,
$$
{\em uniformly} for all normalized $\psi$ in the range of
$\id_{(-\infty,\E{\sharp}(\p)+\rho)}(\HamF{\sharp}(\p))$.
To verify the uniformity of the above limit we shall apply (see
\eqref{hans1})
the following higher order estimate:

\begin{lemma}\label{le-hoe}
For every $\p\in\R^3$, it holds
$\dom(\HamF{\sharp}(\p)^{\nf{n}{2}})\subset\dom(\Hf^{\nf{n}{2}})$ and
\begin{equation*}
\big\|\Hf^{\nf{n}{2}}(\HamF{\sharp}(\p)+1)^{-\nf{n}{2}}\big\|
\le\const(d)\,,\qquad n\in\{1,\ldots,8\}\,.
\end{equation*}
\end{lemma}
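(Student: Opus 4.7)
The plan is induction on $n$. The base case $n=1$ follows from the form inequality $\Hf\le\HamF{\sharp}(\p)$, which is immediate from $\hat{\tau}_\sharp(\p)\ge0$; by operator monotonicity of the square root this gives $\|\Hf^{\nf{1}{2}}(\HamF{\sharp}(\p)+1)^{-\nf{1}{2}}\|\le1$, and in particular $\dom(\HamF{\sharp}(\p)^{\nf{1}{2}})\subset\dom(\Hf^{\nf{1}{2}})$.

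For the induction step I would establish, by induction, the a priori estimate
\[
\|\Hf^{\nf{n}{2}}\psi\|\le\const(d)\,\|(\HamF{\sharp}(\p)+1)^{\nf{n}{2}}\psi\|\,,\qquad\psi\in\sC\,,
\]
from which the full operator bound and the domain inclusion follow by a standard density argument (applied to vectors of the form $(\HamF{\sharp}(\p)+1)^{-\nf{n}{2}}\phi$). The crucial observation is that one factor of $\Hf$ can be rewritten as $\HamF{\sharp}(\p)-\hat{\tau}_\sharp(\p)$; the first summand is absorbed into the right-hand side, while the second must be commuted past the remaining $\Hf^{\nf{n-1}{2}}$. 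The relevant commutation rules are $[\Hf,a(f)]=-a(\omega f)$, $[\Hf,a^\dagger(f)]=a^\dagger(\omega f)$, and $[\Hf,\p-\pf]=0$, so that each commutation with $\Hf$ raises by one the $\omega$-weight attached to $\V{G}$ inside the surviving creation and annihilation operators. In the NR case this is immediate from $\hat{\tau}_\nr(\p)=\tfrac{1}{2}w(\p)^2$; in the SR case one first reduces to polynomial expressions in $w(\p)^2$ via \eqref{pia1} and its higher-order analogues, controlling the resulting remainders through the spectral calculus for $(w(\p)^2+1)^{\nf{1}{2}}$.

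Iterating the procedure up to the maximal order $n=8$, one encounters at worst creation and annihilation operators of the form $a^\dagger(\omega^m\V{G})$ and $a(\omega^m\V{G})$ whose contribution to $\|\Hf^{\nf{n}{2}}\psi\|$ is controlled, via \eqref{StandardEstimate}, by the norms $\|\omega^{m-\nf{1}{2}}\V{G}\|$ with $m$ growing linearly in $n$. A careful count shows that the relevant moments $\ell=2m-1$ range precisely up to $13$, which matches the integrability condition $\int\omega^\ell|\V{G}|^2\le d^2$ for $\ell\in\{-1,0,\ldots,13\}$ postulated in \eqref{eq-hyp-G} and explains the otherwise mysterious upper cutoff $\ell=13$. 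All resulting error terms can then be absorbed either into $(\HamF{\sharp}(\p)+1)^{\nf{n}{2}}$ or into $\Hf^{\nf{k}{2}}$ for strictly smaller $k$, closing the induction by hypothesis.

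The main obstacle will be the combinatorial bookkeeping of the multi-commutator expansion: one must track how many $\omega$-factors each commutator produces and verify that every arising term either falls under the induction hypothesis at strictly lower order or is a bounded operator whose norm is controlled by the finite collection of integrals $\int\omega^\ell|\V{G}|^2$ with $\ell\le 13$. Uniformity of $\const(d)$ in $\p$ is then automatic, since $\p$ enters only through the numerical shift $\p-\pf$ in $w(\p)$, which commutes with $\Hf$, so that the commutator algebra driving the recursion is $\p$-independent.
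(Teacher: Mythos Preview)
Your approach is genuinely different from the paper's. The paper does not redo any commutator induction for the fiber Hamiltonians at all. Instead it quotes the higher-order estimate for the \emph{total} Hamiltonian $\Ham{\sharp}(0)$ from \cite{Matte2009} (see also \cite{FGS2001}), namely
\[
\big\|\Hf^{\nf{n}{2}}(\Ham{\sharp}(0)+1)^{-\nf{n}{2}}\big\|\le\const(d)\,(\Egs{\sharp}(0)+1)^{2n-1},
\]
and then exploits the fiber decomposition \eqref{fib-dec} together with the fact that $U_{\V{0}}$ commutes with $\Hf$. This turns the left-hand side into $\tesssup{\p\in\R^3}\|\Hf^{\nf{n}{2}}(\HamF{\sharp}(\p)+1)^{-\nf{n}{2}}\|$ (after a regularization), and norm-continuity of $\p\mapsto(\HamF{\sharp}(\p)+1)^{-1}$ upgrades the essential supremum to a pointwise bound. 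The uniformity in $\p$ thus comes for free from the direct-integral identity, and the moment bound $\ell\le13$ in \eqref{eq-hyp-G} is inherited from the cited result on $\Ham{\sharp}(0)$.

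Your route---reproducing the multi-commutator induction directly for $\HamF{\sharp}(\p)$---is the method underlying \cite{FGS2001,Matte2009} and is viable in principle, but your sketch has two soft spots. First, the claim that ``the commutator algebra driving the recursion is $\p$-independent'' is not accurate: $[\Hf,\hat{\tau}_\nr(\p)]$ contains terms of the form $w(\p)\,\vsigma\cdot\vp(\omega\V{G})$, and the surviving factor $w(\p)$ depends on $\p$. Uniformity in $\p$ is obtained not by $\p$-independence of the commutators but by re-absorbing such factors into $\HamF{\sharp}(\p)$ via $w(\p)^2\le2\HamF{\nr}(\p)$ (and the analogous relation in the SR case). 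Second, in the SR case the reduction ``via \eqref{pia1}'' is not how one actually proceeds: the square root is handled through the integral representation $(w(\p)^2+1)^{\nf{1}{2}}=\pi^{-1}\int_0^\infty(w(\p)^2+1)(w(\p)^2+1+\eta)^{-1}\eta^{-\nf{1}{2}}d\eta$ and resolvent commutator formulas, as in \cite{Matte2009}. With these corrections your argument would go through, but it amounts to rederiving the cited theorems rather than invoking them.
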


\begin{proof}
According to Theorems~{4.2} and~{5.2} of \cite{Matte2009} and
\eqref{eq-hyp-G} we have
\begin{equation}\label{sergey1}
\big\|\Hf^{\nf{n}{2}}(\Ham{\sharp}(0)+1)^{-\nf{n}{2}}\big\|
\le\const(d)\,(\Egs{\sharp}(0)+1)^{2n-1}.
\end{equation}
(The same bound with a less explicit RHS has been obtained earlier
in \cite{FGS2001}.)
Since $\Ham{\sharp}(0)$ is unitarily equivalent to the direct integral
of the fiber Hamiltonians and since the corresponding unitary
transformation ($U_{\V{0}}$ in \eqref{def-Uq})
commutes with $\Hf$ the LHS of \eqref{sergey1} is equal to
$\sup_{\ve>0}\tesssup{\p\in\R^3}N_\ve(\p)$
with
$$
N_\ve(\p):=\|F_\ve\,(\HamF{\sharp}(\p)+1)^{-\nf{n}{2}}\|\,,
\;\;
\p\in\RR^3\,, \quad F_\ve:=\Hf^{\nf{n}{2}}(\id+\ve\,\Hf)^{-\nf{n}{2}}.
$$
Since $\p\mapsto(\HamF{\sharp}(\p)+1)^{-1}$ is norm-continuous
(see \cite{KM2012a} for the SR case) we know that
$N_\ve$ is continuous on $\R^3$ and, in particular, its essential supremum
is actually a supremum. Furthermore,
$\Egs{\sharp}(0)\le\const'(d)$ by Lemma~\ref{le-mass-shell}.
\end{proof}

\begin{proposition}\label{prop-hans}
Let $\mathfrak{p}>0$. Then there exist $\const_0,\const_1>0$,
depending only on $\mathfrak{p}$ and the quantities $d$, $g$,
and $r$ in Hypothesis~\ref{hyp-G},
such that, for all $|\p|\le\mathfrak{p}$, 
$\rho\in(0,1]$, and all normalized $\psi$ in the range of the spectral projection
$\id_{(-\infty,\,\E{\sharp}(\p)+\rho)}(\HamF{\sharp}(\p))$,
\begin{equation}\label{hans}
\mu_\psi\big([a,b]\big)
\ge(\const_0-\const_1/b_0-2a-\rho^2)/b_0\,,
\quad0<a<b_0\le b\,,\;a\le1\,.
\end{equation}
\end{proposition}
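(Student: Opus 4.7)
The strategy is to decompose the integral $\sps{\psi}{F_\sharp(\p)\psi} = \int f\,d\mu_\psi$ over the three regions $[0,a)$, $[a,b]$, and $(b,\infty)$, bound the outer pieces from above, subtract from the total lower bound supplied by Lemma~\ref{le-horst}, and finally convert the resulting lower bound on $\int_{[a,b]} f\,d\mu_\psi$ into a lower bound on $\mu_\psi([a,b])$ via the monotonicity of $f$. Here $f$ is defined by $F_\sharp(\p)=f(w(\p)^2)$, i.e.\ $f(t)=t^2/4+t$ in the NR case and $f(t)=2(t+1-\sqrt{t+1})$ in the SR case; elementary algebra gives $f(t)\ge t$ for all $t\ge 0$ and $f(t)\le 2t$ on $[0,1]$ in both models.

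From Lemma~\ref{le-horst} one has $\int f\,d\mu_\psi\ge\const_0-\rho^2$. Since $\mu_\psi$ is a probability measure (recall $\|\psi\|=1$), $\int_{[0,a)} f\,d\mu_\psi \le f(a)\le 2a$. For the tail, monotonicity of $f$ gives $f\le f^2/f(b)$ on $(b,\infty)$, hence
\begin{equation*}
\int_{(b,\infty)} f\,d\mu_\psi \;\le\; \frac{\sps{\psi}{F_\sharp(\p)^2\psi}}{f(b)} \;\le\; \frac{\sps{\psi}{F_\sharp(\p)^2\psi}}{b_0},
\end{equation*}
where I have used $f(b)\ge b\ge b_0$.

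The key technical input is a uniform bound $\sps{\psi}{F_\sharp(\p)^2\psi}\le\const_1$ depending only on $\mathfrak{p},d,g,r$. I would derive it from the identities $F_\nr=\hat{\tau}_\nr(\hat{\tau}_\nr+2)$ and $F_\sr=2\hat{\tau}_\sr(\hat{\tau}_\sr+1)$, which make $F_\sharp(\p)^2$ a polynomial of degree four in $\hat{\tau}_\sharp(\p)$. Writing $\hat{\tau}_\sharp(\p)=\HamF{\sharp}(\p)-\Hf$ and expanding yields a finite sum of operator-products of total degree at most four in $\HamF{\sharp}(\p)$ and $\Hf$. The factors of $\HamF{\sharp}(\p)$ are harmless on the spectral range of $\psi$, where $\|\HamF{\sharp}(\p)^k\psi\|\le(\E{\sharp}(\p)+\rho)^k\le(\emax+1)^k$ by Lemma~\ref{le-mass-shell}, and the factors $\|\Hf^k\psi\|$ with $k\le 4$ are controlled by Lemma~\ref{le-hoe}. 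I expect this bookkeeping to be the main obstacle: since $\HamF{\sharp}(\p)$ and $\Hf$ do not commute, many mixed terms arise, and the range $n\le 8$ of Lemma~\ref{le-hoe} (equivalently $\ell\le 13$ in Hypothesis~\ref{hyp-G}) is precisely what the expansion consumes.

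Combining the three ingredients gives $\int_{[a,b]} f\,d\mu_\psi \ge \const_0-\rho^2-2a-\const_1/b_0$. To pass to a measure bound I split $[a,b]=[a,b_0]\cup(b_0,b]$; the piece over $(b_0,b]$ is again bounded by $\const_1/b_0$ by the tail estimate applied at $b_0$, and on $[a,b_0]$ monotonicity of $f$ together with the elementary inequality $f(b_0)\le\const\cdot b_0$ (with a constant depending on the model) gives $\int_{[a,b_0]} f\,d\mu_\psi\le\const\cdot b_0\cdot\mu_\psi([a,b_0])$. Since $\mu_\psi([a,b])\ge\mu_\psi([a,b_0])$, rearranging produces the stated inequality after absorbing the universal factors into $\const_0$ and $\const_1$.
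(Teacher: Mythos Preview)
Your overall strategy---split the integral of $f$ against $d\mu_\psi$, use Lemma~\ref{le-horst} for the lower bound, and control the tail via $\int f^2\,d\mu_\psi$ together with Lemma~\ref{le-hoe}---is exactly the paper's. There is, however, a genuine gap in your final conversion step. You claim the ``elementary inequality $f(b_0)\le\const\cdot b_0$ (with a constant depending on the model)'', but in the NR case $f(t)=t^2/4+t$, so $f(b_0)\sim b_0^2/4$ for large $b_0$, and the proposition places no upper bound on $b_0$. Your argument therefore only yields
\[
\mu_\psi([a,b_0])\;\ge\;\frac{\const_0-2a-\rho^2-2\const_1/b_0}{f(b_0)}\,,
\]
which in the NR model is weaker than the asserted bound by a factor $\sim b_0$. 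Your earlier inequality $f(t)\le 2t$ on $[0,1]$ only helps at the lower endpoint $a\le1$; it does nothing at $b_0$.

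The paper avoids this by partitioning according to the sublevel sets of $f$ rather than intervals in $t$: from $\int f\,d\mu_\psi\ge\const_0-\rho^2$ and the tail bound one writes
\[
\const_0-\rho^2\;\le\;2a\;+\;b_0\,\mu_\psi\big(\{2a\le f\le b_0\}\big)\;+\;\const_1/b_0\,,
\]
using the \emph{trivial} bound $f\le b_0$ on $\{f\le b_0\}$. This puts $b_0$ (not $f(b_0)$) in the denominator. The conversion to $\mu_\psi([a,b_0])$ is then harmless, because $t\le f(t)$ gives $\{f\le b_0\}\subset[0,b_0]$, while $f(t)\le 2t$ for $t\le1$ together with $a\le1$ gives $\{f\ge2a\}\subset[a,\infty)$; hence $\{2a\le f\le b_0\}\subset[a,b_0]\subset[a,b]$.

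A secondary comment: your route to $\sps{\psi}{F_\sharp(\p)^2\psi}\le\const_1$ via expanding $(\HamF{\sharp}(\p)-\Hf)^k$ is workable but needlessly intricate because of the non-commuting cross terms you mention. The paper simply notes that $f(t)^2\le\const(t^4+1)$ in both models, so $\int f^2\,d\mu_\psi\le\const\,\|(w(\p)^4+i)\,\psi\|^2$, and then uses the relative bound $\|w(\p)^4\psi\|\le\const(\pmax,d)\,\|(\Hf+1)^4\psi\|$ followed by Lemma~\ref{le-hoe} with $n=8$. No expansion of $\HamF{\sharp}(\p)-\Hf$ is needed.
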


\begin{proof}
On account of Lemma~\ref{le-hoe} we have,
for every $b>0$,
\begin{align}\nonumber
\int_{\{f\ge b\}} f\,d\mu_\psi
&\le b^{-1}\int_\RR f^{2}\,d\mu_\psi
\le\const\,b^{-1}\,\|(w(\p)^4+i)\,\psi\|^2
\\\label{hans1}
&\le\const'(\mathfrak{p},d)\,b^{-1}\,\|(\Hf+1)^4\,\psi\|^2
\le\const_1(\mathfrak{p},d)\,b^{-1}.
\end{align}
Dropping the argument $(\mathfrak{p},d)$ of $\const_1$ 
we infer from \eqref{horst1} that, for all $0<a<b$,
$$
\const_0-\rho^2\le\int_\RR f\,d\mu_\psi\le a+b\,\mu_\psi(\{a\le f\le b\})+\const_1/b\,,
$$
that is, 
$$
\mu_\psi(\{2a\le f\le b\})\ge(\const_0-\const_1/b-2a-\rho^2)/b\,.
$$
Recall that $\mu_\psi((-\infty,0))=0$.
In the NR case $t\le f(t)\le t\,(t+1)$ and in the
SR case $t\le f(t)\le2t$ and we readily conclude the proof.
\end{proof}

\begin{corollary}\label{cor-ULB}
Let $\mathfrak{p}>0$. Then there exists $\const>0$,
depending only on $\mathfrak{p}$ and the quantities $d$, $g$,
and $r$ in Hypothesis~\ref{hyp-G},
such that, for all $|\p|\le\mathfrak{p}$, 
\begin{equation}\label{UniversalLowerBound}
\E{\sharp}(\p)\ge\const\,,
\end{equation}
and, with $\cM_\rho$ denoting the set of all $\psi$ as in the
statement of Proposition~\ref{prop-hans},
\begin{align}\label{hans3}
\liminf_{\rho\downarrow0}\inf_{\psi\in\cM_\rho}\sps{\psi}{\hat{\tau}_\sharp(\p)\,(\hat{\tau}_\sharp(\p)+1)^{-1}\,\psi}
\ge\const\,.
\end{align}
\end{corollary}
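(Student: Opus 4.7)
The plan is to derive both parts of the corollary from Proposition~\ref{prop-hans} together with a routine monotonicity argument on the spectral measure $\mu_\psi$. First I would recast the two quantities of interest as functions of $w(\p)^2$ through the spectral calculus: $\hat{\tau}_\sharp(\p) = g_\sharp(w(\p)^2)$ with $g_\nr(t) := t/2$ and $g_\sr(t) := \sqrt{t+1}-1$, while $\hat{\tau}_\sharp(\p)(\hat{\tau}_\sharp(\p)+1)^{-1} = h_\sharp(w(\p)^2)$ with $h_\sharp(t) := g_\sharp(t)/(g_\sharp(t)+1)$. On $[0,\infty)$ both $g_\sharp$ and $h_\sharp$ are non-decreasing, non-negative, and strictly positive at every $t>0$.

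Next, with $\const_0,\const_1>0$ the constants from Proposition~\ref{prop-hans}, I would fix
$$b_0 := \max\{1, 4\const_1/\const_0\},\quad a := \min\{1, b_0/2, \const_0/8\},\quad \rho_0 := \min\{1,(\const_0/4)^{\nf{1}{2}}\}.$$
Applying the proposition with $b = b_0$ then yields
$$\mu_\psi([a,b_0]) \ge \const_0/(4b_0) =: \kappa > 0,$$
uniformly in normalized $\psi \in \cM_\rho$, $|\p|\le\mathfrak{p}$, and $\rho \in (0,\rho_0]$. To obtain \eqref{UniversalLowerBound} I would fix such a $\psi$ and invoke the spectral theorem, which yields $\sps{\psi}{\HamF{\sharp}(\p)\,\psi} \le \E{\sharp}(\p) + \rho$ for $\psi \in \cM_\rho$, so that
$$\E{\sharp}(\p) \ge \sps{\psi}{\hat{\tau}_\sharp(\p)\,\psi} - \rho = \int g_\sharp\,d\mu_\psi - \rho \ge g_\sharp(a)\,\kappa - \rho,$$
and the conclusion follows by sending $\rho \downarrow 0$. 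For \eqref{hans3}, the same mass bound and the monotonicity of $h_\sharp$ give
$$\sps{\psi}{\hat{\tau}_\sharp(\p)(\hat{\tau}_\sharp(\p)+1)^{-1}\psi} = \int h_\sharp\,d\mu_\psi \ge h_\sharp(a)\,\kappa,$$
uniformly in $\psi \in \cM_\rho$ and $\rho \in (0,\rho_0]$; taking the infimum over $\psi$ and the liminf in $\rho$ closes the argument.

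The main obstacle has already been dispatched by Proposition~\ref{prop-hans} and its supporting higher-order estimate in Lemma~\ref{le-hoe}; what remains here is essentially bookkeeping of constants. The only care required is in choosing $a>0$, $b_0<\infty$, and $\rho_0>0$ so that each of the three negative contributions $\const_1/b_0$, $2a$, and $\rho^2$ in the proposition is absorbed into a quarter of $\const_0$, which is what makes the mass bound $\mu_\psi([a,b_0])\ge\kappa$ simultaneously uniform in $\psi$ and in $\rho \in (0,\rho_0]$ and hence propagates to the two lower bounds claimed in the corollary.
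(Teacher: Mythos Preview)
Your proposal is correct and follows essentially the same approach as the paper's own proof: both fix suitable $a$, $b$, and $\rho$ so that Proposition~\ref{prop-hans} yields a uniform lower bound on $\mu_\psi([a,b])$, and then bound $\sps{\psi}{g(w(\p)^2)\,\psi}=\int g\,d\mu_\psi$ from below by $g(a)$ times that mass, with $g=g_\sharp$ for \eqref{UniversalLowerBound} and $g=g_\sharp/(g_\sharp+1)$ for \eqref{hans3}. Your version is a bit more explicit about the choice of constants, but the argument is the same.
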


\begin{proof}
Fix $a,b,\rho>0$ such that the RHS in \eqref{hans} is $\ge \const_0/2b$
 and set $g(t):=t/2$ in the NR case
and $g(t)=\sqrt{t+1}-1$ in the SR case.
Then
\begin{align*}
\E{\sharp}(\p)&=\inf_{\psi\in\cM_\rho}\sps{\psi}{\HamF{\sharp}(\p)\,\psi}
\ge\inf_{\psi\in\cM_\rho}\sps{\psi}{\hat{\tau}_\sharp(\p)\,\psi}
\\
&=\inf_{\psi\in\cM_\rho}\int_\RR g(t)\,d\mu_\psi(t)
\ge g(a)\inf_{\psi\in\cM_\rho}\mu_\psi([a,b])\ge \const_0\,g(a)/2b\,.
\end{align*}
The same argument with $g$ replaced by $g/(g+1)$ gives \eqref{hans3}.
\end{proof}

%
%
%
\section{The non-relativistic case}\label{SecNR}

\noindent
In this section we prove Theorem~\ref{thm-binding} in the NR case, 
$\sharp=\nr$, by a variational argument employing a
trial function (see \eqref{ida5}) resembling the one in \cite{ChenVV}.
Thanks to the results of Section~\ref{Abb4} 
we may construct this trial function by means of
minimizing sequences for certain fiber Hamiltonians $\HamF{\nr}(\V{q})$;
we do not assume existence of minimizers of 
$\HamF{\nr}(\V{0})$ as in \cite{ChenVV}.
Some further modifications (the unitaries $U_{\q}$ and $U_R$ below) allow us
to drop the radial symmetry of the external potential assumed earlier.
(This restriction has been overcome in \cite{BLV2005}, too.)
By the use of $U_{\q}$ and $U_R$ it is also immaterial whether
$E_\nr$ attains its minimum at zero or not.
Finally, we remark that the momentum cut-off $\chi_\vr$
in Part~(b) of the proof is inserted in order to handle
all $V_-\in L^{\nf{3}{2}}(\R^3)$.
The crucial point is that we cannot expect the second derivatives of
the eigenfunctions $\psi_\lambda$ of $h_\nr(V_\lambda)$, $\lambda>1$,
to belong to $L^2(\R^3)$,
if the singularities of $V_-$ are not square-integrable.
Therefore, we have to regularize the expression
$\partial_{x_\nu}\psi_\lambda$ in momentum space before we take
further derivatives of it in some of the estimates below.
What is important to observe the enhanced binding is that the norm of
$\chi_\vr\,\partial_{x_\nu}\psi_\lambda$ does not vanish in the
limit $\lambda\downarrow1$ after suitable normalization of
$\psi_\lambda$.
This follows from Theorem~\ref{zero-resonance} whose proof is
inspired by \cite{SorStock}.

\smallskip

\begin{proof}[Proof of Theorem~\ref{thm-binding}: the NR case]
Let $\pmax_*$ be as in Lemma~\ref{le-mass-shell} and
$\p_*\in \R^3$, $|\p_*|\le\pmax_*$. 
Let $R\in\mathrm{SO}(3,\RR)$ be some rotation matrix
to be specified later on.
Setting $\q:=R\,\p_*$ and $\v{\p}:=\p-\pf+\vp(\V{G})$, $\p\in\R^3$,
we have the following identity for the fiber Hamiltonians,
\begin{equation}\label{ida1}
\HamF{\nr}(\q+\p)=\HamF{\nr}(\q)+  \p\cdot \v{\q}
+ \tfrac{\p^2}{2}\,,\qquad\p\in\R^3,
\end{equation}
in the sense of quadratic forms on, e.g., the domain
$\dom(\Hf)\supset\dom(\HamF{\nr}(\p))$, $\p\in\R^3$.
Since we assume that $(\omega,\vmu,\V{G})$ have the special form
of Example~\ref{ex-Gphys}(i) there is a unitary operator,
$U_R\in\cB(\CC^2\otimes\Fock_b)$,
depending on $R$ and the choice of the polarization vectors,
such that 
\begin{align}\label{ida2}
U_R\,\HamF{\nr}(\q)\,U_R^*&=\HamF{\nr}(\p_*)\,,
\qquad
U_R\,\p\cdot\v{\q}\,U_R^*
=\p\cdot(R\,\v{\p_*})\,;
\end{align}
see, e.g., \cite[Lemma~2.10]{Hiroshima2007} for details.
Next, we pick normalized 
$$
\phi_j\in\Ran\big(\id_{(-\infty,\E{\nr}(\p_*)+1/j)}(\HamF{\nr}(\p_*))\big)\,,\qquad
j\in\NN\,,
$$
so that $\phi_{j}\in\dom(\Hf^2)$ by Lemma~\ref{le-hoe}.
Furthermore, we pick normalized $\psi_1,\psi_2\in C_0^\infty(\R^3)$ 
satisfying $\psi_1=\bar{\psi}_1$, $\psi_2=-\bar{\psi}_2$, and let
$\{\phi^\prime_{j}\}_j$ denote another sequence of normalized vectors in 
$\dom(\Hf^2)\subset\CC^2\otimes\Fock_b$ to be specified later on
satisfying 
\begin{equation}\label{ida4}
\sps{\phi_j}{\phi_j^\prime}\in\R\,.
\end{equation} 
For $\eta <0$, we finally define unnormalized trial vectors 
$\psi_{\textrm{tr},j}$ by
\begin{equation}\label{ida5}
\psi_{\textrm{tr},j}:=U_{\q}^*\,\cF^*\hat{\psi}_{\textrm{tr},j}\,,\qquad
\hat{\psi}_{\textrm{tr},j}(\p):=
\hat{\psi}_1(\p)\,U_R^*\,\phi_{j}
+\eta\,\hat{\psi}_2(\p)\,U_R^*\,\phi^\prime_{j}\,,
\end{equation}
where $U_\q$ is the unitary operator defined in \eqref{def-Uq}.
By definition,
\begin{align}\label{ida6}
\Re\sps{\psi_1}{\psi_2}&=0\,,\qquad 
\sps{\psi_1}{\nabla_\x \psi_1}=\sps{\psi_2}{\nabla_\x\psi_2}=0\,,
\\\label{ida7}
\Re\sps{\psi_1}{h_{\nr}(V)\,\psi_2}&=0\,,\qquad
\Im\sps{\psi_1}{-i\nabla_\x\,\psi_2}=0\,,
\\\label{ida8}
\|\psi_{\textrm{tr},j}\|^2&=1+\eta^2.
\end{align}
In view of \eqref{fib-dec} and \eqref{ida8} we have
\begin{align*}
(1+\eta^2)\,\Egs{\nr}(V)&\le\SP{\psi_{\textrm{tr},j}}{\Ham{\nr}(V)\,\psi_{\textrm{tr},j}}
\\
&=\int_{\R^3}\sps{\hat{\psi}_{\textrm{tr},j}(\p)}{
\HamF{\nr}(\q+\p)
\,\hat{\psi}_{\textrm{tr},j}(\p)}_{\CC^2\otimes\Fock_b}\,d^3\p
+\sps{\psi_{\textrm{tr},j}}{V\,\psi_{\textrm{tr},j}}\,.
\end{align*}
Employing \eqref{ida1}--\eqref{ida7} we find after some easy computations
\begin{align}\nonumber
\Egs{\nr}(V)-(c_V+\const\,d^2)\,\eta^2
&\le
\sps{\psi_1}{h_{\nr}(V)\,\psi_1}+ \sps{\phi_j}{ \HamF{\nr}(\p_*)\,\phi_j}
\\ 
\nonumber
&\quad+ \eta^2 \,\sps{\psi_2}{ h_{\nr}(V)\,\psi_2}
+\eta^2\,\sps{\phi_j'}{ \HamF{\nr}(\p_*)\,\phi_j'}
\\ 
\nonumber
&\quad+ 2\eta\, 
\Re\big\{\sps{\psi_1}{\psi_2}\,\sps{\phi_j}{\HamF{\nr}(\p_*)\,\phi_j'} 
\big\}
\\ \label{EnergyEst1}
&\quad+ 2 \eta \,  \sps{\psi_1}{-\imath \nabla_\x \psi_2}\cdot \Re
\sps{R\,\v{\p_*}\,\phi_j}{\phi_j'}\,.
\end{align}
In the first line we also applied the lower bound \eqref{lb-Ham}.
By virtue of Corollary~\ref{cor-ULB} we find some $\const_0>0$,
depending only on $\pmax_*\equiv\pmax_*(d)$ and the quantities
$d$, $g$, and $r$ in Hypothesis~\ref{hyp-G},
such that
\begin{equation}\label{eq-D1}
(3/2)\,\liminf_{j\to\infty} \sps{\phi_{j}}{ \v{\p_*}^2 \phi_{j}}\ge
\liminf_{j\to\infty} \sps{\phi_{j}}{ \hat{\tau}_\nr(\p_*)\,\phi_{j}}\ge\const_0\,.
\end{equation}
By the higher order estimates of Lemma~\ref{le-hoe}
we know that $\sup_j\|v_{\nu}(\p_*)\,\phi_{j}\|$ is finite,
where $v_{\nu}(\p_*)$ is the $\nu$-th component of $\v{\p_*}$.
Passing to a suitable subsequence, if necessary, we may thus define
\begin{equation*}
c_1(\nu) := \lim_{j\to\infty}\|v_{\nu}(\p_*)\,\phi_{j}\|\,,
\quad\nu=1,2,3\,,\qquad
c_1^2:=\frac{1}{3}\sum_{\nu=1}^3c_1(\nu)^2\ge2\const_0/9\,,
\end{equation*}
For $\nu_0\in\{1,2,3\}$ with $c_1(\nu_0)^2\ge c_1^2$, we set
\begin{equation*}
\phi'_{j}:= v_{\nu_0}(\p_*)\,\phi_{j} 
\cdot \|v_{\nu_0}(\p_*) \,\phi_{j}\|^{-1}.
\end{equation*}
This choice is allowed since $\sps{\phi_{j}}{\phi'_{j}}\in \R$ due to
the fact that $v_{\nu_0}(\p_*)$ is symmetric and $\phi'_{j}\in \dom(\Hf^2)$ by
Lemma~\ref{le-hoe} and a straightforward calculation.
Furthermore, 
$\sps{\phi'_{j}}{\HamF{\nr}(\p_*)\, \phi'_{j}}
\le
\const'(\pmax_*,d)\,\|(\Hf+1)^2\,\phi_j\|^2/c_1(\nu_0)$
and the higher order estimates ensure the
existence
of some $c_2>0$, depending only on $\pmax_*$ and $d$, such that
$\sps{\phi'_{j}}{\HamF{\nr}(\p_*)\, \phi'_{j}}\le c_2$, for all $j$.
Since also 
$\sup_j|\sps{v_\nu(\p_*)\,\phi_{j}}{\phi'_{j}}|\le
\const(\pmax_*,d)\,\sup_j\|(\Hf+1)\,\phi_j\|^2<\infty$
we may define, at least along some suitable subsequence, 
\begin{align*}
\valpha&:=
\lim_{j\to\infty}\Re\sps{\v{\p_*}\,\phi_{j}}{\phi'_{j}}\,,\qquad
\textrm{so that}\qquad
|\valpha |\ge c_1\,.
\end{align*}
In fact,
the $\nu_0$-component of $\valpha$ is just equal to
$c_1(\nu_0)$.
We are still free to choose the rotation $R$ in \eqref{EnergyEst1}. We set
$\vbeta:=\sps{\psi_1}{-\imath \nabla_\x\psi_2}\in\RR^3$ and
choose it such that 
$\vbeta\cdot(R\,\valpha)=|\valpha|\,|\vbeta|\ge c_1\,|\vbeta|$.
Plugging the new notation into \eqref{EnergyEst1}, 
passing to the limit $j\to\infty$, and taking also
$$
\lim_{j\to\infty}\sps{\phi_j}{\HamF{\nr}(\p_*)\,\phi_j'}
=\E{\nr}(\p_*)\lim_{j\to\infty}\sps{\phi_j}{\phi_j'}\in\R
$$
into account, we readily arrive at
\begin{align}\nonumber
\Egs{\nr}&(V)-\E{\nr}(\p_*)-\sps{\psi_1}{ h_{\nr}(V)\,\psi_1}
\\ \label{eq-D4}
&\le
\eta^2 \, \big\{\sps{\psi_2}{h_{\nr}(V)\,\psi_2}+c_2+c_{\wt{V}}+\const\,d^2\big\}
+ 2\eta\,c_1\,|\vbeta|\,.
\end{align}
Here $\wt{V}$ is any potential satisfying the condition
in \eqref{def-cV}, i.e. $c_{\wt{V}}<\infty$, and $\wt{V}\le V$.
In particular, the curly bracket $\{\cdots\}$ in \eqref{eq-D4}
is strictly positive. Minimizing the RHS with respect to $\eta<0$ 
and applying \eqref{joice} on the LHS
we thus
obtain
\begin{align}
\Egs{\nr}&(V)-\Egs{\nr}(0)-\sps{\psi_1}{ h_{\nr}(V)\,\psi_1}\label{eq-D44}
\le
\frac{-|\vbeta|^2/\const_1}{\sps{\psi_2}{h_{\nr}(V)\,\psi_2}+c_{\wt{V}}+\const_1}\,,
\end{align}
where $\const_1>0$ depends only on $\pmax_*=\pmax_*(d)$, $d$, $g$, and $r$
and $\vbeta=\sps{\psi_1}{-i\nabla_\x\psi_2}$.
Since $C_0^\infty(\R^3)$ is a form core for $h_\nr(V)$ 
an approximation argument shows that
\eqref{eq-D44} is actually valid, for every real-valued normalized 
$\psi_1\in\form(h_\nr(V))$.
If $\form(h_\nr(V))\subset H^1(\R^3)$, then \eqref{eq-D44}
applies to every purely imaginary normalized $\psi_2\in\form(h_\nr(V))$.

\smallskip

\noindent
(a):
Under the assumptions of Theorem~\ref{thm-binding}(a)
the electron operator
$h_{\nr}(V)$ has a normalized, real-valued ground state 
eigenfunction, $\psi$. We set $\psi_1:=\psi$.
Since the distributional Laplacian $\Delta_\x\psi\in\sD'(\R^3)$
is non-zero we find some $\phi\in{C}^\infty_0(\R^3,\R)$ 
such that $\sps{\psi}{-\Delta_\x\phi}>0$.
We choose $\mu\in \{1,2,3\}$ such that 
$\sps{\psi}{-\partial_{x_\mu}^2\phi}>0$ and
set $\psi_2:=-i\partial_{x_\mu}\phi/\|\partial_{x_\mu}\phi\|$. 
Then
$\vbeta\not=0$
and the assertion follows from \eqref{eq-D44}.

\smallskip

\noindent
(b):
We replace $V$ by $V_\mu$, for $\mu\in(0,1)$.
Since $\lambda=1$ is the coupling constant threshold
there is a normalized, strictly positive ground state 
eigenfunction $\psi_\lambda>0$ of $h_{\nr}(V_\lambda)$,
for all $\lambda>1$, i.e.
$h_{\nr}(V_\lambda)\,\psi_\lambda=e_\lambda\, \psi_\lambda$ with $e_\lambda<0$.
According to Theorem~\ref{zero-resonance}
we find a sequence $\lambda_j\downarrow1$
such that the vectors
$(-\Delta)^{\nf{1}{2}}\psi_{\lambda_j}/\|(\lambda_j\,V_-)^{\nf{1}{2}}\psi_{\lambda_j}\|$
converge to some non-zero limit.
Passing to a subsequence, if necessary, we may assume that
$\|(-\Delta)^{\nf{1}{2}}\psi_{\lambda_j}\|
\le 3^{\nf{1}{2}} \|\partial_{x_{\nu} }\psi_{\lambda_j}\|$,
for some fixed $\nu\in\{1,2,3\}$ and all $j$.
Then the vectors
$\partial_{x_\nu}\psi_{\lambda_j}/\|(\lambda_j\,V_-)^{\nf{1}{2}}\psi_{\lambda_j}\|$
also have a non-zero limit.
Let $\chi_{\vr}:=\id_{(-\Delta)^{\nf{1}{2}}\le\vr}$
be a cut-off in momentum space.
Then we find $\alpha,\vr>0$ such that
$\|\chi_\vr\,\partial_{x_\nu}\psi_{\lambda_j}\|\ge\alpha\,
\|(\lambda_j\,V_-)^{\nf{1}{2}}\psi_{\lambda_j}\|$, 
for large $j$.
Now, we choose 
$\psi_1:= \psi_{\lambda_j}$ and
$\psi_2:=-i\chi_\vr\,\partial_{x_{\nu}}\psi_{\lambda_j}/
\|\chi_\vr\,\partial_{x_{\nu}}\psi_{\lambda_j}\|$.
Notice that $\psi_2$ is purely imaginary
because $\psi_{\lambda_j}$ is real-valued and the
cut-off $\chi_\vr$ is symmetric about the origin in 
momentum space.

Since $V_\pm\in (L^{\nf{3}{2}}+L^\infty)(\RR^3)$ we know that
$\psi_1\in H^1(\RR^3)$ and we may write $\vbeta$ as
$\vbeta=\sps{-i\nabla_\x\psi_{\lambda_j}}{-i\chi_\vr\,\partial_{x_{\nu}}\psi_{\lambda_j}}/\|\chi_\vr\,\partial_{x_{\nu}}\psi_{\lambda_j}\|$,
which shows that
$$
| \vbeta|\ge\|\chi_\vr\,\partial_{x_{\nu}}\psi_{\lambda_j}\|\ge\alpha\,
\|(\lambda_j\,V_-)^{\nf{1}{2}}\psi_{\lambda_j}\|\,.
$$
Furthermore, we choose
$\wt{V}:=-2V_-$ in \eqref{eq-D44}
for all $\lambda\le2$. Applying
\eqref{eq-D44} with $V=V_\mu$, taking the above remarks into account, and using 
$h_{\nr}(V_\mu)=h_{\nr}(V_{\lambda})+(\lambda-\mu)\,V_-$ 
we arrive at
\begin{align}\nonumber
\Egs{\nr}(V_\mu )&-\Egs{\nr}(0)
-e_{\lambda_j}-(\lambda_j-\mu)\sps{\psi_{\lambda_j}}{ V_-\psi_{\lambda_j}}
\le \frac{-\alpha^2\,\|(\lambda_j\,V_-)^{\nf{1}{2}}\psi_{\lambda_j}\|^2/\const_1}{
\sps{\psi_2}{h_{\nr}(V_\mu)\,\psi_2}+c_{-2V_-}+\const_1},
\end{align}
for $0<\mu<1<\lambda_j\le2$, where
\begin{align*}
\sps{\psi_2}{
h_{\nr}(V_\mu)\,\psi_2}
&\le\vr^2/2+\|V_{+,2}\|_\infty+\const\,\|V_{+,1}\|_{\nf{3}{2}}\,\vr^2
=:\const(V_\pm)
\,,
\end{align*} 
since $\supp(\wh{\psi}_2)\subset\{|\p|\le\vr\}$
and $\sps{\psi_2}{V_{+,1}\psi_2}\le\const\,\|V_{+,1}\|_{\nf{3}{2}}^2
\,\|\nabla\psi_2\|^2$ by H\"older's and Sobolev's inequalities.
Hence,
\begin{align*}
\frac{\Egs{\nr}(V_\mu )-\Egs{\nr}(0)}{\|(\lambda_j\,V_-)^{\nf{1}{2}}\psi_{\lambda_j}\|^2}
\,-\,\frac{\lambda_j-\mu}{\lambda_j}\,
\le \frac{-\alpha^2/\const_1}{\const(V_\pm)+c_{-2V_-}+\const_1}=:-\const_\star
\,.
\end{align*}
Now, fix $\delta>0$ such that
$2\delta=\const_\star\equiv\const_\star(d,g,r,V_\pm)$,
and then fix $j_0$ such that 
$(\lambda_{j_0}-1+\delta)/\lambda_{j_0}-\const_\star\le-\const_\star/4$.
Then $\|(\lambda_{j_0}\,V_-)^{\nf{1}{2}}\psi_{\lambda_{j_0}}\|$
is some $(d,g,r,V_\pm)$-dependent constant and we conclude.
\end{proof}
%
%
\section{The semi-relativistic case}\label{SecSR}

\noindent
In this section we prove the statements of 
Theorem~\ref{thm-binding} in the SR case.
We also derive the bound \eqref{Diff4b} asserted in
Theorem~\ref{thm-ub}.

Recall the definitions of $S$ and $T_\sr$ in \eqref{def-S}
and $\v{\p}=\p-\pf+\vp(\V{G})$.
We have
$S=T_\sr/(T_\sr+1)$.
Given $\p,\p_*\in\R^3$, we shall use
the following notation for resolvents, where $\eta>0$,
\begin{align*}
R_1(\eta)&:=(w(\p_*)^2+\p^2+1+\eta)^{-1},
\quad
R_2(\eta):= (w(\p_*+\p)^2+1+\eta)^{-1}.
\end{align*}

\begin{lemma}\label{LemSemRel}
(a) For all $\phi\in\dom(\Hf)$, $\|\phi\|=1$, 
and $\p,\p_*\in\R^3$, we have
\begin{align}\nonumber
\sps{\phi}{\HamF{\sr}(\p_*+\p)\,\phi}
&\le \sps{\phi}{\HamF{\sr}(\p_*)\,\phi}
-S(\p)\,\sps{\phi}{\hat\tau_\sr(\p_*)\,(\hat\tau_\sr(\p_*)+1)^{-1}\,\phi}
\\\label{Diff2} 
&\quad+T_\sr(\p)+\int_0^\infty 2 \sps{R_1(\eta)\,\phi}{\p\cdot\v{\p_*}\,
R_1(\eta)\,\phi}\,\frac{\eta^{\nf{1}{2}}d\eta}{\pi}\,.
\end{align}
(b) The bound \eqref{Diff4b} holds true.
\end{lemma}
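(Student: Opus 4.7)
Squaring $w(\p_*+\p)=\vsigma\cdot(\v{\p_*}+\p)$ via $(\vsigma\cdot\V{u})^2=|\V{u}|^2+i\vsigma\cdot(\V{u}\times\V{u})$, and using that $\p$ is a c-number (so $\p\times\p=0$ and $\p\times\v{\p_*}+\v{\p_*}\times\p=0$), gives the clean algebraic identity $w(\p_*+\p)^2=w(\p_*)^2+2\p\cdot\v{\p_*}+\p^2$. Writing $A:=w(\p_*)^2+1$, $C:=A+\p^2$, $B:=w(\p_*+\p)^2+1$ and $Y:=2\p\cdot\v{\p_*}$, we have $B=C+Y$, and (a) amounts to bounding $\sps{\phi}{(\sqrt{B}-\sqrt{A})\phi}=\sps{\phi}{(\sqrt{B}-\sqrt{C})\phi}+\sps{\phi}{(\sqrt{C}-\sqrt{A})\phi}$ by the RHS of \eqref{Diff2} minus $\sps{\phi}{\Hf\phi}$.

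\textbf{First piece.} Since the square root is operator concave on $[0,\infty)$ and $C+\alpha Y=w(\p_*+\alpha\p)^2+1+(1-\alpha^2)\p^2\ge 1$ for $\alpha\in[0,1]$, Jensen's inequality yields $\sqrt{C+Y}-\sqrt{C}\le(\sqrt{C+\alpha Y}-\sqrt{C})/\alpha$ for $\alpha\in(0,1]$ in quadratic-form sense on $\dom(\Hf)$. Letting $\alpha\downarrow 0$ and differentiating the integral representation $\sqrt{X}=\frac{1}{\pi}\int_0^\infty X(X+\eta)^{-1}\eta^{-1/2}d\eta$ under the integral produces exactly the last term in \eqref{Diff2}; uniform control in $\alpha$ to justify the interchange comes from $C\ge 1$ and the standard bounds \eqref{StandardEstimate} on $Y R_1(\eta)\phi$.

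\textbf{Second piece.} Since $\p^2$ is a c-number, $\sqrt{C}-\sqrt{A}$ is a bounded function of $A$ alone, and functional calculus reduces the operator bound to the scalar inequality $\sqrt{t+\p^2}-\sqrt{t}\le T_\sr(\p)-S(\p)(1-t^{-1/2})$ for $t\ge 1$. Both sides, and their $t$-derivatives, agree at $t=1$; after clearing denominators and setting $a:=\sqrt{t}$, $b:=\sqrt{\p^2+1}$ the derivative claim becomes $\bigl((a^2-1)(b^2-1)+a^2b\bigr)^2\ge a^2(a^2+b^2-1)$, which rearranges to $(a^2-1)(b^2-1)\bigl[(a^2-1)(b^2-1)+a^2(2b+1)\bigr]\ge 0$ and is manifest for $a,b\ge 1$. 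Using $\hat\tau_\sr(\p_*)(\hat\tau_\sr(\p_*)+1)^{-1}=1-A^{-1/2}$ and adding $\sps{\phi}{\Hf\phi}$ on both sides yields (a).

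\textbf{Proof of (b).} Both $T_\sr(\p)$ and $S(\p)$ are even in $\p$ while the integral term in \eqref{Diff2} is odd, so applying (a) with $\p$ and with $-\p$ and averaging kills the integral. The variational principle then gives, for every normalized $\phi\in\dom(\Hf)$,
\begin{equation*}
\tfrac{1}{2}\big(\E{\sr}(\p_*+\p)+\E{\sr}(\p_*-\p)\big)\le\sps{\phi}{\HamF{\sr}(\p_*)\phi}-S(\p)\sps{\phi}{\hat\tau_\sr(\p_*)(\hat\tau_\sr(\p_*)+1)^{-1}\phi}+T_\sr(\p).
\end{equation*}
Taking $\phi=\phi_j$ from a minimizing sequence for $\HamF{\sr}(\p_*)$ drives the first expectation to $\E{\sr}(\p_*)$, while \eqref{hans3} (applied with $\pmax=|\p_*|$) supplies a constant $\gamma(\p_*)\in(0,1)$ depending only on $|\p_*|,d,g,r$ with $\liminf_j\sps{\phi_j}{\hat\tau_\sr(\p_*)(\hat\tau_\sr(\p_*)+1)^{-1}\phi_j}\ge\gamma(\p_*)$, yielding \eqref{Diff4b}. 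The only substantive step is the scalar inequality of the second piece; operator concavity, symmetrization, and Corollary~\ref{cor-ULB} do the rest.
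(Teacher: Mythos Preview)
Your proof is correct and follows the same architecture as the paper's: the identity $w(\p_*+\p)^2=w(\p_*)^2+\p^2+2\p\cdot\v{\p_*}$, a resolvent/concavity bound for the cross term $2\p\cdot\v{\p_*}$ yielding the $R_1(\eta)$--integral, a scalar estimate for $\sqrt{w(\p_*)^2+\p^2+1}$ in terms of $\hat\tau_\sr(\p_*)$ and $T_\sr(\p)$, and then symmetrization plus \eqref{hans3} for part~(b). Your ``first piece'' via operator concavity and differentiation of the integral representation is exactly the paper's double resolvent expansion with the positive $R_2$--term dropped, just packaged differently.

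The only genuine difference is the ``second piece'': the paper writes $w(\p_*)^2+\p^2+1=a^2-2b$ with $a=\hat\tau_\sr(\p_*)+T_\sr(\p)+1$, $b=\hat\tau_\sr(\p_*)\,T_\sr(\p)$, applies the AM--GM bound $\sqrt{a^2-2b}\le a-b/a$, and then uses $(\hat\tau_\sr(\p_*)+T_\sr(\p)+1)^{-1}\ge(T_\sr(\p)+1)^{-1}(\hat\tau_\sr(\p_*)+1)^{-1}$. You instead verify the same scalar inequality $\sqrt{t+\p^2}-\sqrt t\le T_\sr(\p)-S(\p)(1-t^{-1/2})$ directly by checking $f'(t)\le g'(t)$: rationalizing $1/a-1/\sqrt{a^2+b^2-1}$ via the conjugate indeed reduces the derivative claim to $(a^2-1)(b^2-1)+a^2b\ge a\sqrt{a^2+b^2-1}$, and your factorization of the squared inequality is correct. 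Both routes are elementary; the paper's AM--GM is perhaps slightly more transparent, while yours avoids the auxiliary operator inequality on the resolvent of $\hat\tau_\sr(\p_*)+T_\sr(\p)+1$.
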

%
%
\begin{proof}
The following proof of (a) is a strengthened version
of an argument used to derive a non-strict inequality
on the binding energy in \cite{KMS2009a}.

As a consequence of Lemma~A.1 of \cite{KM2012a}
both resolvents $R_1(\eta)$ and
$R_2(\eta)$ map $\dom(\Hf^\nu)$ into itself, for every $\nu\ge1/2$.
Taking this into account and writing
$w(\p_*+\p)^2\,\phi= 
(w(\p_*)^2+\p^2)\,\phi+2 \p\cdot\v{\p_*}\,\phi$,
for $\phi\in\dom(\Hf^2)$, we readily obtain
\begin{align*}
\sps{\phi}{R_1(\eta)\,\phi} &
= \sps{\phi}{R_2(\eta)\,\phi}+ 
2 \sps{R_2(\eta)\,\phi}{\p\cdot\v{\p_*}\,R_1(\eta)\,\phi}
\\
&= \sps{\phi}{R_2(\eta)\,\phi}+ 
2 \sps{R_1(\eta)\,\phi}{\p\cdot\v{\p_*}\,R_1(\eta)\,\phi}
\\
&\quad     
-4 \sps{R_2(\eta)\,\p\cdot\v{\p_*}\,R_1(\eta)\,\phi }{\p\cdot\v{\p_*}\,R_1(\eta)\,\phi}\,.
\end{align*}
Note that the expression in the last line is negative
since $R_2(\eta)$ is positive.
Dropping this term and
using the formula 
$$
A^{\nf{1}{2}}\phi=
\int_0^\infty (1-\eta
(A+\eta)^{-1})\,\phi\,\frac{d\eta}{\pi\,\eta^{\nf{1}{2}}}\,,
$$
valid for any
positive operator $A$ in some Hilbert space and $\phi\in \dom(A)$, we
obtain,
for normalized $\phi$,
\begin{align}\nonumber
\SP{\phi}{\hat\tau_\sr(\p_*+\p)\,\phi}
&\le\SP{\phi}{(w(\p_*)^2+\p^2+1)^{\nf{1}{2}}\,\phi}-1
\\ \label{Diff1}
& \quad+ \int_0^\infty 2 \sps{R_1(\eta)\,\phi}{\p\cdot\v{\p_*}
\,R_1(\eta)\,\phi}\,\frac{\eta^{\nf{1}{2}}\,d\eta}{\pi}\,,
\end{align}
which makes sense since 
$\|\Hf\,R_1(\eta)\,(\Hf+1)^{-1}\|\le\const(d)\,(1+\eta)^{-1}$
and, hence,
\begin{equation}\label{juergen}
\|\v{\p_*}\,R_1(\eta)\,\psi\|\le 
\const(\p_*,d)\,(1+\eta)^{-1}\,\| (\Hf+1)\,\psi\|\,,
\quad\psi\in\dom(\Hf)\,,
\end{equation} 
by Lemma~A.1 in \cite{KM2012a}.
Next, we observe that
$$
w(\p_*)^2+\p^2+1=a^2-2b\,,\quad
a:=\hat\tau_\sr(\p_*)+T_\sr(\p)+1\,,\;\;
b:=\hat\tau_\sr(\p_*)\,T_\sr(\p)\,.
$$
In a spectral representation of $w(\p_*)$ we may now apply the
inequality between
geometric and arithmetic means, 
$\sqrt{a(a-2b/a)}\le a-b/a$,
to see that the terms in the first line of the RHS of \eqref{Diff1}, 
where $\|\phi\|=1$,
are not greater than
$$
\sps{\phi}{\hat\tau_\sr(\p_*)\,\phi}+T_\sr(\p)
-T_\sr(\p)\,\SP{\phi}{\hat\tau_\sr(\p_*)\,(\hat\tau_\sr(\p_*)+T_\sr(\p)+1)^{-1}\,\phi}\,,
$$
where
$(\hat\tau_\sr(\p_*)+T_\sr(\p)+1)^{-1}\ge(T_\sr(\p)+1)^{-1}(\hat\tau_\sr(\p_*)+1)^{-1}$.
Adding $\sps{\phi}{\Hf\,\phi}$ on both sides of \eqref{Diff1} and
employing these bounds we arrive at \eqref{Diff2} with
$\phi\in\dom(\Hf^2)$.
Since, for every $\q\in\R^3$, we know that
$\dom(\HamF{\sr}(\q))=\dom(\Hf)$
and the graph norms of $\HamF{\sr}(\q)$ and $\Hf$ are equivalent
\cite{KM2012a} we obtain \eqref{Diff2}
with $\phi\in\dom(\Hf)$ by an approximation argument using \eqref{juergen}.

\smallskip

\noindent
(b):
The integral in the second line of \eqref{Diff2} is an odd function of $\p$
and cancels out when we add a copy of \eqref{Diff2} with
$\p$ replaced by $-\p$ to it. 
Therefore, \eqref{Diff4b} follows from \eqref{Diff2} 
upon using $\E{\sr}(\p_*\pm\p)\le \HamF{\sr}(\p_*\pm\p)$, inserting
normalized vectors $\phi_j$ in the range of the spectral projection
of $\HamF{\sr}(\p_*)$ corresponding to the interval
$(-\infty,\E{\sr}(\p_*)+1/j]$ and applying \eqref{hans3}.
\end{proof}

\smallskip

\begin{proof}[Proof of Theorem~\ref{thm-binding}: The SR case]
Let $\pmax_*>0$ be the parameter appearing in
Lemma~\ref{le-mass-shell} and
set $\q:= \p_*$ in \eqref{fib-dec}, where $|\p_*|\le\pmax_*$. 
We apply \eqref{Diff2} to estimate the 
expectation of 
$$
\cF\,U_{\p_*}\,{\mathbbm{H}}_{\sr}(V)\,U_{\p_*}^*\cF^*
=\int_{\RR^3}^\oplus\HamF{\sr}(\p_*+\p)\,d^3\p+\cF\,V\,\cF^*
$$ 
in a trial vector
$\hat{\psi}_{\textrm{tr}}(\p)=\hat{\psi}_1(\p)\,\phi$
with $\phi\in \dom(\Hf^2)$, $\|\phi\|=1$, and
$\psi_1\in C_0^\infty(\R^3)$, 
$\bar{\psi}_1=\psi_1$, i.e. $|\hat{\psi}_1(\p)|=|\hat{\psi}_1(-\p)|$.
Since the integral in the second line of 
\eqref{Diff2} is an odd function of $\p$
it drops out when we integrate with respect to the symmetric
measure $|\hat{\psi}_1(\p)|^2d^3\p$ and we arrive at
\begin{align}\nonumber
\Egs{\sr}(V)\,\|\psi_1\|^2
&\le \int_{\RR^3} 
|\hat{\psi}_1(\p)|^2\SP{\phi}{\HamF{\sr}(\p_*+\p)\,\phi}\,d^3\p
+\sps{\psi_1}{V\,\psi_1}
\\
\nonumber
&\le \sps{\psi_1}{h_{\sr}(V)\,\psi_1}
+\|\psi_1\|^2\,\sps{\phi}{\HamF{\sr}(\p_*)\,\phi}
\\
&\qquad\label{SR-EnergyEst}
- \int_{\RR^3}|\hat{\psi}_1(\p)|^2\,S(\p)\, d^3\p\,
\SP{\phi}{\hat\tau_\sr(\p_*)(\hat\tau_\sr(\p_*)+1)^{-1}\phi}\,.
\end{align}
Let $\phi_j$ be as in the proof of Lemma~\ref{LemSemRel}(b)
so that  
$\sps{\phi_j}{\HamF{\sr}(\p_*)\,\phi_j}\to\E{\sr}(\p_*)$.
Substituting $\phi_j$ for $\phi$ in \eqref{SR-EnergyEst},
passing to the limit $j\to \infty$,
and taking \eqref{hans3} into account
we deduce that
\begin{equation*}
(\Egs{\sr}(V)-\E{\sr}(\p_*))\,\|\psi_1\|^2
\le \sps{\psi_1}{h_{\sr}(V)\, \psi_1}-\const\, 
\sps{\psi_1}{\Shat\,\psi_1}\,,
\quad\hat{\p}:=-i\nabla_\x\,,
\end{equation*}
where $\const>0$ depends only on $\pmax_*\equiv\pmax_*(d)$, $d$, $g$, and $r$.
Applying \eqref{joice} 
yields
\begin{equation}\label{john}
(\Egs{\sr}(V)-\Egs{\sr}(0))\,\|\psi_1\|^2
\le \sps{\psi_1}{h_{\sr}(V)\, \psi_1}-\const\, 
\sps{\psi_1}{\Shat\,\psi_1}\,.
\end{equation}
Since $C_0^\infty(\R^3)$ is a form core for $h_\sr(V)$ the previous
bound actually
holds true, for every real-valued normalized
$\psi_1\in\form(h_\sr(V))$.

\smallskip

\noindent
(a): First, we prove the increase of binding energy. 
Under the conditions of Theorem~\ref{thm-binding}(a) we may choose
$\psi_1$ to be a real-valued, normalized ground state
eigenfunction of $h_{\sr}(V)$ corresponding to the
ground state energy $e_\sr(V)<0$. Then 
$\sps{\psi_1}{\Shat\,\psi_1}>0$
is a constant depending only on $V$ and
\eqref{john} yields
\begin{equation*}
\Egs{\sr}(V)-\Egs{\sr}(0)
\le e_\sr(V) -\const\,\sps{\psi_1}{\Shat\,\psi_1}\,.
\end{equation*}
(b): Next, we consider enhanced binding abilities.
Let the conditions of Theorem~\ref{thm-binding}(b)
be satisfied so that $\mu=1$ is the coupling constant threshold
for the family of potentials $V_\mu=V_+-\mu\,V_-$.
For $\lambda>1$, let $\psi_\lambda$ be the 
positive eigenvector of $h_{\sr}(V_\lambda)$ 
corresponding to the ground state energy $e_\lambda:=e_\sr(V_\lambda)<0$. 
We require that the eigenvector 
of the Birman-Schwinger operator $\Bir{\sr}{e_\lambda}{V_\lambda}$
(defined in \eqref{def-Bir} below)
corresponding to $\psi_\lambda$ is normalized.
(Compare Lemma~\ref{LemBirSchwing}, where we recall the
appropriate Birman-Schwinger principle.)
For $0<\mu\le1<\lambda$, we then infer from \eqref{john} 
with $\psi_1:=\psi_\lambda$ that
\begin{align}\nonumber
(\Egs{\sr}(V_\mu)-\Egs{\sr}(0))\|\psi_\lambda\|^2&
\le \sps{\psi_\lambda}{h_{\sr}(V_\mu)\,\psi_\lambda}
-\const\,\sps{\psi_\lambda}{\Shat\,\psi_\lambda}
\\\label{john2}
&\le e_\lambda\|\psi_\lambda\|^2 
-(\mu-\lambda)\sps{\psi_\lambda}{V_-\psi_\lambda}
-\const\,\sps{\psi_\lambda}{\Shat\,\psi_\lambda}\,.
\end{align}
Now, by Theorem~\ref{zero-resonance}
there exist $\lambda_j\downarrow1$ such that
$\{\Shat^{\nf{1}{2}}\,\psi_{\lambda_j}\}_j$
converges to some non-zero limit and, hence, 
$
\|\Shat^{\nf{1}{2}}\,\psi_{\lambda_j}\|\ge\alpha
$, for some $\alpha>0$ and large $j$.
Furthermore, the normalization condition imposed on $\psi_\lambda$
precisely says that
$\|\lambda\,\sps{\psi_\lambda}{V_-\,\psi_\lambda}\|=1$, $\lambda>1$;
see Lemma~\ref{LemBirSchwing}. 
Taking these remarks into account we deduce from \eqref{john2} that 
\begin{equation}\label{john3}
(\Egs{\sr}(V_\mu)-\Egs{\sr}(0))\,\|\psi_{\lambda_j}\|^2 
\le (\lambda_j-\mu)/ \lambda_j-\const\,\alpha^2,
\end{equation}
for sufficiently large $j$. 
Now, we conclude as in the NR case.
\end{proof}

%
\appendix
%

\section{Birman-Schwinger operators and zero-resonances}
\label{app-BS}

\noindent
In this section we consider the electronic one-particle Hamiltonians 
$$
h_\sharp:=h_\sharp(V)=\That{\sharp}+V\,,\qquad\sharp\in\{\nr,\sr\}\,,
$$
acting in the Hilbert space $L^2(\R^3)$, with
\begin{equation}\label{AppC-eq1}
 \That{\nr}=\tfrac{1}{2}\,\hat{\p}^2,\qquad\That{\sr}=(\hat{\p}^2+1)^{\nf{1}{2}}-1\,,
\qquad\hat{\p}=-\imath\nabla_{\x}\,,
\end{equation}
and for a certain class of short range potentials $V$.
We shall first recall the Birman-Schwinger principle for energies $e<0$
(Subsection~\ref{ssec-BS-nonsing}). After that we
discuss the Birman-Schwinger kernels 
and the existence and some
properties of zero-resonances 
in the singular limit $e\uparrow0$
(Subsection~\ref{ssec-BS-sing}).

 
\subsection{Non-singular Birman-Schwinger kernels}\label{ssec-BS-nonsing}

\noindent
Let $\sharp\in\{\nr,\sr\}$
and let $V_+\ge0$ and  $V_-\ge0$ be the positive and 
negative parts of $V=V_+-V_-$, respectively. Assume that $V_\pm\in L^1_{\textrm{loc}}(\R^3)$
and that $V_-$ is $\That{\sharp}$-form-bounded with 
relative form bound $a<1$. Then
$\That{\sharp}+V_+$ and $\That{\sharp}+V$ define semi-bounded, closed
quadratic forms.
Let $h_\sharp^+$ and $h_\sharp$ denote the self-adjoint
operators representing these forms, respectively.
By the KLMN theorem $\form(h_\sharp^+)=\form(h_\sharp)\subset\dom(V_-^{\nf{1}{2}})$.
Then we may define a Birman-Schwinger operator, 
$\Bir{\sharp}{e}{V}$, for every $e<0$, 
by the formulas
\begin{equation}\label{def-Bir}
\Bir{\sharp}{e}{V}:= Y_\sharp(e)Y_\sharp(e)^*,\quad 
Y_\sharp(e):= V_-^{\nf{1}{2}}\,(h_\sharp^+-e)^{-\nf{1}{2}}.
\end{equation}
In fact, 
$Y_\sharp(e)$ is well-defined on $L^2(\RR^3)$ and bounded by the closed graph theorem
and $\Bir{\sharp}{e}{V}\in\cB(L^2(\R^3))$ is self-adjoint.
In the next lemma we compare the eigenspaces
\begin{align*}
\sB_\sharp(e)&:=\{ \psi\in L^2(\R^3)\,:\, \Bir{\sharp}{e}{V}\,\psi=\psi\}\,,
\\
\sF_\sharp(e)&:=\{ \phi\in\dom(h_\sharp)\,:\, h_\sharp\, \phi= e\,\phi\}\,.
\end{align*}

\begin{lemma}\label{LemBirSchwing}
Let $\sharp\in\{\nr,\sr\}$. Under the above assumptions on $V$
and, for every $e<0$,
there is a linear bijection 
$b\equiv b_\sharp(e):\sB_\sharp(e)\to\sF_\sharp(e)$
satisfying $\|b\,\psi\|\le |e|^{-\nf{1}{2}}\,\|\psi\|$, for all
$\psi\in\sB_\sharp(e)$. It is given by
\begin{align}\label{for-b}
b\,\psi&:=(h_\sharp^+-e)^{-\nf{1}{2}}\,Y_\sharp(e)^*\,\psi\,,\quad\psi\in\sB_\sharp(e)\,,
\\\label{for-binv}
b^{-1}\phi&=V_-^{\nf{1}{2}}\phi\,,\quad\phi\in\sF_\sharp(e)\,.
\end{align}
\end{lemma}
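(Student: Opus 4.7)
The plan is to verify directly that the formulas \eqref{for-b} and \eqref{for-binv} define mutually inverse maps $\sB_\sharp(e)\leftrightarrow\sF_\sharp(e)$, using only the definitions of $Y_\sharp(e)$ and $\Bir{\sharp}{e}{V}$ together with the KLMN theorem and the second representation theorem for closed semi-bounded forms. As a preliminary observation, the hypothesis that $V_-$ has $\That{\sharp}$-form bound strictly less than one gives $\form(h_\sharp)=\form(h_\sharp^+)\subset\dom(V_-^{\nf{1}{2}})$, so $V_-^{\nf{1}{2}}\phi\in L^2(\R^3)$ is well defined for every $\phi\in\sF_\sharp(e)\subset\dom(h_\sharp)$. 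On the other hand, since $(h_\sharp^+-e)^{-\nf{1}{2}}$ maps $L^2$ into $\form(h_\sharp^+)\subset\dom(V_-^{\nf{1}{2}})$, the identity $Y_\sharp(e)^*=(h_\sharp^+-e)^{-\nf{1}{2}}V_-^{\nf{1}{2}}$ holds on $\dom(V_-^{\nf{1}{2}})$ and extends by boundedness to $L^2$; hence $b\psi\in\form(h_\sharp^+)$ for every $\psi\in L^2$.

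For the direction $b\colon\sB_\sharp(e)\to\sF_\sharp(e)$, I fix $\psi\in\sB_\sharp(e)$ and set $\phi:=b\psi$. A short computation yields
\begin{equation*}
V_-^{\nf{1}{2}}\phi
= V_-^{\nf{1}{2}}(h_\sharp^+-e)^{-\nf{1}{2}}Y_\sharp(e)^*\psi
= Y_\sharp(e)Y_\sharp(e)^*\psi
= \Bir{\sharp}{e}{V}\,\psi
= \psi .
\end{equation*}
To promote this to $\phi\in\dom(h_\sharp)$ with $h_\sharp\phi=e\phi$, I appeal to the representation theorem: using the elementary identity $Y_\sharp(e)(h_\sharp^+-e)^{\nf{1}{2}}\chi=V_-^{\nf{1}{2}}\chi$ for $\chi\in\dom(h_\sharp^+)$, together with $(h_\sharp^+-e)^{\nf{1}{2}}\phi=Y_\sharp(e)^*\psi$ and $V_-^{\nf{1}{2}}\phi=\psi$, I obtain
\begin{equation*}
\sps{(h_\sharp^+-e)^{\nf{1}{2}}\chi}{(h_\sharp^+-e)^{\nf{1}{2}}\phi}-\sps{V_-^{\nf{1}{2}}\chi}{V_-^{\nf{1}{2}}\phi}
=\sps{V_-^{\nf{1}{2}}\chi}{\psi}-\sps{V_-^{\nf{1}{2}}\chi}{\psi}=0,
\end{equation*}
first for $\chi\in\dom(h_\sharp^+)$ and, by density, for every $\chi\in\form(h_\sharp)$. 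The form of $h_\sharp-e$ therefore vanishes at $(\chi,\phi)$ for every test vector, so $\phi\in\dom(h_\sharp)$ and $h_\sharp\phi=e\phi$. Conversely, for $\phi\in\sF_\sharp(e)$ the form eigenvalue identity applied to $\chi=(h_\sharp^+-e)^{-\nf{1}{2}}\xi$, $\xi\in L^2$, gives $Y_\sharp(e)^*V_-^{\nf{1}{2}}\phi=(h_\sharp^+-e)^{\nf{1}{2}}\phi$; setting $\psi:=V_-^{\nf{1}{2}}\phi$ then yields $b\psi=\phi$ and $\Bir{\sharp}{e}{V}\,\psi=Y_\sharp(e)(h_\sharp^+-e)^{\nf{1}{2}}\phi=V_-^{\nf{1}{2}}\phi=\psi$, so $\psi\in\sB_\sharp(e)$ and \eqref{for-binv} holds. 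Injectivity of $b$ is immediate from the identity $V_-^{\nf{1}{2}}b\psi=\psi$ established in the first direction.

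The norm bound follows at once from $h_\sharp^+\ge 0$ and $e<0$:
\begin{equation*}
\|b\psi\|^2=\sps{Y_\sharp(e)^*\psi}{(h_\sharp^+-e)^{-1}Y_\sharp(e)^*\psi}\le|e|^{-1}\|Y_\sharp(e)^*\psi\|^2=|e|^{-1}\sps{\psi}{\Bir{\sharp}{e}{V}\,\psi}=|e|^{-1}\|\psi\|^2.
\end{equation*}
I expect the only genuine subtlety to be the careful bookkeeping of operator versus form domains: one cannot assume $\psi\in\dom(V_-^{\nf{1}{2}})$ for $\psi\in\sB_\sharp(e)$, so all manipulations involving $V_-^{\nf{1}{2}}$ must be read on the form side, i.e.\ with $V_-^{\nf{1}{2}}$ acting on vectors in $\form(h_\sharp^+)$, and continuity/density arguments used to extend the key identities from $\dom(h_\sharp^+)$ or $\dom(V_-^{\nf{1}{2}})$ to all of $L^2$. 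Once this is in place the Birman-Schwinger identification is purely algebraic.
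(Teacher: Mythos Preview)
Your proof is correct and follows essentially the same approach as the paper's: both arguments verify the two directions of the bijection via the form identity $\sps{(h_\sharp^+-e)^{\nf{1}{2}}\chi}{(h_\sharp^+-e)^{\nf{1}{2}}\phi}=\sps{V_-^{\nf{1}{2}}\chi}{V_-^{\nf{1}{2}}\phi}$ and derive the norm bound from $\|Y_\sharp(e)^*\psi\|^2=\sps{\psi}{\Bir{\sharp}{e}{V}\,\psi}=\|\psi\|^2$. The only difference is organizational: you treat the direction $\sB_\sharp(e)\to\sF_\sharp(e)$ first and obtain injectivity directly from the identity $V_-^{\nf{1}{2}}b\psi=\psi$, whereas the paper starts from $\sF_\sharp(e)$ and argues injectivity in each direction by contradiction.
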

%
%
\begin{proof}
Assume that $e<0$ is an eigenvalue of $h_\sharp$ and
$\phi$ is a corresponding normalized eigenfunction.
For $\eta\in \form(h_\sharp)$, we then get
\begin{equation}\label{harry}
\sps{(h_\sharp^+-e)^{\nf{1}{2}}\,\eta}{(h_\sharp^+-e)^{\nf{1}{2}}\,\phi}
=\sps{V_-^{\nf{1}{2}}\,\eta}{V_-^{\nf{1}{2}}\,\phi}.
\end{equation}
Now, let $\eta:=(h_\sharp^+-e)^{-\nf{1}{2}}\,Y_\sharp(e)^*\,\eta'$ with 
some arbitrary $\eta'\in \dom(V_-^{\nf{1}{2}})$ 
and set $\psi:= V_-^{\nf{1}{2}}\,\phi\in L^{2}(\R^3)$. 
The condition $\eta'\in \dom(V_-^{\nf{1}{2}})$ 
ensures that $Y_\sharp(e)^*\,\eta'=(h_\sharp^+-e)^{-\nf{1}{2}}\,V_-^{\nf{1}{2}}\eta'$,
whence the LHS of \eqref{harry} becomes
$\sps{V_-^{\nf{1}{2}}\eta'}{\phi}=\sps{\eta'}{\psi}$.
We thus obtain
$\sps{\eta'}{\psi}=\sps{Y_\sharp(e)\,Y_\sharp(e)^*\,\eta'}{\psi}$ 
and conclude that
$\Bir{\sharp}{e}{V}\,\psi=\psi$. Suppose $\psi=0$. Then 
$0>e=\sps{\phi}{h_\sharp^+\,\phi}\ge 0$ by \eqref{harry} with $\eta=\phi$; a contradiction!

Conversely, assume that $\psi\in\sB_\sharp(e)$, $\psi\not=0$. Defining
$\phi:=b\,\psi\in \form(h_\sharp)$ as in \eqref{for-b} 
we obtain, for all $\eta\in \form(h_\sharp)$,
\begin{align*}
\sps{\eta}{(h_\sharp-e) \phi}
&= \sps{(h_\sharp^+-e)^{\nf{1}{2}}\,\eta}{Y_\sharp(e)^*\,\psi}
-\sps{V_-^{\nf{1}{2}}\,\eta}{V_-^{\nf{1}{2}}\,\phi}
\\
&= \sps{V_-^{\nf{1}{2}}\,\eta}{\psi} - \sps{V_-^{\nf{1}{2}}\,\eta}{V_-^{\nf{1}{2}}\,\phi}
\\
&=\sps{V_-^{\nf{1}{2}}\,\eta}{\psi} - \sps{V_-^{\nf{1}{2}}\,\eta}{Y_\sharp(e)\,Y_\sharp(e)^*\,\psi}=0\,.
\end{align*}
We deduce that $\phi\in \dom(h_\sharp)$ and $h_\sharp\,\phi=e\,\phi$.
Suppose $\phi=0$. Then $0=V_-^{\nf{1}{2}}\phi=\Bir{\sharp}{e}{V}\,\psi=\psi$,
which yields a contradiction.

Finally, we have
$\|Y_\sharp(e)^*\,\psi\|^2=\sps{\psi}{\Bir{\sharp}{e}\,\psi}=\|\psi\|^2$
and, hence,
$\|b\,\psi\|\le|e|^{-\nf{1}{2}}\|Y_\sharp(e)^*\,\psi\|=\|\psi\|$, 
for all $\psi\in\sB_\sharp(e)$.
\end{proof}


\subsection{The singular Birman-Schwinger kernel}\label{ssec-BS-sing}

\noindent
In this subsection we study the limit $\tslim{e\uparrow 0}\Bir{\sharp}{e}{V}$.
To this end we restrict ourselves to potentials
$V_\pm\in L^1_\loc(\R^3)$ with the negative part satisfying
$V_-\in L^{\nf{3}{2}}(\R^3)$ in the non-relativistic case and
$V_-\in L^{\nf{3}{2}}\cap L^{3}(\R^3)$
in the semi-relativistic case.

Let $\gamma\in\{\nf{1}{2},1\}$.
Since $ |\cdot|^{-\gamma}\in L_w^{\nf{3}{\gamma}}(\R^3)$,
we know that the closure of the densely defined operator
$|\hat{\p}|^{-{\gamma}}\,V_-^{\nf{1}{2}}$ 
is compact, if $V_-^{\nf{1}{2}}\in L^{\nf{3}{\gamma}}(\R^3)$; 
see \cite{Cwikel1977}.
In particular, the closure of 
$Z_\nr:=\That{\nr}^{-\nf{1}{2}}\,V_-^{\nf{1}{2}}$
is compact.
Moreover,
$\That{\sr}^{-\nf{1}{2}}=f_<(\hat{\p})\,|\hat{\p}|^{-1}+f_>(\hat{\p})\,|\hat{\p}|^{-\nf{1}{2}}$
where $f_<$ is a bounded function supported in $\{|\p|\le1\}$
and $f_>$ is a bounded function supported in $\{|\p|\ge1\}$.
Therefore, the closure of 
$Z_\sr:=\That{\sr}^{-\nf{1}{2}}\,V_-^{\nf{1}{2}}$
is compact, too.
In particular, it follows that $V_-^{\nf{1}{2}}$ is relatively
compact with respect to $\That{\sharp}^{\nf{1}{2}}$ and, 
consequently,
$V_-$ is infinitesimally form-bounded with respect to $\That{\sharp}$.
(In the NR case the latter assertion also follows from the fact that
$V_-$ is a Rollnik potential.)
Therefore, the conclusions of Lemma~\ref{LemBirSchwing} are applicable
in what follows.

On account of $\That{\sharp}\le h_\sharp^+$ and the operator monotonicity
of the inversion we have
$\|(h_\sharp^+-e)^{-\nf{1}{2}}\,\That{\sharp}^{\nf{1}{2}}\psi\|\le\|\psi\|$,
for all $\psi\in\dom(\That{\sharp}^{\nf{1}{2}})$ and $e<0$.
Using the monotone convergence theorem in a spectral representation of
$h_\sharp^+$ we infer that
$\Ran(\That{\sharp}^{\nf{1}{2}})\subset\dom((h^+_\sharp)^{-\nf{1}{2}})$
and $\|(h^+_\sharp)^{-\nf{1}{2}}\,\That{\sharp}^{\nf{1}{2}}\|\le1$.
Therefore, the densely defined operators
$W_\sharp:=(h^+_\sharp)^{-\nf{1}{2}}\,\That{\sharp}^{\nf{1}{2}}$
and $X_\sharp:=(h^+_\sharp)^{-\nf{1}{2}}V_-^{\nf{1}{2}}$
have bounded extensions to the whole Hilbert space
and
$\ol{X}_\sharp=\ol{W}_\sharp\,\ol{Z}_\sharp$ is compact.
Furthermore, for $e<0$, it is straightforward to verify that
$Y_\sharp(e)^*=(h_\sharp^+)^{\nf{1}{2}}(h_\sharp^+-e)^{-\nf{1}{2}}\,\ol{X}_\sharp$.
Hence, $Y_\sharp(e)^*$ and
$Y_\sharp(e)=Y_\sharp(e)^{**}=X_\sharp^*(h_\sharp^+)^{\nf{1}{2}}(h_\sharp^+-e)^{-\nf{1}{2}}$ 
are compact and, 
on account of the spectral calculus and $\Ker(h_\sharp^+)=\{0\}$,
their
strong limits,
$$
Y_\sharp(0):=\slim{e\uparrow0}\,Y_\sharp(e)=X_\sharp^*\,,\qquad
\slim{e\uparrow0}\,Y_\sharp(e)^*=\ol{X}_\sharp=Y_\sharp(0)^*\,,
$$
exist and are compact. By virtue of the uniform boundedness principle
we may now define the singular Birman-Schwinger operator
\begin{equation}\label{MonLim}
\Bir{\sharp}{0}{V}
:= Y_\sharp(0)\,Y_\sharp(0)^*= \slim{e\uparrow 0}\,\Bir{\sharp}{e}{V}\,.
\end{equation}
The next theorem generalizes some results of 
\cite{SorStock} to a broader class of potentials.

\begin{theorem}\label{zero-resonance}
Let $\sharp\in\{\nr,\sr\}$, $0\le V_+\in L_\loc^1(\RR^3)$, and 
$0\le V_-\in L^{\nf{3}{2}}(\RR^3)$. If $\sharp=\sr$, then assume in
addition that $V_-\in L^3(\RR^3)$.
Set $V_\lambda:=V_+-\lambda\,V_-$ and assume that, for every
$\lambda>1$, there is an eigenvalue $e_\lambda<0$ of
$h_\sharp(V_\lambda)$ such that $e_\lambda\to0$, $\lambda\downarrow1$.
Let $\phi_\lambda$ be a corresponding eigenfunction 
such that the eigenfunction
$\psi_\lambda=(\lambda\,V_-)^{\nf{1}{2}}\,\phi_\lambda$
of the Birman-Schwinger operator has norm $1$.
Then there is a sequence $\{\lambda_j\}_j$, $\lambda_j\downarrow1$,
$j\uparrow\infty$,
such that the limits
$\psi:=\lim_j\psi_{\lambda_j}$,
$\rho:=\lim_j(h_\sharp^+)^{\nf{1}{2}}\phi_{\lambda_j}$,
and $\tilde{\rho}:=\lim_j\That{\sharp}^{\nf{1}{2}}\phi_{\lambda_j}$
exist and 
$$
\psi=\Bir{\sharp}{0}{V}\,\psi\,,\quad
\rho=Y_\sharp(0)^*Y_\sharp(0)\,\rho\not=0\,,\quad\tilde{\rho}\not=0\,.
$$
Moreover, if $\sharp=\nr$, then the limit
$\phi:=\lim_j\phi_{\lambda_j}$ exists in $L^6(\R^3)$.
If $\sharp=\sr$, then
$\phi_<:=\lim_j\id_{|\hat{\p}|\le1}\,\phi_{\lambda_j}$
exists in $L^6(\R^3)$,
$\phi_>:=\lim_j\id_{|\hat{\p}|>1}\,\phi_{\lambda_j}$
exists in $L^3(\R^3)$, and we set $\phi:=\phi_<+\phi_>$.
In both cases $\phi$
is a zero resonance, i.e. a weak solution of 
$h_\sharp\,\phi\equiv h_\sharp(V_+-V_-)\,\phi=0$ in the sense that
\begin{equation}\label{weak-sol}
\int_{\RR^3}{\phi}\,(\That{\sharp}\,\eta+V_+\,\eta-V_-\eta)=0\,,
\qquad\eta\in C_0^\infty(\RR^3)\,.
\end{equation}
\end{theorem}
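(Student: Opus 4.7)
The plan is to combine the Birman--Schwinger reformulation in Lemma~\ref{LemBirSchwing} with a sharpening of \eqref{MonLim} from strong to \emph{norm} convergence, and then to deduce the claimed strong limits from the eigenvalue equation $\lambda\,\Bir{\sharp}{e_\lambda}{V}\,\psi_\lambda=\psi_\lambda$, $\|\psi_\lambda\|=1$, which follows from Lemma~\ref{LemBirSchwing} applied to $V_\lambda$ together with the identity $\Bir{\sharp}{e_\lambda}{V_\lambda}=\lambda\,\Bir{\sharp}{e_\lambda}{V}$. The key observation, based on the factorization $Y_\sharp(e)^*=f_e(h_\sharp^+)\,\ol{X}_\sharp$ with $f_e(t):=t^{\nf{1}{2}}(t-e)^{-\nf{1}{2}}$ already exploited in the excerpt, is that the contractions $f_e(h_\sharp^+)$ converge strongly to the identity by dominated convergence in the spectral calculus, while $\ol{X}_\sharp$ is compact. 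Since the composition of a uniformly bounded, strongly convergent sequence of operators with a compact one converges in operator norm, this yields $Y_\sharp(e)^*\to Y_\sharp(0)^*$ in norm as $e\uparrow 0$, and hence $\Bir{\sharp}{e}{V}\to\Bir{\sharp}{0}{V}$ in norm.

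Armed with this, I would extract a weakly convergent subsequence $\psi_{\lambda_j}\rightharpoonup\psi$ and use the splitting
\[
\psi_{\lambda_j}-\Bir{\sharp}{0}{V}\,\psi
=\bigl[\lambda_j\Bir{\sharp}{e_{\lambda_j}}{V}-\Bir{\sharp}{0}{V}\bigr]\psi_{\lambda_j}
+\Bir{\sharp}{0}{V}(\psi_{\lambda_j}-\psi)\,,
\]
in which the first bracket vanishes in norm by the previous step and the second by compactness of $\Bir{\sharp}{0}{V}$. This forces $\psi_{\lambda_j}\to\psi$ strongly, $\Bir{\sharp}{0}{V}\psi=\psi$, and $\|\psi\|=1$. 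Inserting $\phi_\lambda=\sqrt{\lambda}\,(h_\sharp^+-e_\lambda)^{-1}V_-^{\nf{1}{2}}\psi_\lambda$ one then obtains
\[
\rho_\lambda=(h_\sharp^+)^{\nf{1}{2}}\phi_\lambda
=\sqrt{\lambda}\,f_{e_\lambda}(h_\sharp^+)\,Y_\sharp(e_\lambda)^*\psi_\lambda
\;\longrightarrow\;Y_\sharp(0)^*\psi=:\rho
\]
strongly, and $\rho\neq 0$, since otherwise $\psi=Y_\sharp(0)\rho=0$; the identity $Y_\sharp(0)^*Y_\sharp(0)\rho=\rho$ follows by applying $Y_\sharp(0)^*$ to $\Bir{\sharp}{0}{V}\psi=\psi$. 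Next $\tilde\rho_\lambda=\That{\sharp}^{\nf{1}{2}}\phi_\lambda=W_\sharp^*\rho_\lambda\to W_\sharp^*\rho=:\tilde\rho$ by boundedness of $W_\sharp$, and the complementary identity $\psi_\lambda/\sqrt{\lambda}=Z_\sharp^*\tilde\rho_\lambda$ with $Z_\sharp^*$ compact shows $\tilde\rho\neq 0$, because $\tilde\rho=0$ would give $\psi=Z_\sharp^*\tilde\rho=0$.

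It remains to recover the zero-resonance $\phi$ and establish \eqref{weak-sol}. From $\tilde\rho_\lambda\to\tilde\rho$ in $L^2$ and the Fourier-space asymptotics of $\That{\sharp}^{\nf{1}{2}}$, convergence of $\phi_\lambda$ will follow by Sobolev embeddings: in the non-relativistic case $\That{\nr}^{\nf{1}{2}}=(-\Delta)^{\nf{1}{2}}/\sqrt{2}$ and the Riesz-potential inequality $\dot{H}^{1}(\RR^3)\hookrightarrow L^6(\RR^3)$ give $\phi_\lambda\to\phi$ in $L^6$; in the semi-relativistic case I would split $\phi_\lambda=\id_{|\hat{\p}|\le 1}\phi_\lambda+\id_{|\hat{\p}|>1}\phi_\lambda$ and exploit $\That{\sr}^{\nf{1}{2}}\sim|\hat{\p}|/\sqrt{2}$ on low frequencies and $\That{\sr}^{\nf{1}{2}}\sim|\hat{\p}|^{\nf{1}{2}}$ on high frequencies to produce $\phi_<\in L^6$ via $\dot{H}^{1}\hookrightarrow L^6$ and $\phi_>\in L^3$ via $\dot{H}^{\nf{1}{2}}\hookrightarrow L^3$. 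Passing to the limit $\lambda\downarrow 1$ in
\[
\int_{\RR^3}\phi_\lambda\,(\That{\sharp}\eta+V_+\eta-\lambda V_-\eta)
=e_\lambda\int_{\RR^3}\phi_\lambda\,\eta\,,
\qquad\eta\in C_0^\infty(\RR^3)\,,
\]
and checking by H\"older's inequality that $\That{\sharp}\eta$, $V_+\eta$, and $V_-\eta$ lie in $L^{\nf{6}{5}}\cap L^{\nf{3}{2}}$ under the hypotheses on $V_\pm$ and the compact support of $\eta$ (with the right-hand side killed by $e_\lambda\to 0$) yields \eqref{weak-sol}. I expect the main obstacle to be the upgrade to norm convergence of the Birman--Schwinger operators; once this is secured via the compactness of $\ol{X}_\sharp$ and the strong convergence of $f_e(h_\sharp^+)\to I$, the remainder of the argument reduces to functional-calculus identities and standard Sobolev bookkeeping.
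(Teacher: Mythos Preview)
Your proof is correct and takes a somewhat cleaner route than the paper's. The decisive innovation is your upgrade of the convergence $\Bir{\sharp}{e}{V}\to\Bir{\sharp}{0}{V}$ from strong (as recorded in \eqref{MonLim}) to \emph{operator-norm} convergence, via the factorization $Y_\sharp(e)^*=f_e(h_\sharp^+)\,\ol{X}_\sharp$ with $\ol{X}_\sharp$ compact and $f_e(h_\sharp^+)\to I$ strongly and uniformly bounded. This norm convergence is exactly what makes the first bracket in your displayed splitting vanish and yields strong convergence $\psi_{\lambda_j}\to\psi$, $\|\psi\|=1$, in one stroke. The paper instead works only with strong operator convergence: it first extracts \emph{weak} limits of both $\psi_{\lambda_j}$ and $(h_\sharp^+)^{\nf12}\phi_{\lambda_j}$ separately, identifies $\psi=Y_\sharp(0)\,\rho$ from the form pairing, and then bootstraps to strong convergence in two passes by invoking compactness of $Y_\sharp(0)$ and of $Y_\sharp(0)^*$ one after the other. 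Your approach trades this back-and-forth for a single standard lemma on compact operators. For $\tilde\rho\neq0$ you use the identity $\psi_\lambda/\sqrt\lambda=Z_\sharp^*\tilde\rho_\lambda$, which is more elementary than the paper's appeal to bijectivity of $W_\sharp^*$; the downstream treatment (Sobolev embeddings, passage to the limit in the weak equation) coincides with the paper's.

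Two small points worth making explicit. First, when you say the right-hand side $e_\lambda\int\phi_\lambda\,\eta$ is ``killed by $e_\lambda\to0$'', you are tacitly using $\|\phi_\lambda\|_2\le|e_\lambda|^{-\nf12}$ from Lemma~\ref{LemBirSchwing}, since $\|\phi_\lambda\|_2$ itself may blow up; with that bound the term is $\cO(|e_\lambda|^{\nf12})$. Second, the claim $V_+\eta\in L^{\nf65}\cap L^{\nf32}$ requires $V_+\in L^{\nf32}_\loc$, not merely $L^1_\loc$ as in the hypothesis; the paper makes the same tacit assumption in its final sentence, so this is a shared caveat rather than a defect of your argument.
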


\begin{proof}
By Lemma~\ref{LemBirSchwing}, 
$\|\phi_\lambda\|\le|e_\lambda|^{-\nf{1}{2}}\,\|\psi_\lambda\|
=|e_\lambda|^{-\nf{1}{2}}$
which implies,
for all $\eta\in\form(h_\sharp^+)\subset\dom(V_-^{\nf{1}{2}})$,
\begin{align}\label{cassandra}
\sps{h_{\sharp}\,\eta}{\phi_{\lambda}}
&=e_{\lambda}\,\sps{\eta}{\phi_{\lambda}}
+(\lambda-1)\,\sps{V_-^{\nf{1}{2}}\,\eta}{\psi_{\lambda}}/\lambda^{\nf{1}{2}}
\xrightarrow{\;\lambda\downarrow1\;}0\,,
\\\nonumber
\sps{h_{\sharp}^+\,\phi_{\lambda}}{\phi_{\lambda}}
&=e_{\lambda}\,\|\phi_{\lambda}\|^2+\|\psi_\lambda\|^2/\lambda\le1\,,
\quad\lambda>1\,.
\end{align}
Therefore, we find a sequence, $\{\lambda_j\}_j$, 
$\lambda_j\downarrow1$, such that the
weak limits $\psi:=\twlim{j}\psi_{\lambda_j}$ 
and $\rho:=\twlim{j}(h_\sharp^+)^{\nf{1}{2}}\,\phi_{\lambda_j}$
exist. 
We have $\psi=Y_\sharp(0)\,\rho$ because
\begin{align*}
\sps{\eta}{\psi}
&=\lim_j\sps{\eta}{\psi_{\lambda_j}}
=\lim_j\sps{V_-^{\nf{1}{2}}\eta}{\phi_{\lambda_j}}
=\sps{X_\sharp\,\eta}{\rho}\,,\quad \eta\in\dom(X_\sharp)\,.
\end{align*}
In particular,
$\rho=0$ implies $\psi=0$.
Furthermore,
since $Y_\sharp(0)$ is compact we know that
$\{Y_\sharp(0)\,(h_\sharp^+)^{\nf{1}{2}}\phi_{\lambda_j}\}_j$
contains a strongly convergent subsequence.
As it converges weakly to $Y_\sharp(0)\,\rho=\psi$
we may assume that $Y_\sharp(0)\,(h_\sharp^+)^{\nf{1}{2}}\phi_{\lambda_j}\to\psi$
strongly, after passing to a subsequence, if necessary.
By the Birman-Schwinger principle of Lemma~\ref{LemBirSchwing}
we know, however, that
$$
\phi_{\lambda_j}=(h_\sharp^+-e_{\lambda_j})^{-\nf{1}{2}}\,Y_\sharp(e_{\lambda_j})^*\,\psi_{\lambda_j}
=(h_\sharp^+)^{\nf{1}{2}}(h_\sharp^+-e_{\lambda_j})^{-1}\,Y_\sharp(0)^*\,\psi_{\lambda_j}\,.
$$
Since $Y_\sharp(0)^*$ is compact we may further assume
that $Y_\sharp(0)^*\,\psi_{\lambda_j}\to Y_\sharp(0)^*\,\psi$ strongly,
whence
$$
(h_\sharp^+)^{\nf{1}{2}}\phi_{\lambda_j}
=h_\sharp^+\,(h_\sharp^+-e_{\lambda_j})^{-1}\,Y_\sharp(0)^*\,\psi_{\lambda_j}\to
Y_\sharp(0)^*\,\psi\,.
$$
Hence, 
$\rho=\lim_j(h_\sharp^+)^{\nf{1}{2}}\phi_{\lambda_j}=Y_\sharp(0)^*\,\psi=Y_\sharp(0)^*Y_\sharp(0)\,\rho$ 
converges strongly and so does
$\psi=\lim_jY_\sharp(0)\,(h_\sharp^+)^{\nf{1}{2}}\,\phi_{\lambda_j}=Y_\sharp(0)\,Y_\sharp(0)^*\,\psi$.
Since also $Y_\sharp(0)\,(h_\sharp^+)^{\nf{1}{2}}\eta=
X_\sharp^*(h_\sharp^+)^{\nf{1}{2}}\eta=V_-^{\nf{1}{2}}\eta$,
for 
$\eta\in\form(h_\sharp^+)\subset\dom(V_-^{\nf{1}{2}})$,
we have
$Y_\sharp(0)\,(h_\sharp^+)^{\nf{1}{2}}\phi_{\lambda_j}=\lambda_j^{-\nf{1}{2}}\,\psi_{\lambda_j}$.
It follows that $\psi_{\lambda_j}\to\psi$ strongly, $\|\psi\|=1$, 
and $\rho\not=0$.
Furthermore, it follows that
$\That{\sharp}^{\nf{1}{2}}\,\phi_{\lambda_j}=W_\sharp^*\,(h_\sharp^+)^{\nf{1}{2}}\,\phi_{\lambda_j}\to
W_\sharp^*\rho$ strongly, where $W_\sharp^*\rho\not=0$
since $W_\sharp^*$ is bijective.
On account of Sobolev's inequalities for (half-)derivatives
this implies the existence of the limits $\phi$, $\phi_<$, $\phi_>$
as in the statement. Since, for every $\eta\in C_0^\infty(\R^3)$,
we have $V_\pm\eta\in L^{6/5}\cap L^{3/2}(\R^3)$
we finally see that \eqref{weak-sol} follows from \eqref{cassandra}.
\end{proof}


\bigskip

{\footnotesize

\noindent
{\sc Martin K\"onenberg}:
Fakult\"at f\"ur Physik,
Universit\"at Wien,
Boltzmanngasse 5,
1090 Vienna, Austria.
{\tt martin.koenenberg@univie.ac.at}
 
\smallskip

\noindent
{\sc Oliver Matte}: 
Mathematisches Institut,
Ludwig-Maximilians-Universit\"at,
Theresienstra{\ss}e 39,
80333 M\"unchen, Germany.\\
{\em Present address:}
Institut for Matematik,
{\AA}rhus Universitet,
Ny Munkegade 118,
DK-8000 {\AA}rhus, Denmark.
{\tt matte@math.lmu.de}
}
\end{document}